\let\csname equation*\endcsname\relax
\let\csname endequation*\endcsname\relax
\newtheorem{theorem}{Theorem}[section]
\newtheorem{lemma}[theorem]{Lemma}
\newtheorem{corollary}[theorem]{Corollary}%[section]
\newtheorem{remark}[theorem]{Remark}
\newcommand{\ppf}[2]{\frac{\partial #1}{\partial #2}}
\title{The role of the time delay in the reflection and transmission of ultrashort 
electromagnetic pulses on a system of parallel current sheets} 
\author{
  M\'onika~Polner\thanks{Bolyai Institute, University of Szeged, H-6720 Szeged, Aradi v\'ertan\'uk tere 1, Hungary and ELI-ALPS, ELI-HU Ltd, Dugonics t\' er 13, Szeged 6720, Hungary.
    (\href{mailto:polner@math.u-szeged.hu}{polner@math.u-szeged.hu})}
  \, and 
  S\'andor Varr\'o\thanks{Institute for Solid State Physics and Optics, Wigner Research Center for Physics of the Hungarian Academy of Sciences, Budapest, Hungary and ELI-ALPS, ELI-HU Ltd, Dugonics t\' er 13, Szeged 6720, Hungary.}
  \, and
Anett V\"or\"os-Kiss\thanks{Bolyai Institute, University of Szeged, H-6720 Szeged, Aradi v\'ertan\'uk tere 1, Hungary.}
}
\begin{document}

%----------------------------------------------------------------------------------------
%	HEADERS
%----------------------------------------------------------------------------------------

%----------------------------------------------------------------------------------------
%	TABLE OF CONTENTS & LISTS OF FIGURES AND TABLES
%----------------------------------------------------------------------------------------

\maketitle % Print the title/author/date block

%\setcounter{tocdepth}{2} % Set the depth of the table of contents to show sections and subsections only

%\tableofcontents % Print the table of contents

%\listoffigures % Print the list of figures

%\listoftables % Print the list of tables

%----------------------------------------------------------------------------------------
%	ABSTRACT
%----------------------------------------------------------------------------------------

\begin{abstract} % This section will not appear in the table of contents due to the star (\section*)
The reflection and transmission of a few-cycle laser pulse impinging on two parallel thin metal layers have been analyzed. 
The two layers, with a thickness much smaller than the skin depth of the incoming radiation field, are represented by current sheets embedded in three dielectrics, all with different index of refraction. 
The dynamics of the surface currents and the scattered radiation field are described by the coupled system of Maxwell--Lorentz equations. When applying the plane wave modeling assumptions, these reduce to a 
hybrid system of two delay differential equations for the electron motion in the layers and a 
recurrence relation for the scattered field. The solution is given as the limit of a singularly perturbed system and the effects of the time delay
on the dynamics is analyzed.
\end{abstract}

%\begin{keyword}
\vspace{2pc}
\noindent{\it Keywords}: Scattering, Maxwell--Lorentz equations, radiation reaction, surface current, 
delay differential equations, difference equations, singularly perturbed system 
%% MSC codes here, in the form: \MSC code \sep code
%% or \MSC[2008] code \sep code (2000 is the default)

%----------------------------------------------------------------------------------------
%	SYSTEM OF EQUATIONS
%----------------------------------------------------------------------------------------
\section{Introduction}
The scattering of ultrashort electromagnetic pulses in a system of two (or more) parallel current sheets is physically significant and the solution of the governing system of equations is also a nontrivial mathematical challenge. 
This paper gives a theoretical description of the reflection and transmission of a few-cycle laser pulse impinging on two thin metal layers, represented by surface currents. The mathematical analysis of this problem in the time-domain is based on the theory of delay differential equations \cite{Hale,DiekmannRFDE}. The first description of such a system was given by Sommerfeld \cite{Sommerfeld}, where the temporal distortion of x-ray pulses impinging perpendicularly on one surface in vacuum was analyzed.
This was then subsequently generalized in \cite{Varro2004} by allowing oblique incidence of the incoming radiation field and embedding the surface current in two semi-infinite dielectrics with two different indices of refraction. 
This general system %% the general one? if not, delete the insertion
 has been investigated from several physical points of views \cite{Varro2007} and the relativistic dynamics of the surface current has also been discussed  \cite{Varro2007carrier}.

The model described in this paper is an extension of the one-layer scattering problem applied to more layers, with an analysis based on classical electrodynamics and non-relativistic mechanics.       

Two parallel metal layers, with thickness much smaller than the skin depth of the radiation field are considered and represented by current sheets, embedded in three dielectrics, all with different index of refraction, 
see Figure~\ref{phen}. The target defined this  way can be imagined as a thin metal layer evaporated, for instance, 
on a glass substrate. The reflection and transmission of a few-cycle laser pulse impinging on the system of the two thin metal layers is studied. The dynamics of the surface currents and the complete radiation field are described by a coupled system of Maxwell equations and the equations of motion of the electrons which move in two parallel planes.
 The planar symmetry of the system (translation invariance along the interfaces) means that the spatial dependence of the Maxwell fields can be considerably simplified if the incoming and scattered fields are modeled by plane waves. 
 This corresponds to a one-dimensional propagation along the normals of the equal-phase planes and thus, the original partial differential equations reduce to ordinary differential equations with respect to the common retarded time. 
In this way, a hybrid system (HS) of equations is obtained that combines a system of delay differential 
equations (DDE) for the electron velocities, with a difference equation for the reflected wave stemming from 
the second surface. 
Compared to the previous studies on the one-layer problem, %% I don't understand this comparison? There is 
placing an additional metal layer between dielectrics induces time delays in the system. The sizes of the time delays depend on the distance between the two surface current sheets, the indices of refraction 
of the dielectrics they are embedded in and the angle of incidence of the impinging plane wave. The main result of this paper is that we can solve the HS for an arbitrary intensity, that is admissible in the linear approximation, and shape of electromagnetic radiation pulse. The solution is obtained as the limit of the solution of a singularly perturbed system and this is a new result. In the classical description of such scattering problems, the resulting model system is Fourier transformed and further analyzed in the frequency domain. The aim of this paper is to describe the temporal 
behavior of the field strength of the reflected and transmitted signals.

The most remarkable feature of this model is that a collective radiation-reaction term is automatically 
derived in the closed system of equations for the surface current. Damping terms appear naturally in the model, as has been first derived in the one-layer problem \cite{Varro2004} and are governed, besides the elementary charge and the electron mass, by the electron density. The appearance of these damping terms is a result of the back-action of the radiation field on the assembly of electrons, which derives from the boundary conditions, so they differ from friction-like forces.

The outline of this paper is as follows. Section~\ref{s:physical_model} presents the basic equations describing the model and in Section~\ref{s:solutionDDS}, the solution of the resulting HS is given as a limit of the solution of a singularly perturbed system.
 Time simulations are performed in Section~\ref{s:simulations} to illustrate the long time behavior of the solution, and a short discussion on the spectrum of the reflected wave is given. This illustrates how the intensity depends on the carrier-envelope (CE) phase difference of the incoming few-cycle pulse. 
%%%%%%%%%%%%%%%%%%%%%%%%%% 
%%%%%%%%%%%%%%%%%%%%%%%%%%
\section{The basic equations of the model}\label{s:physical_model}
The model derived in this section, is an extension of the one layer problem studied in \cite{Varro2004}, in the sense that the main construction steps of the mathematical model are the same. This
 results in a hybrid system which requires a careful analysis of its qualitative properties.

Consider the following geometrical setup in the $(x,y,z)$ coordinate system. 
The first dielectric, with index of refraction $n_1$,  fills the region $z>l_2/2$ in space -- region~1. 
In region 2 a thin metal layer of thickness $l_2$,  is placed perpendicular to the $z$-axis around the $z = 0$ position, occupying the space region defined by the relation $-l_2/2\leq z\leq l_2/2$. Region 3, $-h+l_4/2<z<-l_2/2$,  is assumed to be filled by the second  dielectric, with index of refraction $n_3$. 
The second metal layer, with thickness $l_4$,  is placed  perpendicular to the $z$-axis around $z=-h$,   and this fills region~4, defined by $-h-l_4/2<z<-h+l_4/2$.  
Finally, region 5  is the dielectric with index of refraction $n_5$ occupying the region $z<-h-l_4/2$. The plane of incidence is defined as  the $yz$-plane and the initial $k$-vector is assumed to make an angle $\theta_1$ with the $z$-axis. This is shown in Figure~\ref{phen}.
\begin{figure}
\centering
\includegraphics[width=.4\columnwidth]{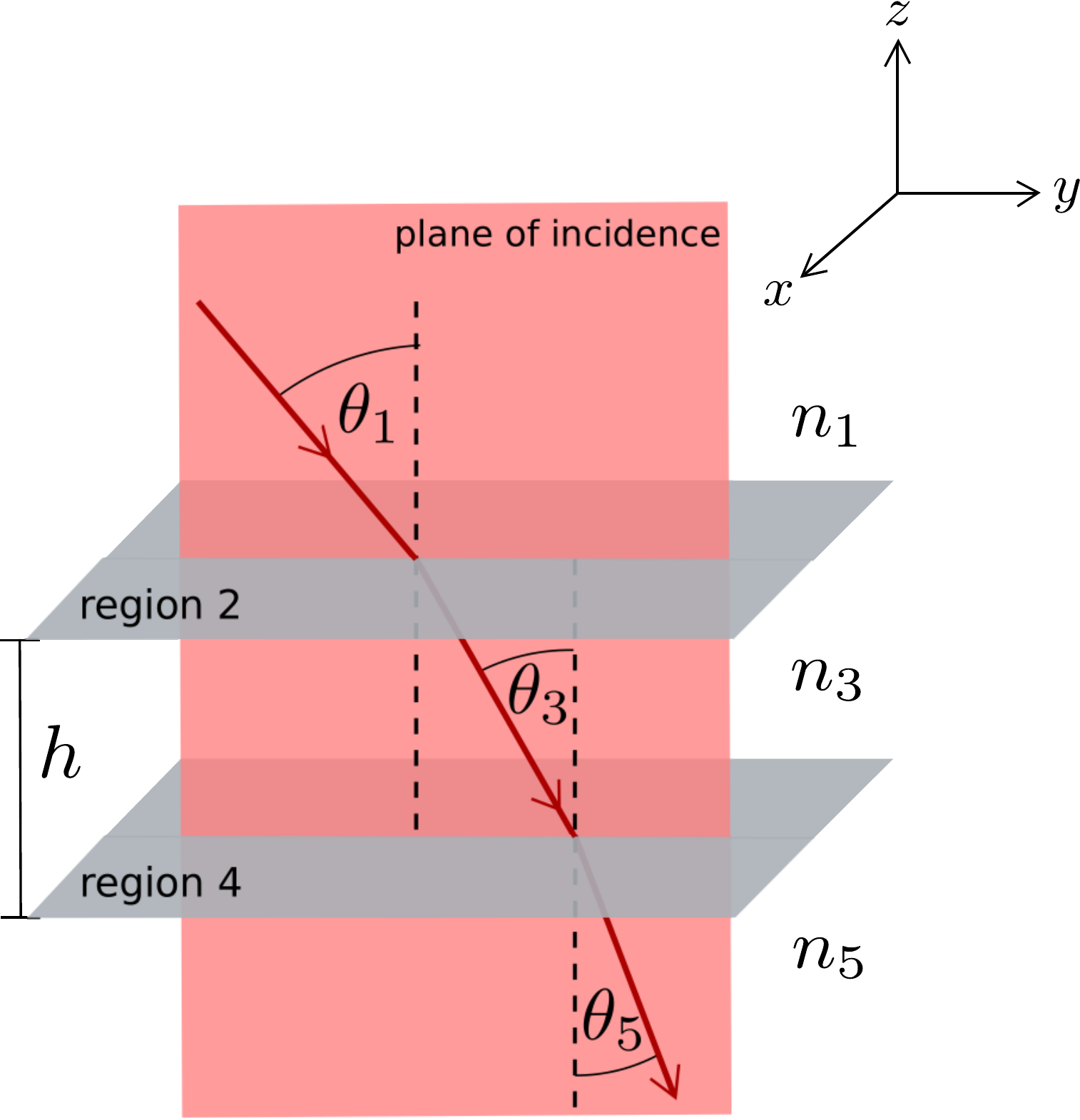}
\caption[]{The schematic view of the physical problem.}\label{phen} 
\end{figure}

In regions 1, 3 and 5, the field equations for a TM (p-polarized) wave, i.e., with the electric field and 
magnetic induction components $E=(0,E_y,E_z)$ and $B=(B_x,0,0)$, respectively, satisfy the Maxwell equations. This is, in cgs units, 
\begin{equation}
\partial_y{E_z}-\partial_z{E_y}=-\partial_0{B_x},\ 
\partial_z{B_x}=n^2\partial_0{E_y}+\mu\frac{4\pi}{c}J_y,\ 
-\partial_y{B_x}=n^2\partial_0{E_z}+\mu\frac{4\pi}{c}J_z\label{Bxyfield}
\end{equation}
and $\partial_x{E_z}=\partial_x{E_y}=0$. Here, $n=\sqrt{\mu \epsilon}$ is the index of refraction, 
with $\epsilon$ and $\mu$ the dielectric constant (with no dimension) and magnetic permeability (with no dimension), respectively, and $J$ is the electric current density. 
The following notations are used for the partial derivatives 
\[
\partial_0:=\frac{1}{c}\ppf{}{t},\quad \partial_x:=\ppf{}{x},\quad 
\partial_y:=\ppf{}{y},\quad \partial_z:=\ppf{}{z},
\]
with $c$ the speed of light in vacuum. 
The first step towards the model construction is to observe that in region 1 the $x$-component of the 
magnetic induction $B_{1x}$ satisfies the wave equation (no current, i.e., $J=0$)
\begin{equation}\label{wave_r1}
\left(\partial_y^2+\partial_z^2\right) B_{1x}=n_1^2\partial_0^2 B_{1x},
\end{equation}
where the subscript 1 refers to region 1 and $n_1$ is the refractive index here. The solution of 
\eqref{wave_r1} has the form
\begin{equation}
B_{1x}(r,t)=B_{1x}\Bigl(t-n_1\frac{r\cdot s}{c}\Bigr),
\end{equation}
where $r=(x,y,z)$ denotes the position vector and $s$ is the direction of wave propagation. In 
region 1$, B_{1x}$ is taken to be the superposition of the given incoming plane wave pulse $F$ and an 
unknown reflected plane wave $f_1$
\begin{equation}\label{wave_solution}
B_{1x}(y,z,t)=F\Bigl(t-n_1\frac{y\sin\theta_1-z\cos\theta_1}{c}\Bigr)-
f_1\Bigl(t-n_1\frac{y\sin\theta_1+z\cos\theta_1}{c}\Bigr).
\end{equation}
Here we used that $F$ propagates in the direction $(0,\sin\theta_1,-\cos\theta_1)$, with 
$\theta_1$ the angle of incidence, while the reflected $f_1$ wave propagates in the $(0,\sin\theta_1,\cos\theta_1)$ direction.  %% We used that needs a .... so that 

The components $E_{1y},E_{1z}$ of the electric field in region 1 can be written, using Maxwell's equations,  in terms  of $F$ and $f_1$, and the partial derivative of $B_{1x}$ in \eqref{wave_solution} can be calculated using the chain rule to obtain 
\begin{equation}
\partial_z{B_{1x}}= \frac{n_1\cos\theta_1}{c}\ppf{}{t}(F+f_1)=n_1\cos\theta_1\partial_0(F+f_1).
\end{equation}
The second equation in \eqref{Bxyfield} is  now equivalent with 
\[
n_1\cos\theta_1\partial_0(F+f_1)=n_1^2\partial_0 E_{1y}.
\] 
Rearranging, yields $\cos\theta_1(F+f_1)- n_1E_{1y}=K$, where $K$ is a constant. The electric field components are obtained when $K=0$ as 
\begin{equation}\label{Eyfield_r1}
E_{1y}(y,z,t)=\frac{\cos\theta_1}{n_1}\left(F\Bigl(t-n_1\frac{y\sin\theta_1-z\cos\theta_1}{c}\Bigr)
+f_1\Bigl(t-n_1\frac{y\sin\theta_1+z\cos\theta_1}{c}\Bigr)\right),
\end{equation}
and similarly
\begin{equation*}
E_{1z}(y,z,t)=\frac{\sin\theta_1}{n_1}\left(F\Bigl(t-n_1\frac{y\sin\theta_1-z\cos\theta_1}{c}\Bigr)
-f_1\Bigl(t-n_1\frac{y\sin\theta_1+z\cos\theta_1}{c}\Bigr)\right).
\end{equation*} 
The magnetic induction $B_{3x}$  in region 3 can be written as 
the superposition of the unknown refracted wave $g_3$ and the reflected $f_3$ wave stemming from surface 4
\begin{equation}\label{wave_solution_region3}
B_{3x}(y,z,t)=g_3\Bigl(t-n_3\frac{y\sin\theta_3-z\cos\theta_3}{c}\Bigr)-
f_3\Bigl(t-n_3\frac{y\sin\theta_3+z\cos\theta_3}{c}\Bigr),
\end{equation}
with $\theta_3$ the refraction angle. Similar to region 1, from Maxwell's equations and \eqref{wave_solution_region3}, 
the components $E_{3y},E_{3z}$ of the electric field in region 3 can be expressed in terms of $f_3$ and $g_3$ as
\begin{align}
E_{3y}(y,z,t)&=\frac{\cos\theta_3}{n_3}\left(g_3\Bigl(t-n_3\frac{y\sin\theta_3-z\cos\theta_3}{c}\Bigr)
+f_3\Bigl(t-n_3\frac{y\sin\theta_3+z\cos\theta_3}{c}\Bigr)\right),\label{Eyfield_r3}\\
E_{3z}(y,z,t)&=\frac{\sin\theta_3}{n_3}\left(g_3\Bigl(t-n_3\frac{y\sin\theta_3-z\cos\theta_3}{c}\Bigr)
-f_3\Bigl(t-n_3\frac{y\sin\theta_3+z\cos\theta_3}{c}\Bigr)\right).
\end{align} 
Finally, in region 5, due to the absence of any reflecting surface, we express $B_{5x}$ and $E_{5y},E_{5z}$ in terms of the unknown refracted wave $g_5$
\begin{eqnarray}%\label{wave_solution_region5}
B_{5x}(y,z,t)&=g_5\Bigl(t-n_5\frac{y\sin\theta_5-z\cos\theta_5}{c}\Bigr),\\
E_{5y}(y,z,t)&=\frac{\cos\theta_5}{n_5}g_5\Bigl(t-n_5\frac{y\sin\theta_5-z\cos\theta_5}{c}\Bigr),\label{Eyfield_r5}\\
E_{5z}(y,z,t)&=\frac{\sin\theta_5}{n_5}g_5\Bigl(t-n_5\frac{y\sin\theta_5-z\cos\theta_5}{c}\Bigr),
\end{eqnarray}
where $\theta_5$ is the refraction angle. 
Region 2 is the thin plain layer of thickness $l_2$. Maxwell's equations in this region yield
\begin{equation}
\partial_y{E_z}-\partial_z{E_y}=-\partial_0{B_x},\quad
\partial_z{B_x}=n_2^2\partial_0{E_y}+\frac{4\pi}{c}J_{2y}\label{Bxzfield_r2}.
\end{equation}
The boundary conditions for the field components are obtained by integrating both equations in \eqref{Bxzfield_r2} with respect to $z$ on the interval $[-l_2/2,l_2/2]$ and then 
taking the limit $l_2\to 0$, 
\begin{eqnarray}
&\left[E_{1y}-E_{3y}\right]\mid_{z=0}=0,\label{matchingE13}\\
&\left[B_{1x}-B_{3x}\right]\mid_{z=0}=\frac{4\pi}{c}\lim_{l_2\to 0}\int_{-l_2/2}^{l_2/2}J_{2y}dz=\frac{4\pi}{c}K_{2y},\label{Bfield_r2}
\end{eqnarray}
where $K_{2y}$ is the $y$-component of the surface current in layer 2. This means that the jump %% This doesn't sound write, can we brainstorm a better term
 in the electric field  components through the layers is zero and the jump in the magnetic  field components induces the surface current $K_{2y}$, which can further be expressed in terms of the local 
velocity of the electrons in the metal film
\begin{equation}
K_{2y}=e(\frac{d{\delta_y}_2}{dt})l_2 {n_e}_2.
\end{equation}
Here $e$ is the electron charge, ${n_{e}}_2$ is the density of electrons in the layer and 
${\delta_y}_2$ is the local displacement of the electrons in the $y$-direction. 
The right hand side of \eqref{Bfield_r2} can be written as
\begin{equation}
\frac{4\pi}{2c}K_{2y}=\frac{m}{e}\Gamma_2\frac{d{\delta_y}_2}{dt},
\end{equation}
where $m$ is the electron's mass and 
\begin{equation}
\Gamma_2=2\pi\frac{e^2}{m c}l_2{n_e}_2.
\end{equation}
Note that the parameter $\Gamma_2$ has dimension of frequency and its physical meaning will be 
a damping factor in the equation of motion of the electrons coupled with the radiation field. Let us write
\begin{equation} \label{damping2}
\Gamma_2=\left(\frac{\omega_{p_2}}{\omega_0}\right)^2\frac{\pi l_2}{\lambda_0}\omega_0,
\end{equation}
where $\omega_0$,  $\lambda_0=2\pi c/\omega_0$ are the carrier frequency and the central wavelength of the 
incoming light pulse, respectively and $\omega_{p_2}$ denotes the plasma frequency in the first metal layer.   

The electric field components are completely described in \eqref{Eyfield_r1} and \eqref{Eyfield_r3} 
for regions 1 and 3, respectively. Hence, the matching condition \eqref{matchingE13} is equivalent 
with
\begin{align}
E_{1y}(y,0,t)&=\frac{\cos\theta_1}{n_1}\left[F\Bigl(t-n_1\frac{y\sin\theta_1}{c}\Bigr)
+f_1\Bigl(t-n_1\frac{y\sin\theta_1}{c}\Bigr)\right]\nonumber\\
&=\frac{\cos\theta_3}{n_3}\left[g_3\Bigl(t-n_3\frac{y\sin\theta_3}{c}\Bigr)
+f_3\Bigl(t-n_3\frac{y\sin\theta_3}{c}\Bigr)\right]=E_{3y}(y,0,t).\label{matchingE13_2}
\end{align} 
This boundary, or matching condition enforces Snell's law of refraction to hold 
$n_1\sin\theta_1=n_3\sin\theta_3$.  Thus, when the common retarded time is introduced at the surface
\begin{equation}\label{retarded_time}
t'=t-\frac{n_iy\sin\theta_i}{c}, \ i=1,3,
\end{equation} 
\eqref{matchingE13_2} takes the form
\begin{equation}\label{f1g3f3}
c_1\left(F(t')+f_1(t')\right)=c_3\left(g_3(t')+f_3(t')\right),
\end{equation}
with $c_i=\cos\theta_i/n_i,\ i=1,3$. 

The magnetic field components are also described in \eqref{wave_solution} and 
\eqref{wave_solution_region3} for regions 1 and 3, respectively, hence the matching condition \eqref{Bfield_r2} is equivalent with 
\begin{align}
B_{1x}(y,0,t)-B_{3x}(y,0,t)&=\left[F\Bigl(t-n_1\frac{y\sin\theta_1}{c}\Bigr)
-f_1\Bigl(t-n_1\frac{y\sin\theta_1}{c}\Bigr)\right]\nonumber\\
&-\left[g_3\Bigl(t-n_3\frac{y\sin\theta_3}{c}\Bigr)
-f_3\Bigl(t-n_3\frac{y\sin\theta_3}{c}\Bigr)\right]=\frac{4\pi}{c}K_{2y}.\label{matchingB13}
\end{align} 
In terms of the retarded time $t'$,  \eqref{matchingB13} is
\begin{equation}\label{f1g3f3Ky}
F(t')-f_1(t')-\left(g_3(t')-f_3(t')\right)=\frac{4\pi}{c}K_{2y}(t').
\end{equation}
Using the same procedure in region 4 as in region 2, the following boundary conditions can be obtained
\begin{eqnarray}
&\left[E_{3y}-E_{5y}\right]\mid_{z=-h}=0\label{matchingE35}\\
&\left[B_{3x}-B_{5x}\right]\mid_{z=-h}=\frac{4\pi}{c}\int_{-h-l_4/2}^{-h+l_4/2}
J_{4y}dz
=\frac{4\pi}{c}K_{4y},\label{Bfield_r4}
\end{eqnarray}
where $K_{4y}$ is the $y$-component of the surface current in layer 4. 
From \eqref{Eyfield_r3} and \eqref{Eyfield_r5} it follows that \eqref{matchingE35} is 
equivalent with 
\begin{align}
E_{3z}(y,-h,t)&=\frac{\cos\theta_3}{n_3}\left[g_3\Bigl(t-n_3\frac{y\sin\theta_3+h\cos\theta_3}{c}\Bigr)
+f_3\Bigl(t-n_3\frac{y\sin\theta_3-h\cos\theta_3}{c}\Bigr)\right]\nonumber\\
&=\frac{\cos\theta_5}{n_5}g_5\Bigl(t-n_5\frac{y\sin\theta_5+h\cos\theta_5}{c}\Bigr)=E_{5y}(y,-h,t).\label{matchingE35_2}
\end{align}
This matching condition implies Snell's law to hold $n_3\sin\theta_3=n_5\sin\theta_5$,  
hence \eqref{matchingE35_2} is equivalent with 
\begin{equation}\label{g3f3g5}
c_3\left(g_3(t'-\Delta t_3)+f_3(t'+\Delta t_3)\right)=c_5g_5(t'-\Delta t_5),
\end{equation}
with $c_5=\cos\theta_5/n_5$ and
\begin{equation}\label{delays}
\Delta t_3=n_3 \frac{h\cos\theta_3}{c},\quad \Delta t_5=n_5 \frac{h\cos\theta_5}{c}.
\end{equation}
The time delay $\Delta t_i, i=3,5$ represents the time it takes for the signal to propagate a distance $h\cos\theta_i$ 
in the media with index of refraction $n_i.$ The delay times play an important role in our analysis.

Similarly, \eqref{Bfield_r4} is 
\begin{equation}\label{g3f3K4}
g_3(t'-\Delta t_3)-f_3(t'+\Delta t_3)-g_5(t'-\Delta t_5)=\frac{4\pi}{c}K_{4y}(t').
\end{equation}
Equations \eqref{f1g3f3}, \eqref{f1g3f3Ky}, \eqref{g3f3g5} and \eqref{g3f3K4} mean four linear relations for the six unknown
functions $f_1$, $f_3$, $g_3$, $g_5$, $K_{2y}$ and $K_{4y}$, so they are not enough to determine 
for instance the reflected wave $f_1$ and the transmitted wave $g_5$. The additional two relations are
given by the equation of motion for the surface currents, or more precisely for the
velocity components $\frac{d {\delta_y}_2}{dt}$ and $\frac{d {\delta_y}_4}{dt}$. In the non-relativistic regime, these equations are
\begin{eqnarray}
m\frac{d^2 {\delta_y}_2}{dt'\,^2}&=e E_{1y}\mid_{z=0}=ec_1\left[F(t')+f_1(t')\right],\\[5pt]
m\frac{d^2 {\delta_y}_4}{dt'\,^2}&=e E_{3y}\mid_{z=-h}=ec_3\left[g_3(t'-\Delta t_3)+f_3(t'+\Delta t_3)\right].
\end{eqnarray}

The resulting coupled system consists of a recurrence relation for $f_3$ and two delay differential 
equations for the local displacements $\delta_{y_2}$ and $\delta_{y_4}$ of the electrons in the metal layers:
\begin{subequations}\label{eq:282930}
\begin{align}
f_3(t')&= \frac{c_5-c_3}{c_5+c_3}
\cdot \frac{c_1-c_3}{c_1+c_3} f_3(t'-2\Delta t_3)
\label{eq:28}
\nonumber
\\[5pt]
&+ \frac{c_5-c_3}{c_5+c_3}
\cdot \frac{2c_1}{c_1+c_3}
\left[ F(t'-2\Delta t_3)
- \frac{m}{e} \Gamma_2 \dot\delta_{y_2}(t'-2\Delta t_3)
\right]
\nonumber
\\[5pt]
&- \frac{2c_5}{c_5+c_3} \cdot \frac{m}{e} \Gamma_4 \dot\delta_{y_4}(t'-\Delta t_3),
\\[5pt]
\ddot{\delta}_{y_2}(t')
&= \frac{2c_1 c_3}{c_1+c_3}
\left[
\frac{e}{m}F(t') - \Gamma_2 \dot{\delta}_{y_2}(t')
+ \frac{e}{m} f_3(t')
\right],
\label{eq:29}
\\[5pt]
\ddot{\delta}_{y_4}(t')
&= \frac{2c_1 c_3}{c_1+c_3}
\left[
\frac{e}{m}F(t'-\Delta t_3) - \Gamma_2 \dot\delta_{y_2}(t'-\Delta t_3)
\right]
\label{eq:30}
\nonumber
\\[5pt]
&+ \frac{c_1-c_3}{c_1+c_3} c_3 \frac{e}{m}
f_3(t'-\Delta t_3)
+ c_3 \frac{e}{m} f_3(t'+\Delta t_3),
\end{align}
\end{subequations}
where dots denote the derivatives with respect to the retarded time $t'$, see \eqref{retarded_time}. The two equations of motion (\ref{eq:29}) and (\ref{eq:30}) together with the recurrence relation (\ref{eq:28}) (delay difference equation) constitute a closed system of equations for the three unknown functions. Once these functions are known, the reflected wave $f_1$ and the transmitted wave $g_5$ can be calculated as 
\begin{equation}\label{eq:31_refl}
f_1(t')
= \frac{1}{c_1+c_3}
\left[(c_3-c_1)F(t')
-2 c_3
\frac{m}{e} \Gamma_2 \dot{\delta}_{y_2}(t')
+2 c_3 f_3(t')
\right],
\end{equation}
\begin{align}
g_5(t')
&= \frac{2 c_1}{c_1+c_3}
\left[F(t'+\Delta t_5 -\Delta t_3)
-
\frac{m}{e} \Gamma_2 \dot{\delta}_{y_2}(t'+\Delta t_5 -\Delta t_3)
\right]\nonumber\\[5pt]
&+\frac{c_1-c_3}{c_1+c_3} f_3(t'+\Delta t_5 -\Delta t_3)
-f_3(t'+\Delta t_5 +\Delta t_3)
-\frac{m}{e} \Gamma_4 \dot{\delta}_{y_4}(t'+\Delta t_5)
.
\label{eq:31_tran}
\end{align}     
It is quite remarkable that the damping terms, being proportional with $\Gamma_2, \Gamma_4$,   are automatically included %% is this what you meant? 
in the system, without assuming any phenomenological friction. The appearance of the 
damping term is a manifestation of the radiation reaction coming from the boundary conditions. Since 
$\Gamma_2,\Gamma_4$ are proportional with the electron densities in the layers, this effect is due to the 
collective response of the electrons to the action of the complete radiation field, which on the other 
hand reacts back to the electrons.

It is possible to make the equations dimensionless by introducing the dimensionless vector potential $a_0$ 
defined in \ref{s:dimensionless}. This form of the system is the starting point of our mathematical analysis.
%%%%%%%%%%%%%%%%%%%%%%%%%%%
\section{The solution of the hybrid system}\label{s:solutionDDS}
In this section, the solution of the coupled HS \eqref{eq:282930} is given in 
its dimensionless form \eqref{eq:36_5}--\eqref{eq:f3_dimless}, using the theory of singularly perturbed systems, 
\cite{cooke65, cooke66}.

Consider \eqref{eq:36_5}--\eqref{eq:f3_dimless} in the form 
\begin{subequations}\label{DDEf3}
\begin{align}
\dot x_{1}(t)
&= a_1 \left[ -r_2x_1(t)+a_0 x_3(t)+a_0 F(t)\right],
\label{x1}\\[7pt]
\dot x_{2}(t)
&= a_5\left[-(a_1-a_2) r_4x_2(t)-a_1r_2x_1(t-\tau)+a_0 a_2 x_3(t-\tau)+ a_0 a_1F(t-\tau)\right],\label{x2}\\[7pt]
\tiny x_3(t)
&=\frac{a_2 (a_5-1)}{a_1-a_2}x_3(t-2\tau)
+\frac{a_1 (a_5-1)}{a_1-a_2}\left[F(t-2\tau)-\frac{r_2}{a_0}x_1(t-2\tau)\right]\nonumber\\
&\quad-a_5 \frac{r_4}{a_0}x_2(t-\tau),\label{x3}
\end{align}
\end{subequations}
where the following notations are used
\begin{align*}
&x(t)=\left(x_1(t), x_2(t), x_3(t)\right)^T=(\dot{\delta}_{y_2}(t), \dot{\delta}_{y_4}(t), f_3(t))^T, \quad\tau=\Delta t_3,\\[5pt]
&a_1=2\pi c_3\frac{2c_1}{c_1+c_3},\quad a_2=2\pi c_3\frac{c_1-c_3}{c_1+c_3},\quad 
a_5=\frac{2c_5}{c_5+c_3} .
\end{align*}
All functions and parameters in the system are dimensionless. Note that $a_1-a_2=2\pi c_3\not=0$ when $\theta_3\not=\pi/2$.  
It is useful to write the system \eqref{DDEf3} to a matrix form as
\begin{equation}\label{DDDE}
\left(
	\begin{array}{c}
		\dot x_1(t) \\
		\dot x_2(t) \\
                0
	\end{array}
\right)=
A x(t)+Bx(t-\tau)+Cx(t-2\tau)+h\left(t\right),\quad t\geq 0,
\end{equation}
where the constant matrices and the given source term $h:\mathbb{R}\to\mathbb{R}^3$ are, respectively, 
\begin{equation*}
A=
\left(
	\begin{array}{ccc}
		-a_1 r_2 & 0 & a_1a_0\\
		 0 & -(a_1-a_2) r_4a_5 & 0\\
                 0&0&-1\\
	\end{array}
\right),\ 
B=
\left(
	\begin{array}{ccc}
                 0&0&0\\
		-a_5 a_1 r_2 & 0 & a_2a_0a_5\\
		 0 & -\frac{a_5 r_4}{a_0} & 0\\                 
	\end{array}
\right),\ 
\end{equation*}
\small{
\begin{equation*}
  C=
  \left(
    \begin{array}{ccc}
      0&0&0\\
      0&0&0\\
      -\frac{a_1 r_2 (a_5-1)}{(a_1-a_2)a_0} & 0 & \frac{a_2(a_5-1)}{a_1-a_2}\\
    \end{array}
  \right),\
  h(t)=\left(
    \begin{array}{c}
      a_1 a_0 F(t)\\
      a_1 a_0 a_5 F(t-\tau)\\
      \frac{a_1(a_5-1)}{a_1-a_2} F(t-2\tau)\\[5pt]
    \end{array}
  \right).
\end{equation*}
}
\normalsize
The reflected wave $f_1$, and the transmitted wave $g_5$ in their dimensionless forms are
\begin{equation}\label{eq:36_refl}
f_1(t)
= \frac{1}{c_1+c_3}
\left[(c_3-c_1)F(t)
-2 c_3
\frac{r_2}{a_0}x_1 (t)
+2 c_3 x_3(t)
\right]
\end{equation}
and
\begin{align}
g_5(t)
&= \frac{2 c_1}{c_1+c_3}
\left[F(t+\Delta t_5 -\Delta t_3)
-
\frac{r_2}{a_0} x_1(t+\Delta t_5 -\Delta t_3)
\right]\nonumber\\[5pt]
&+\frac{c_1-c_3}{c_1+c_3} x_3(t+\Delta t_5 -\Delta t_3)
-x_3(t+\Delta t_5 +\Delta t_3)
-\frac{r_4}{a_0} x_2(t+\Delta t_5).
\label{eq:36_tran}
\end{align}

These have been left in their original notational form as they are calculated from the solution of the system
\eqref{DDEf3}.
%Since they are calculated from the solution of the system
%\eqref{DDEf3}, we leave them in their original notational forms.
%%%%%%%%%%%%%%%%%%%%%%%%%%%
\subsection{The solution of the singularly perturbed system}
This section considers the  following linear non-homogeneous system of delay differential equations 
\begin{equation}\label{PS}
\frac{d}{dt}(E(\epsilon)x(t))=A x(t)+Bx(t-\tau)+Cx(t-2\tau)+h\left(t\right),\quad t\geq 0,
\end{equation}
where the coefficient matrices $A,B,C$ and the source term $h$ are as in the HS system \eqref{DDDE}, 
$\epsilon\geq 0$ and 
\[
E(\epsilon)=\begin{pmatrix}
I&0\\
0&\epsilon
\end{pmatrix},
\]
with $I\in\mathbb{R}^{2\times 2}$ the identity matrix. We investigate the limit behavior as the small parameter $\epsilon\to 0^+$, of solutions of \eqref{PS} on  $[0, \infty)$. In \cite{cooke65, cooke66}, the authors examine conditions which guarantee that the solutions of \eqref{PS} converge, as $\epsilon\to 0^+$, to the solution of the 
differential-difference system obtained when in \eqref{PS} the value of the parameter is $\epsilon=0$.  
We show that the conditions that guarantee convergence as $\epsilon\to 0^+$ to the solution of \eqref{DDDE} are satisfied for this system and this limit is used to give the solution to the HS. %% is this what you meant?
% Finally, we take the limit to give the solution of the HS system. 

The solution of the perturbed system \eqref{PS} is denoted by $x(\epsilon, t)$, to emphasize its dependence on the parameter $\epsilon$. When $\epsilon\not=0$, then $\det E(\epsilon)=\epsilon\not=0$ and Theorem 6.2 
in \cite{bellman} can be applied to ensure existence and uniqueness of the solution to the system \eqref{PS}, 
with given initial function $\phi\in C([-2\tau,0],\mathbb{R}^3)$ and source term $h\in C([0,\infty),\mathbb{R}^3)$. When $\epsilon=0,$ if the initial function 
is continuous and of bounded variation for $t\in[-2\tau,0],$ and $h$ is continuous and of bounded variation for $t\in[0,\infty),$ then Theorem 1 in \cite{cooke66} can be applied to ensure existence and uniqueness of the solution $x(0,t)=x(t)$ of \eqref{DDDE}. 

First, we give the solution of \eqref{PS} when $\epsilon\not=0$ by analyzing the characteristic roots. 
Assume that the initial function $\phi$ does not depend on $\epsilon$. Denote by $X,\ \Phi$, and $H$ the Laplace transforms of the corresponding functions: $X = \mathcal{L}(x),\ \Phi= \mathcal{L}(\phi(\cdot - 2\tau)),\ H = \mathcal{L}(h)$, where we have extended $\phi$ to $\left[-2\tau,\infty\right)$ 
%% this sentence isn't grammatically complete, there isn't a verb/action did you mean
%% ... Laplace transforms of the corresponding functions, ..., had $\phi$ extended to $\left[-2\tau,\infty\right)$ 
by making it zero for $t>0$.
Applying the Laplace transform to the system \eqref{PS} results in 
\begin{equation}\label{Laplace_sol_eps}
X(\epsilon,s)
= \Delta^{-1}(\epsilon,s)\left[
E(\epsilon)\phi(0) + B\Phi(s) + C\Phi(s)+H(s)
\right],
\end{equation}
where $\Delta(\epsilon,s)$ is the characteristic matrix, defined by
\begin{equation}\label{char_matrix}
\Delta(\epsilon,s) = sE(\epsilon)-A-Be^{-\tau s}-Ce^{-2\tau s}.
\end{equation}
Taking the inverse Laplace transform of \eqref{Laplace_sol_eps}, the 
solution of the inhomogeneous problem \eqref{PS} with initial function $\phi$ is
\begin{equation}\label{solution_inverseLaplace_eps}
x(\epsilon,t)=\int_{(a)}e^{st}\Delta^{-1}(\epsilon,s)
 \left[
E(\epsilon)\phi(0) + B\Phi(s) + C\Phi(s)+H(s)
\right] ds
\end{equation}
for any sufficiently large constant $a>\sup\{\Re(s)\mid \det\Delta(\epsilon,s)=0\}$,  where 
\[
\int_{(a)}=\frac{1}{2\pi i}\lim_{T\to\infty}\int_{a-iT}^{a+iT}.
\]
The integrand in \eqref{solution_inverseLaplace_eps} is a 
meromorphic function with, possibly, poles at the roots of the characteristic equation
\begin{equation}\label{char_eq_eps}
\det\Delta(\epsilon,s)=0.
\end{equation}
Let $\tilde x(\epsilon,t)$ be the fundamental matrix solution of the homogeneous problem i.e., \eqref{PS} with $h=0$,  for $t\geq 0$ that satisfies the initial condition 
\[
\tilde x(\epsilon,t)=\begin{cases}
0, \quad & t<0,\\
E^{-1}(\epsilon), \quad & t=0.
\end{cases}
\] 
As $\Delta^{-1}(\epsilon,s)$ is the Laplace transform of $\tilde x(\epsilon,t)$ (see also \cite{bellman}, \cite{Hale}), 
\begin{equation}\label{fund_sol_eps}
\tilde x(\epsilon,t)=\int_{(a)}e^{st}\Delta^{-1}(\epsilon,s)ds,\quad a>\sup\{\Re(s)\mid \det\Delta(\epsilon,s)=0\}.
\end{equation}
By the convolution theorem, the solution $x(\epsilon,t)$ in \eqref{solution_inverseLaplace_eps} can be obtained as 
\begin{align}\label{solution_inverseLaplace1_eps}
x(\epsilon,t)= &\, \tilde x(\epsilon,t)E(\epsilon)\phi(0) + \int_0^\tau \tilde x(\epsilon,t-\theta) B\phi(\theta-\tau)d\theta\nonumber\\ 
&+ \int_0^{2\tau} \tilde x(\epsilon,t-\theta) C\phi(\theta-2\tau)d\theta 
+\int_0^t\tilde x(\epsilon,t-\theta)h(\theta)d\theta,\quad t\geq 0.
\end{align}
Information on the roots of the characteristic equation \eqref{char_eq_eps}, i.e., 
the elements of the spectrum
 \[
\sigma(\epsilon)=\{\lambda\in\mathbb{C}\mid \det\Delta(\epsilon,\lambda)=0\}.
\]
is needed in order to study the asymptotic behavior of the solution of \eqref{PS} as $t\to\infty$. 

The following lemma gives all the information about the location of these roots.
\begin{lemma}\label{l:spectrumPS}
  For the roots of the characteristic equation \eqref{char_eq_eps}, the followings hold for all $\epsilon\geq 0:$
  \begin{itemize}
  \item[(a)] $\lambda=0\in \sigma(\epsilon)$,  and it is a simple root. 
  \item[(b)] $\forall \lambda\in\sigma(\epsilon)\setminus\{0\}$,  $\Re(\lambda)< 0$. 
  \end{itemize}
\end{lemma}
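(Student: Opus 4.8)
The plan is to reduce the lemma to a single explicit scalar characteristic function. Write $p_0=a_1r_2$, $q_0=(a_1-a_2)a_5r_4$, $\rho=\frac{a_2}{a_1-a_2}=\frac{c_1-c_3}{c_1+c_3}$, $\sigma=a_5-1=\frac{c_5-c_3}{c_5+c_3}$. First I would expand the $3\times3$ determinant $\det\Delta(\epsilon,s)=\det\bigl(sE(\epsilon)-A-Be^{-\tau s}-Ce^{-2\tau s}\bigr)$, using the sparsity of $A,B,C$; since $E(\epsilon)$ enters only the $(3,3)$ entry one has $\det\Delta(\epsilon,s)=\det\Delta(0,s)+\epsilon s\,(s+p_0)(s+q_0)$, the second term being $\epsilon s$ times the $(3,3)$ cofactor. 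Expanding $\det\Delta(0,s)$ — cleanest after subtracting suitable multiples of the first row from the other two, which collapses the lower-left $e^{-\tau s}$ and $e^{-2\tau s}$ clutter and reduces the $(3,3)$ entry to $\epsilon s+1+(a_5-1)e^{-2\tau s}$ — and simplifying, I expect the factored form
\[
\det\Delta(\epsilon,s)=(s+p_0)(s+q_0)(1+\epsilon s)-e^{-2\tau s}\,(p_0-\rho s)(q_0-\sigma s).
\]
The facts I will use throughout are $p_0>0$, $q_0>0$ (because $a_1,\ a_1-a_2,\ a_5$ are positive and the dimensionless damping parameters $r_2,r_4$ are positive), $1+\rho>0$, $1+\sigma>0$, $|\rho|<1$, $|\sigma|<1$ (because each $c_i=\cos\theta_i/n_i>0$), and $\tau=\Delta t_3>0$; note also that $\rho\sigma=\beta$ is exactly the round-trip reflection factor appearing in \eqref{eq:28}.

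For (a): since the $\epsilon$-term carries a factor $s$, $\det\Delta(\epsilon,0)=p_0q_0-p_0q_0=0$ for every $\epsilon\geq0$, so $0\in\sigma(\epsilon)$. To show simplicity I would write $\det\Delta(\epsilon,s)=s\,G(\epsilon,s)$ with $G$ entire and compute
\[
G(\epsilon,0)=\left.\frac{d}{ds}\det\Delta(\epsilon,s)\right|_{s=0}=p_0(1+\sigma)+q_0(1+\rho)+(2\tau+\epsilon)\,p_0q_0 ,
\]
which is strictly positive by the sign facts above; hence $0$ is a simple root.

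For (b): fix $s$ with $\Re(s)\geq0$ and $s\neq0$. Then $(s+p_0)(s+q_0)(1+\epsilon s)\neq0$ there (the zeros of $s+p_0$ and $s+q_0$ are negative, and $\Re(1+\epsilon s)\geq1$), so $\det\Delta(\epsilon,s)=0$ would force
\[
e^{-2\tau s}\cdot\frac{p_0-\rho s}{s+p_0}\cdot\frac{q_0-\sigma s}{s+q_0}\cdot\frac{1}{1+\epsilon s}=1 .
\]
I would bound the modulus of the left side by a product of four factors, each $\leq1$ on $\{\Re s\geq0\}$: $|e^{-2\tau s}|=e^{-2\tau\Re s}\leq1$; $|1+\epsilon s|^{-1}\leq1$; and the two M\"obius-type factors, for which the key elementary estimate is that for $a>0$ and $-1<\mu<1$, writing $s=x+iy$ with $x\geq0$,
\[
|a+s|^2-|a-\mu s|^2=(1+\mu)\bigl[\,2ax+(1-\mu)(x^2+y^2)\,\bigr]\geq0 ,
\]
with equality only at $s=0$. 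Applying this with $(a,\mu)=(p_0,\rho)$ and $(a,\mu)=(q_0,\sigma)$ shows both M\"obius factors are $\leq1$, and strictly $<1$ whenever $s\neq0$. Hence the product is $<1$ for any $s\neq0$ in the closed right half-plane, contradicting the displayed identity; therefore $\det\Delta(\epsilon,s)\neq0$ there, i.e., every nonzero root has $\Re(\lambda)<0$. The argument is uniform in $\epsilon\geq0$.

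The only step with real content is deriving the factored determinant formula: the expansion is routine but the bookkeeping of the $e^{-\tau s}$ and $e^{-2\tau s}$ terms is error-prone, so I would cross-check it via the identity $\det\Delta(\epsilon,s)=\det\Delta(0,s)+\epsilon s\,(s+p_0)(s+q_0)$ and by re-deriving $\det\Delta(\epsilon,0)=0$ directly from the matrix. Once that formula is in hand, both parts reduce to the elementary right-half-plane estimate above, with no need for the argument principle or for the singular-perturbation spectral machinery of \cite{cooke65,cooke66}.
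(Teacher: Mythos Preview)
Your proposal is correct and takes a genuinely different route from the paper. Your factored characteristic equation coincides with the paper's \eqref{char_eq_eps1} (once one notes that $(\lambda+a_1r_2)a_2-a_1^2r_2=-(a_1-a_2)(p_0-\rho\lambda)$ and $\lambda\frac{a_5-1}{a_1-a_2}-a_5r_4=-\frac{1}{a_1-a_2}(q_0-\sigma\lambda)$), and your $G(\epsilon,0)$ equals the paper's $g(\epsilon,0)=a_1a_5[r_2+r_4+(2\tau+\epsilon)r_2r_4(a_1-a_2)]$. For part~(b), however, the paper takes absolute values squared of both sides of \eqref{char_eq_eps1}, writes the result as $l(x)=r(x)$ for $\lambda=x+iy$, and carries out a calculus analysis: it computes $l'(x)$, locates its critical points, shows $l(0)-r(0)>0$ for $y\neq0$, shows $r$ is decreasing for $x>0$, and concludes $l(x)>r(x)$ on $x>0$; then it treats $\epsilon>0$ as a separate monotonicity-in-$\epsilon$ perturbation of the $\epsilon=0$ case. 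Your argument bypasses all of this with the single identity $|a+s|^2-|a-\mu s|^2=(1+\mu)\bigl[2ax+(1-\mu)(x^2+y^2)\bigr]$, which handles every $\epsilon\geq0$ at once and makes the mechanism transparent: the M\"obius factors are precisely the Fresnel-type interface coefficients $\rho,\sigma\in(-1,1)$, so strict contractivity on $\{\Re s\geq0\}\setminus\{0\}$ is the statement that each interface reflects less than unit amplitude. The paper's approach extracts a bit more about the shapes of $l$ and $r$, but yours is shorter, uniform in $\epsilon$, and conceptually cleaner.
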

\begin{proof}
The characteristic equation $\det\Delta(\epsilon,\lambda)=0$ is equivalent to
\begin{align}\label{char_eq_eps1}
(1+\lambda\epsilon)&\left(\lambda+a_5r_4(a_1-a_2)\right)\left(\lambda+a_1 r_2\right)\nonumber\\
&=e^{-2\tau \lambda}\left( (\lambda+a_1r_2)a_2-a_1^2 r_2\right)
\left(\lambda\frac{a_5-1}{a_1-a_2}-a_5r_4\right).
\end{align}
Observe that $\lambda=0$ is a simple root for any $\epsilon\geq 0$. 
The proof of part (b) consists of two steps. It is initially shown that the assertion is true for $\epsilon= 0$,  
 and then, using this result, it is shown that it must also hold for any $\epsilon>0$.

\noindent{\em Step 1.} When $\epsilon= 0$,  then $\det\Delta(0,\lambda)=0$ is equivalent to
\begin{equation}\label{char_eq_all}
\left(\lambda+a_5r_4(a_1-a_2)\right)\left(\lambda+a_1 r_2\right)=e^{-2\tau \lambda}\left( (\lambda+a_1r_2)a_2-a_1^2 r_2\right)
\left(\lambda\frac{a_5-1}{a_1-a_2}-a_5r_4\right).
\end{equation}
Let $\lambda=x+iy$ be a root and be substituted into 
\eqref{char_eq_all}, then the absolute value square is taken for both sides to obtain
\begin{align}\label{char_eq_all1}
&\left( (x+a_1r_2)^2+y^2\right)\left( (x+2\pi c_3a_5r_4)^2+y^2\right)\nonumber\\
&=\frac{e^{-4\tau x}}{4\pi^2c_3^2}
\left[ \left(a_2x+a_1r_2(a_1-a_2)\right)^2+a_2^2y^2\right]
\left[\left( (a_5-1)x-2\pi c_3 a_5r_4\right)^2+(a_5-1)^2y^2 \right].
\end{align}  
For any $y\in\mathbb{R}$,   the left and the right hand sides of \eqref{char_eq_all1} are denoted by $l(x)$ and $r(x)$, respectively. Then
\[
l'(x)=2\left(2x+a_1r_2+2\pi c_3a_5r_4\right)\left( (x+a_1r_2)(x+2\pi c_3 a_5r_4)+y^2
\right),
\] 
which has roots
\[
x_0=-\frac{a_1r_2+2\pi c_3a_5r_4}{2},\ x_\pm=\frac{-(a_1r_2+2\pi c_3a_5r_4)\pm 
\sqrt{(a_1r_2-2\pi c_3a_5r_4)^2-4y^2}}{2}.
\]
If $y$ is such that $x_\pm$ are real, then $x_-\leq x_0\leq x_+<0$ holds, and at $x_+$ the function $l$ has a 
local minimum. When $y$ is large enough, such that $x_\pm$ are complex, then $l'(x)=0$ has only one solution $x_0<0$,  which is 
a global minimum point for $l$.  In both cases $l(x)$ is strictly increasing for $x>0$,  and 
\[
l(0)=(a_1^2r_2^2+y^2)(4\pi^2c_3^2a_5^2r_4^2+y^2).
\]
It can be shown that the function $r$ is strictly decreasing, and 
\[
r(0)=\left(a_1^2r_2^2+\frac{a_2^2}{4\pi^2c_3^2}y^2\right)\left(4\pi^2c_3^2a_5^2r_4^2+(a_5-1)^2y^2\right).
\] 
Finally, consider
\begin{align*}
&l(0)-r(0)=y^2\left[\left(1-\frac{a_2^2(a_5-1)^2}{4\pi^2c_3^2}\right)y^2+a_5^2r_4^2a_1(a_1-2a_2)+a_1^2r_2^2a_5(2-a_5)\right]\\
&=y^2\left[\frac{a_1(a_1-2a_2)+a_2^2a_5(2-a_5)}{4\pi^2c_3^2}y^2+a_5^2r_4^2a_1(a_1-2a_2)+a_1^2r_2^2a_5(2-a_5)\right].
\end{align*}
Since $a_1>0$,  $a_5>0$, 
\[
2-a_5=\frac{2c_3}{c_5+c_3}>0,\quad a_1-2a_2=\frac{4\pi^2c_3^2}{c_1+c_3}>0,
\]
it follows that $l(0)-r(0)>0$ for all $y\not=0$. 

Combining all results show that $l$ is increasing and $r$ is decreasing for $x>0$ and $l(0)>r(0)$,  
hence $l(x)>r(x)$ for all $x>0$.  Consequently, the curves of $l$ and $r$ can intersect only at $x<0$. 

\noindent{\em Step 2.} Let $\epsilon>0$ and $\lambda=x+iy$ be a root of $\det\Delta(\epsilon,\lambda)=0$.  
Take for both sides in \eqref{char_eq_eps1} the absolute value square and denote the left hand side by 
$l(\epsilon,x)$.  The right hand side remains $r(x)$ since it does not depend on $\epsilon$.  Let
\begin{equation}
d(\epsilon,x)=l(\epsilon,x)-r(x)=\left((1+\epsilon x)^2+\epsilon^2 y^2\right) l(x)-r(x),
\end{equation} 
where $l(x)=l(0,x)$ is as before.  It is known from Step 1 that $l(x)>0$ and $d(0,x)>0$ for all $x>0$.  Then for any $x>0$,  
\[
\frac{d}{d\epsilon}d(\epsilon,x)= 2\left((1+\epsilon x)x+\epsilon y^2\right)l(x)>0.
\]
Hence, $d$ is a strictly increasing function of $\epsilon$,  that is, for any $x>0$,  
$d(\epsilon,x)>d(0,x)>0$ holds for all $\epsilon>0$.  Moreover,
$l(\epsilon,0)=(1+\epsilon^2 y^2)l(0)>l(0)>r(0)$.  It follows that for
any $\epsilon\geq 0$,  $l(\epsilon,x)=r(x)$ is only possible for $x<0$,  which completes the proof of the lemma.
\end{proof}
%%%%%%%%%%%%%%%%%%%%%%%%%%
To study the dynamic behavior of the fundamental matrix solution $\tilde{x}(\epsilon,t)$,  the line of integration 
in \eqref{fund_sol_eps} is shifted to the left, whilst keeping track of the residues  corresponding to the singularities of $\Delta^{-1}(\epsilon,s)$ that are passed. 
The Cauchy theorem of residues implies that (see \cite{DiekmannRFDE}, \cite{Hale})
\begin{equation}\label{solution_inverseLaplaceRes_eps}
\tilde x(\epsilon,t)=\int_{(\alpha_m)}e^{st}\Delta^{-1}(\epsilon,s)ds + \sum_{j=1}^{k_m}\underset{s=\lambda_j}{\text{Res}}\,e^{st}\Delta^{-1}(\epsilon,s),
\end{equation}
where $\lambda_1,\dots,\lambda_{k_m}$ are roots of the characteristic equation, such that $\Re(\lambda_j)>\alpha_m$ 
$\forall j=1,\dots, k_m$.  
Moreover, if $\lambda_j$ is a zero of $\det\Delta(\epsilon,s)=0$ of order m, then
\begin{equation}\label{pj-polynomials_eps}
\underset{s=\lambda_j}{\text{Res}}\,e^{st}\Delta^{-1}(\epsilon,s)=p_j(\epsilon,t)e^{\lambda_jt},
\end{equation}
where $p_j$ is a $\mathbb{C}^{3\times 3}$-valued polynomial in $t$ of degree less than or equal to $m-1$,  with 
coefficients that depend on $\epsilon$.  

Using Lemma \ref{l:spectrumPS}, $\alpha_m=-\alpha<0$ can be chosen in \eqref{solution_inverseLaplaceRes_eps} so that only the $\lambda=0$ (simple) root of the characteristic equation is to the right of the line $\{s\mid \Re(s)=-\alpha\}$. 
Hence, we can conclude that the matrix solution of the homogeneous system in \eqref{solution_inverseLaplaceRes_eps} is
\begin{equation}\label{solution_eps}
\tilde x(\epsilon,t)=\int_{(-\alpha)}e^{st}\Delta^{-1}(\epsilon,s)ds+\underset{s=0}{\text{Res}}\,e^{st}\Delta^{-1}(\epsilon,s)=
\int_{(-\alpha)}e^{st}\Delta^{-1}(\epsilon,s)ds+M(\epsilon), 
\end{equation}
where $M(\epsilon)$ is a constant matrix.
Moreover, as the 
residue of $e^{st}\Delta^{-1}(\epsilon,s)$ at any root of the characteristic equation is a solution of the 
homogeneous system, it follows that $M(\epsilon)$ is a matrix solution. Consequently,
\begin{equation}
(A+B+C)M(\epsilon)=0,
\end{equation} 
which determines 
\begin{equation}\label{Mepsilon}
M(\epsilon)=\begin{pmatrix}
m_{1} & -\frac{r_4}{r_2}m_{2} & \frac{a_0}{r_2}m_{3}\\[5pt]
-\frac{r_2}{r_4}m_{1} & m_{2} & -\frac{a_0}{r_4}m_{3}\\[5pt]
\frac{r_2}{a_0}m_{1} & -\frac{r_4}{a_0}m_{2} & m_{3}
\end{pmatrix},
\end{equation}
with $m_{1},m_{2}$ and $m_{3}$ non-zero real parameters that can depend on $\epsilon$.

Furthermore, denoting the integral in \eqref{solution_eps} by $N(\epsilon,t)$, the estimate
\begin{equation}\label{estimateN_eps}
\|N(\epsilon,t)\|=\|\int_{(-\alpha)}e^{st}\Delta^{-1}(\epsilon,s)ds\|\leq K(\epsilon)e^{-\alpha t}
\end{equation}
holds for some $K(\epsilon)>0$,  where $\alpha>0$ and $\|\cdot\|$ is some matrix norm.

As we are interested in the asymptotic behavior of the solutions as $t\to\infty,$ our main result is summarized in the following lemma.

%%%%%%%%%%%%%%%%%%%%%%%%%%%%%%%%%%%%%%%%%%%%%%%%%%%%%%%%%%
\begin{lemma}\label{l:asymptotic_behavior_eps}
Suppose $h:[0,\infty)\to \mathbb{R}^3$ is a given exponentially bounded function, i.e., there are $K_1>0$,  $\beta>0$ constants 
such that
\[
\|h(t)\|\leq K_1 e^{-\beta t},\ t\geq 0.
\]
The asymptotic behavior of the solution $x(\epsilon, t)$ of \eqref{PS} for $\epsilon>0$, with given initial function $\phi\in C([-2\tau,0])$ as $t\to\infty$ is then
\begin{align}\label{solution_limit_eps}
\lim_{t\to\infty}x(\epsilon,t)=&M(\epsilon)\left[E(\epsilon)\phi(0) + B\int_0^\tau \phi(\theta-\tau)d\theta  
+ C\int_0^{2\tau} \phi(\theta-2\tau)d\theta\right] \nonumber\\
&+M(\epsilon)\int_0^\infty h(\theta)d\theta,
\end{align}
where $M(\epsilon)$ is the matrix in \eqref{Mepsilon}.
\end{lemma}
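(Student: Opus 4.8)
The plan is to start from the variation-of-constants representation \eqref{solution_inverseLaplace1_eps} of $x(\epsilon,t)$ and insert the splitting of the fundamental matrix solution obtained in \eqref{solution_eps}--\eqref{estimateN_eps}, namely $\tilde x(\epsilon,s)=M(\epsilon)+N(\epsilon,s)$ for $s\ge 0$, where $M(\epsilon)$ is the constant matrix \eqref{Mepsilon} and $\|N(\epsilon,s)\|\le K(\epsilon)e^{-\alpha s}$ with $\alpha>0$. Substituting this into each of the four terms of \eqref{solution_inverseLaplace1_eps} (legitimate once $t\ge 2\tau$, so that all shifted arguments $t-\theta$ with $\theta\in[0,2\tau]$ are nonnegative) decomposes $x(\epsilon,t)$ into a constant part built from $M(\epsilon)$ and a remainder part built from $N(\epsilon,\cdot)$. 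The claim then follows by showing that the constant part equals the right-hand side of \eqref{solution_limit_eps} and that the remainder part tends to $0$ as $t\to\infty$.

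For the three initial-data terms together with $\tilde x(\epsilon,t)E(\epsilon)\phi(0)$, the $M(\epsilon)$ contribution is exactly
\[
M(\epsilon)E(\epsilon)\phi(0)+M(\epsilon)B\int_0^\tau\phi(\theta-\tau)\,d\theta+M(\epsilon)C\int_0^{2\tau}\phi(\theta-2\tau)\,d\theta,
\]
i.e.\ the bracketed term in \eqref{solution_limit_eps}, and it is independent of $t$. The corresponding $N$-contributions are integrals over the fixed intervals $[0,\tau]$ and $[0,2\tau]$ of $N(\epsilon,t-\theta)$ against the continuous (hence bounded) functions $B\phi(\cdot-\tau)$, $C\phi(\cdot-2\tau)$, plus the term $N(\epsilon,t)E(\epsilon)\phi(0)$. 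Using $\|N(\epsilon,t-\theta)\|\le K(\epsilon)e^{-\alpha(t-\theta)}\le K(\epsilon)e^{2\alpha\tau}e^{-\alpha t}$ for $\theta\in[0,2\tau]$, all of these are bounded by a constant (depending on $\epsilon,\tau,\phi$) times $e^{-\alpha t}$, hence they vanish as $t\to\infty$.

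For the forcing term $\int_0^t\tilde x(\epsilon,t-\theta)h(\theta)\,d\theta$, the $M(\epsilon)$ part is $M(\epsilon)\int_0^t h(\theta)\,d\theta$, which converges to $M(\epsilon)\int_0^\infty h(\theta)\,d\theta$ since the exponential bound $\|h(t)\|\le K_1e^{-\beta t}$ makes $h$ integrable on $[0,\infty)$; this produces the last term of \eqref{solution_limit_eps}. It remains to show $\int_0^t N(\epsilon,t-\theta)h(\theta)\,d\theta\to 0$, which is the only step requiring a genuine estimate: split the integral at $t/2$. On $[0,t/2]$ one has $\|N(\epsilon,t-\theta)\|\le K(\epsilon)e^{-\alpha(t-\theta)}\le K(\epsilon)e^{-\alpha t/2}$, so this part is bounded by $K(\epsilon)e^{-\alpha t/2}\,\|h\|_{L^1[0,\infty)}$; on $[t/2,t]$ one has $\|h(\theta)\|\le K_1e^{-\beta\theta}\le K_1e^{-\beta t/2}$, so this part is bounded by $K_1e^{-\beta t/2}\int_0^\infty\|N(\epsilon,u)\|\,du\le K_1 K(\epsilon)e^{-\beta t/2}/\alpha$. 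Both bounds tend to $0$ as $t\to\infty$. Adding all pieces yields \eqref{solution_limit_eps}.

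There is no serious obstacle: the lemma is an elementary consequence of the explicit solution formula together with the already-established facts that $0$ is a simple eigenvalue contributing the constant matrix $M(\epsilon)$ and that the rest of the spectrum lies strictly in the open left half-plane, giving the exponentially small remainder $N(\epsilon,\cdot)$. The only mildly technical point is the ``exponentially decaying kernel convolved with an $L^1$ forcing term vanishes at infinity'' estimate, handled by the $t/2$ split above; everything else is bookkeeping with constants that are allowed to depend on the fixed $\epsilon>0$.
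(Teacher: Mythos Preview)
Your proposal is correct and follows essentially the same approach as the paper: substitute the splitting $\tilde x(\epsilon,\cdot)=M(\epsilon)+N(\epsilon,\cdot)$ into the variation-of-constants formula \eqref{solution_inverseLaplace1_eps}, observe that the $M(\epsilon)$ contributions produce the claimed limit, and show the $N(\epsilon,\cdot)$ contributions vanish using the exponential bound \eqref{estimateN_eps}. The only minor difference is in the convolution estimate $\int_0^t N(\epsilon,t-\theta)h(\theta)\,d\theta\to 0$: the paper computes the integral $\int_0^t e^{-\alpha(t-\theta)}e^{-\beta\theta}\,d\theta$ directly as $(\alpha-\beta)^{-1}(e^{-\beta t}-e^{-\alpha t})$, whereas your $t/2$ split is slightly more robust in that it handles the case $\alpha=\beta$ without a separate computation.
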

\begin{proof}
Introduce \eqref{solution_eps} into \eqref{solution_inverseLaplace1_eps} to obtain
\begin{align}\label{solution_inverseLaplace2}
&x(\epsilon,t)=(N(\epsilon,t)+M(\epsilon))E(\epsilon)\phi(0) + \int_0^\tau (N(\epsilon,t-\theta)+M(\epsilon)) B\phi(\theta-\tau)d\theta  \nonumber\\
&+\int_0^{2\tau} (N(\epsilon,t-\theta)+M(\epsilon)) C\phi(\theta-2\tau)d\theta
+\int_0^t (N(\epsilon,t-\theta)+M(\epsilon))h(\theta)d\theta.
\end{align}
Using the estimate \eqref{estimateN_eps} and since $\alpha>0$,  
\begin{equation}
\lim_{t\to\infty}\tilde x(\epsilon,t)=\lim_{t\to\infty}\left(M(\epsilon)+N(\epsilon,t)\right)= M(\epsilon).
\end{equation}
Take the limit $t\to\infty$ in \eqref{solution_inverseLaplace2} to obtain
\begin{align}\label{solution_inverseLaplace3}
\lim_{t\to\infty}x(\epsilon,t)=&M(\epsilon)E(\epsilon)\phi(0) + \int_0^\tau M(\epsilon) B\phi(\theta-\tau)d\theta  \nonumber\\
&+ \int_0^{2\tau} M(\epsilon) C\phi(\theta-2\tau)d\theta
+\lim_{t\to\infty}\int_0^t (N(\epsilon,t-\theta)+M(\epsilon))h(\theta)d\theta.
\end{align}
Using the assumption on the incoming inhomogeneity results in
\begin{align*}
\|&\int_0^t N(\epsilon,t-\theta)h(\theta)d\theta\|\leq K K_1\int_0^t e^{-\alpha (t-\theta)}e^{-\beta\theta}d\theta 
= K K_1e^{-\alpha t}\int_0^t e^{(\alpha-\beta)\theta}d\theta\\[5pt]
&=\frac{K K_1}{\alpha-\beta}e^{-\alpha t}\left(  e^{(\alpha-\beta)t}-1\right)
=\frac{K K_1}{\alpha-\beta}\left(  e^{-\beta t}-e^{-\alpha t}\right)\to 0,\quad \text{as } t\to\infty.
\end{align*}
This is then combined with \eqref{solution_inverseLaplace3} yielding the desired result.
\end{proof}
%%%%%%%%%%%%%%%%%%%%%%
The next goal is to give explicit formulas for the coefficients of the polynomials $p_j(\epsilon,t)$ 
in \eqref{pj-polynomials_eps}, as linear operators acting on the initial function $\phi$,  in terms of 
the spectral information. According to Lemma 3.6. in \cite{DiekmannRFDE}, if 
\[
\Delta^{-1}(\epsilon,s)=\frac{G(\epsilon,s)}{s-\lambda},\quad \text{with } G \text{ analytic at } s=\lambda,
\]
where $\lambda$ is a zero of $\det\Delta (\epsilon,s)$,  then the generalized eigenspace $\mathcal{M}_\lambda(\epsilon)$ 
at $\lambda$ is given by
\[
\mathcal{M}_\lambda(\epsilon)=\{\theta\mapsto e^{\lambda\theta}v\mid v\in\mathcal{N}\left(\Delta(\epsilon,\lambda)\right)\}, 
\] 
with $\mathcal{N}\left(\Delta(\epsilon,\lambda)\right)$ denoting the null space of the operator $\Delta(\epsilon,\lambda)$.  Moreover, 
the projection $\mathcal{P}_\lambda$ of $\phi\in C([-2\tau,0])$ onto $\mathcal{M}_\lambda(\epsilon)$ is given by
\begin{equation}
\mathcal{P}_\lambda \phi =e^{\lambda\cdot}G(\epsilon,\lambda)\left[E(\epsilon)\phi(0)
+B e^{-\lambda\tau} \int_{-\tau}^0 e^{-\lambda\theta}\phi(\theta)d\theta +C e^{-2\lambda\tau}\int_{-2\tau}^0 e^{-\lambda\theta}\phi(\theta)d\theta\right].\label{projection}
\end{equation}
Having the spectral information contained in Lemma \ref{l:spectrumPS}, the constant matrix $M(\epsilon)$ is determined  in \eqref{solution_eps} by applying the results above to the $\lambda=0$ simple eigenvalue. 
As
\begin{align*}
\Delta^{-1}(\epsilon,s)&=\frac{1}{\det(\Delta(\epsilon,s))}\text{adj }\Delta(\epsilon,s),
%&=\frac{1}{\det(\Delta(\epsilon,s))}
% \begin{pmatrix}
% s+r_4 (a_1-a_2)+a_2 r_4 e^{-2\tau s}& -a_1r_4e^{-\tau s}\\
%  -a_1r_2e^{-\tau s} & s+a_1 r_2
% \end{pmatrix}, 
\end{align*}
it is straightforward to see that there exists a $G(\epsilon,s)$,  analytic function at $s=0$,  and is given by
\begin{align*}
G(\epsilon,s)=\frac{e^{2\tau s}}{g(\epsilon,s)}\text{adj }\Delta(\epsilon,s),
% \begin{pmatrix}
% se^{2\tau s}+r_4e^{2\tau s} (a_1-a_2)+a_2 r_4 & -a_1r_4e^{\tau s}\\
%  -a_1r_2e^{\tau s} & (s+a_1 r_2)e^{2\tau s}
% \end{pmatrix}, 
\end{align*}
with $g(\epsilon,s)$ and $\text{adj }\Delta(\epsilon,s)$ given explicitly in \ref{app:parameters}.
% \[
% h(s)= \left(s+(a_1-a_2)r_4+a_1r_2\right)e^{2\tau s}+r_4a_2+a_1r_2r_4(a_1-a_2)\frac{e^{2\tau s}-1}{s}.
% \]
The projection $\mathcal{P}_0$ of $\phi$ onto the one dimensional eigenspace $\mathcal{M}_0(\epsilon)$, according to \eqref{projection}, is
\begin{equation}
\mathcal{P}_0 \phi =G(\epsilon,0)\left[E(\epsilon)\phi(0)
+B \int_{-\tau}^0 \phi(\theta)d\theta +C \int_{-2\tau}^0 \phi(\theta)d\theta\right],
\end{equation}
where
\begin{equation}\label{H_eps}
G(\epsilon,0)=\frac{a_1 a_5 r_4}{g(\epsilon,0)}
\begin{pmatrix}
1  & -1 & a_0(a_1-a_2)\\
 -\frac{r_2}{r_4} & \frac{r_2}{r_4} & -\frac{r_2}{r_4} a_0(a_1-a_2)\\
\frac{r_2}{a_0} & -\frac{r_2}{a_0} & r_2(a_1-a_2)
\end{pmatrix}, 
\end{equation}
and
\[
g(\epsilon,0)= a_1 a_5[r_2+r_4+2\tau r_2r_4(a_1-a_2)+\epsilon r_2r_4(a_1-a_2)].
\]
Consequently, the constant matrix $M(\epsilon)$ in Lemma \ref{l:asymptotic_behavior_eps} is $M(\epsilon)=G(\epsilon,0)$,  i.e.,
\[
m_1=\frac{a_1a_5r_4}{g(\epsilon,0)},\quad m_2=\frac{a_1a_5r_2}{g(\epsilon,0)},\quad m_3=\frac{a_1a_5r_4}{g(\epsilon,0)}r_2(a_1-a_2).
\]
%%%%%%%%%%%%%%%%%%%%%
Note that, in general, when the initial function $\phi$ is the eigenvector  corresponding to the eigenvalue $\lambda$, then $\mathcal{P}_\lambda \phi=\phi$.  This is shown in next remark  for the zero eigenvalue and its corresponding eigenvector.
\begin{remark}\label{r:Pphi}
In the case of constant initial functions $\phi(\theta)=\phi(0)=(\phi_1,\phi_2,\phi_3)^T$,  $-2\tau\leq\theta\leq 0$, the limit in \eqref{solution_limit_eps} becomes
\begin{align}\label{Peigenvector_eps}
\lim_{t\to\infty}x(\epsilon,t)&=\mathcal{P}_0\phi+M(\epsilon)\int_0^\infty h(\theta)d\theta\nonumber\\
&= M(\epsilon)\left[E(\epsilon)+\tau B+2\tau C\right]\phi(0)+M(\epsilon)\int_0^\infty h(\theta)d\theta\nonumber\\
&=c_\phi(\epsilon)
\begin{pmatrix}
1\\
-\frac{r_2}{r_4}\\
\frac{r_2}{a_0}
\end{pmatrix}+M(\epsilon)\int_0^\infty h(\theta)d\theta,
\end{align}
where the multiplication factor $c_\phi(\epsilon)$ is
\begin{equation}\label{cphi_eps}
c_\phi(\epsilon)=\frac{r_4(\phi_1d_1+\phi_2 d_2+\phi_3d_3)}{r_2+r_4+r_2r_4(a_1-a_2)2\tau+r_2r_4(a_1-a_2)\epsilon}\ ,
\end{equation}
with $d_1=1+2\tau a_1r_2 -\tau a_1a_5r_2$,  $d_2=-1-a_5r_4\tau(a_1-a_2)$,  $d_3=a_0a_2\tau(a_5-2)+a_0(a_1-a_2)\epsilon$.  
In the special case when $\phi$ is an eigenvector corresponding to the zero eigenvalue, i.e., 
$\phi=(1,-r_2/r_4,r_2/a_0)^T\in\mathcal{N}(\Delta(\epsilon,0))$,  then $c_\phi(\epsilon)=1$ for all 
$\epsilon>0$, hence $\mathcal{P}_0 \phi=\phi$. 
\end{remark}
%%%%%%%%%%%%%%%%%%%%%
Summarizing, for any $\epsilon>0$ the solution of the DDE system \eqref{PS} is given as \eqref{solution_inverseLaplace2}. In this expression, we determined the matrix $M(\epsilon)$ and gave an exponentially decaying bound for $N(\epsilon, t)$. 
Next, we look at the limiting behavior of the solutions as $\epsilon\to 0^+.$ 
The following lemma is a direct application of the Convergence Theorem in \cite{cooke66}.
%%%%%%%%%%%%%%%%%%%%%%%%%%
\begin{lemma}
Let $x(\epsilon, t)$ be the solution of \eqref{PS} corresponding to the initial function $\phi(t)$,  which is continuous and 
of bounded variation on $[-2\tau,0]$,  and where $h$ is continuous and of bounded variation on $[0,\infty)$.  Then 
\begin{equation}\label{limitx1x2}
\lim_{\epsilon\to 0+}x_1(\epsilon, t)=x_1(0,t),\ \lim_{\epsilon\to 0+}x_2(\epsilon, t)=x_2(0,t),
\end{equation}
where the convergence is uniform in $t$ for $t\in[0,\infty)$.  If $\dot\phi(t)$ exists, it is continuous and 
of bounded variation on $[-2\tau,0]$,  and if $\dot h(t)$ exists, it is continuous and 
of bounded variation on $[0,\infty)$ then
\begin{equation}\label{limitx3}
\lim_{\epsilon\to 0+}x_3(\epsilon, t)=x_3(0,t), \text{ for all } t\in CQ,
\end{equation}
where $CQ=[0,\infty)\setminus\{t^* \mid t^*=j\tau, j\geq 0, j\in\mathbb{Z}\}$. The convergence in 
\eqref{limitx3} is uniform in $t$ for any compact subset of $CQ$.  If moreover, 
\[
\frac{d}{dt}(E(\epsilon)\phi(t))\mid_{t=0^-}=A \phi(0)+B\phi(-\tau)+C\phi(-2\tau)+h(0)
\]
holds, then the convergence in \eqref{limitx3} will be uniform for $t$ in $[0,\infty)$. 
\end{lemma}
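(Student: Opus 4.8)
The plan is to check that the pair (singularly perturbed system \eqref{PS}, data $\phi$, $h$) meets the hypotheses of the Convergence Theorem of \cite{cooke66} — the condition of regular degeneration of \cite{cooke65} together with the regularity assumptions — and then to read off \eqref{limitx1x2}--\eqref{limitx3} as its conclusion, identifying $x(0,t)=x(t)$ with the solution of the hybrid system \eqref{DDDE}. Writing out the third row of \eqref{PS} gives the scalar equation
\[
\epsilon\dot x_3(t) = -x_3(t) + \frac{a_2(a_5-1)}{a_1-a_2}\,x_3(t-2\tau) - \frac{a_5 r_4}{a_0}\,x_2(t-\tau) - \frac{a_1 r_2(a_5-1)}{(a_1-a_2)a_0}\,x_1(t-2\tau) + \frac{a_1(a_5-1)}{a_1-a_2}\,F(t-2\tau),
\]
whose $\epsilon=0$ limit is exactly \eqref{x3}. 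The associated fast dynamics is governed by the reduced difference operator $x_3(t)\mapsto x_3(t)-q\,x_3(t-2\tau)$ with
\[
q=\frac{a_2(a_5-1)}{a_1-a_2}=\frac{c_1-c_3}{c_1+c_3}\cdot\frac{c_5-c_3}{c_5+c_3},
\]
and since every $c_i=\cos\theta_i/n_i$ is strictly positive for $\theta_i\in[0,\pi/2)$, we get $|q|<1$. This single inequality is the regular-degeneration condition: it is the (exponential) stability requirement for the reduced fast subsystem $y(t)=q\,y(t-2\tau)+\ldots$, it makes the degenerate system \eqref{DDDE} well posed (Theorem 1 in \cite{cooke66}, already invoked), and it keeps the boundary-layer characteristic roots created for $\epsilon>0$ in the open left half-plane. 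The complementary spectral fact — $\lambda=0$ simple and isolated from the rest of the spectrum, and no root crossing or approaching the imaginary axis as $\epsilon\to0^+$ — is precisely Lemma \ref{l:spectrumPS}, valid for all $\epsilon\geq0$.

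The remaining hypotheses are the regularity conditions on the data, and these are exactly those assumed in the statement: continuity and bounded variation of $\phi$ on $[-2\tau,0]$ and of $h$ on $[0,\infty)$ suffice for the convergence \eqref{limitx1x2} of the slow components $x_1,x_2$; the additional continuity and bounded variation of $\dot\phi$ and $\dot h$ upgrade this to \eqref{limitx3} for the fast component $x_3$, uniformly on compact subsets of $CQ$; and the compatibility relation $\frac{d}{dt}(E(\epsilon)\phi(t))|_{t=0^-}=A\phi(0)+B\phi(-\tau)+C\phi(-2\tau)+h(0)$ promotes the convergence of $x_3$ to be uniform on all of $[0,\infty)$. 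Existence and uniqueness of the objects being compared have already been recorded — Theorem 6.2 in \cite{bellman} for $x(\epsilon,\cdot)$ when $\epsilon>0$ and Theorem 1 in \cite{cooke66} for $x(0,\cdot)=x(\cdot)$ — so the Convergence Theorem of \cite{cooke66} applies and delivers the three assertions.

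It remains to account for the exceptional set $CQ$. Because $x_3(0,\cdot)$ is defined through the recurrence \eqref{x3} rather than by integration, it inherits at $t=0$ the jump equal to the mismatch between $\phi_3(0)$ and the value forced by the right-hand side of \eqref{x3} evaluated on the prehistory $\phi$; moreover the merely continuous initial data imposes corners on $x_1,x_2$ at $t=0$. These singularities are transported forward by the delays $\tau$ and $2\tau$ appearing in \eqref{x3} and in the equations for $\dot x_1,\dot x_2$, so $x_3(0,\cdot)$ can fail to be continuous, or continuously differentiable, at the points $t=j\tau$, $j\in\mathbb{Z}$, $j\geq0$; in a shrinking neighbourhood of each such point $x_3(\epsilon,\cdot)$ develops an $O(\epsilon)$ boundary layer, which is why uniform convergence can only be claimed on compact subsets of $CQ$. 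When the compatibility condition holds, the jump at $t=0$ disappears, hence so do all of its descendants, and the convergence of $x_3(\epsilon,\cdot)$ becomes uniform on $[0,\infty)$.

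The genuinely load-bearing steps, and where I expect most of the effort, are (i) verifying that Lemma \ref{l:spectrumPS} — whose proof already controls the roots through the monotonicity of $d(\epsilon,x)$ in $\epsilon$ — supplies precisely the nondegeneracy hypothesis used in \cite{cooke66}, and (ii) translating the block structure of $E(\epsilon)$, $A$, $B$, $C$ into the partitioned slow/fast normal form of \cite{cooke65, cooke66}; once this dictionary is fixed, the check $|q|<1$ and the data regularity are immediate, and the remainder of the lemma is a direct citation.
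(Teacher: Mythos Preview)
Your approach is essentially the paper's: invoke the Convergence Theorem of \cite{cooke66}, verify its standing hypothesis, and then use Lemma~\ref{l:spectrumPS} together with the solution representation \eqref{solution_inverseLaplace2} to upgrade uniform convergence from bounded subsets to all of $[0,\infty)$.

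The one substantive difference is in \emph{which} hypothesis you verify explicitly. The paper checks what Cooke calls $\sigma_0$-complete regularity simply by observing that the $(3,3)$ entry of $A$ equals $-1<0$; this is the boundary-layer stability condition for the fast equation $\epsilon\dot x_3=-x_3+\cdots$. You instead compute $|q|<1$ with $q=a_2(a_5-1)/(a_1-a_2)$, which is the contraction condition for the limiting difference equation \eqref{x3}. That inequality is certainly relevant --- it is what makes the degenerate system well posed, and the paper already invoked it implicitly via Theorem~1 of \cite{cooke66} before the lemma --- but it is not the same object as the boundary-layer condition, and calling it the ``fast dynamics'' conflates the reduced difference operator with the stretched-time inner problem. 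Both checks are short and correct; the paper's is the one that matches the named hypothesis of the cited theorem. Your discussion of the exceptional set $CQ$ and of the compatibility condition is more detailed than what the paper spells out, and is a welcome addition.
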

\begin{proof}
We apply the Convergence Theorem in \cite{cooke66}. It can be verified that the condition of this theorem on the $\sigma_0$-complete regularity, for some $\sigma_0$, is satisfied 
as the element $A(3,3)=-1$ of the matrix $A$ in \eqref{PS} is negative . It then follows  from this theorem that the limits  in \eqref{limitx1x2} hold and the convergence is uniform in $t$ for any bounded subset of $[0,\infty)$.
Furthermore, by using \eqref{solution_inverseLaplace2} and the spectral properties of the system in 
Lemma~\ref{l:spectrumPS}, it is straightforward to show that the convergence in \eqref{limitx1x2} is uniform in $t$ for $t\in[0,\infty)$.  
\end{proof}
%%%%%%%%%%%%%%%%%%%%%%%%%%%%%%%%%%%%%%%%
\subsection{The special case $n_3=n_5$}
In this section, the special case when the second metal layer is embedded in two dielectrics that have 
the same index of refraction, i.e., $n_3=n_5$ is considered.
 In this case $a_5-1=0$,  hence \eqref{x3} decouples from \eqref{x1} and \eqref{x2}, in the
sense that
\[
x_3(t)=-\frac{r_4}{a_0} x_2(t-\tau).
\]
Due to this decoupling, there is no need to use singular perturbation, as the hybrid system (\ref{DDEf3}) 
reduces to a delay differential system with two delays, $\tau$ and $2\tau,$
\begin{equation}\label{DDEsystem_2delays}
\dot x(t)
= A x(t)
+ B x(t-\tau) +  C x(t-2\tau)+h(t), \quad t\geq 0,
\end{equation}
where
\begin{align}\label{DDEsystem_2delays_coeff}
x(t)&=
\left(
	\begin{array}{c}
		x_1(t) \\
		x_2(t) \\
	\end{array}
\right),\
h(t)=
a_1 a_0\left(
	\begin{array}{c}
		F(t) \\
		F(t-\tau) \\
	\end{array}
\right),\ 
C=
\left(
	\begin{array}{cc}
		  0 & 0\\
		 0 & -a_2 r_4 \\
	\end{array}
\right),
\nonumber\\[9pt]
A&=
\left(
	\begin{array}{cc}
		-a_1 r_2 & 0 \\
		 0 & -(a_1-a_2) r_4\\
	\end{array}
\right),\
B=
\left(
	\begin{array}{cc}
		  0 & -a_1 r_4\\
		 -a_1 r_2 & 0 \\
	\end{array}
\right).
\end{align}
In this special case, the reflected and transmitted waves $f_1(t)$ and $g_5(t)$ are
\begin{equation}\label{eq:36_refl}
f_1(t)
= \frac{1}{c_1+c_3}
\left[(c_3-c_1)F(t)
-2 c_3
\frac{r_2}{a_0}x_1 (t)
-2 c_3 \frac{r_4}{a_0}x_2(t-\Delta t_3)
\right],
\end{equation}
\begin{align}
g_5(t)
= \frac{2 c_1}{c_1+c_3}
\left[F(t)
-
\frac{r_2}{a_0} x_1(t)
\right]
-\frac{c_1-c_3}{c_1+c_3} \frac{r_4}{a_0}x_2(t-\Delta t_3).
\label{eq:36_tran}
\end{align}
Applying the Laplace transform to the system \eqref{DDEsystem_2delays} results in
\begin{equation}\label{eq:36_an_2}
X(s)
= \Delta^{-1}(s)\left[
\phi(0) + B\Phi(s) + C\Phi(s)+H(s)
\right],
\end{equation}
where $\Delta(s)$ is the characteristic matrix, defined by
\begin{equation}\label{char_matrix}
\Delta(s) = sI-A-Be^{-\tau s}-Ce^{-2\tau s},
\end{equation}
with $I$ the identity matrix, and where $\phi\in C([-2\tau,0],\mathbb{R}^2)$ is the initial function. Taking the inverse Laplace transform of \eqref{eq:36_an_2}, the 
solution of the inhomogeneous initial value problem  \eqref{DDEsystem_2delays} is
\begin{equation}\label{solution_inverseLaplace}
x(t)=\int_{(a)}e^{st}\Delta^{-1}(s)
 \left[
\phi(0) + B\Phi(s) + C\Phi(s)+H(s)
\right] ds,
\end{equation}
for any sufficiently large constant $a>\sup\{\Re(s)\mid \det\Delta(s)=0\}$.  
If $\tilde x(t)$ is the fundamental matrix solution of the homogeneous problem, then the solution $x(t)=x(t;\phi,h)$ in 
\eqref{solution_inverseLaplace} is 
\begin{align}\label{solution_inverseLaplace1}
x(t)=&\tilde x(t)\phi(0) + \int_0^\tau \tilde x(t-\theta) B\phi(\theta-\tau)d\theta 
+ \int_0^{2\tau} \tilde x(t-\theta) C\phi(\theta-2\tau)d\theta \nonumber\\
&+\int_0^t\tilde x(t-\theta)h(\theta)d\theta,\quad t\geq 0.
\end{align}

When studying the asymptotic behavior of the solution of the system \eqref{DDEsystem_2delays}, as $t\to\infty$,  the location of the roots of the characteristic equation, i.e., the elements of 
$\sigma=\{\lambda\in\mathbb{C}\mid \det\Delta(\lambda)=0\}$ must be known. 
\begin{lemma}\label{l:spectrum}
For the roots of the characteristic equation, the followings hold
\begin{itemize}
\item[(a)] $\lambda=0\in \sigma$ and it is a simple root. 
\item[(b)] $\forall \lambda\in\sigma\setminus\{0\}$,  $\Re(\lambda)< 0$. 
\end{itemize}
\end{lemma}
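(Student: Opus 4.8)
The plan is to reduce Lemma~\ref{l:spectrum} to Lemma~\ref{l:spectrumPS}. Under the hypothesis $n_3=n_5$, Snell's law $n_3\sin\theta_3=n_5\sin\theta_5$ gives $\theta_3=\theta_5$, hence $c_3=c_5$ and $a_5=2c_5/(c_5+c_3)=1$. First I would write out the characteristic matrix of the reduced two-delay system \eqref{DDEsystem_2delays}: with $A$, $B$, $C$ as in \eqref{DDEsystem_2delays_coeff},
\[
\Delta(\lambda)=\begin{pmatrix} \lambda+a_1r_2 & a_1r_4e^{-\tau\lambda}\\ a_1r_2e^{-\tau\lambda} & \lambda+(a_1-a_2)r_4+a_2r_4e^{-2\tau\lambda}\end{pmatrix},
\]
so that
\[
\det\Delta(\lambda)=(\lambda+a_1r_2)\bigl(\lambda+(a_1-a_2)r_4+a_2r_4e^{-2\tau\lambda}\bigr)-a_1^2r_2r_4e^{-2\tau\lambda},
\]
and $\det\Delta(\lambda)=0$ rearranges to
\[
\bigl(\lambda+(a_1-a_2)r_4\bigr)(\lambda+a_1r_2)=-r_4e^{-2\tau\lambda}\bigl((\lambda+a_1r_2)a_2-a_1^2r_2\bigr).
\]
This is exactly equation \eqref{char_eq_all} with $a_5=1$ inserted: the factor $\lambda+a_5r_4(a_1-a_2)$ becomes $\lambda+(a_1-a_2)r_4$ and $\lambda\frac{a_5-1}{a_1-a_2}-a_5r_4$ collapses to $-r_4$. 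Equivalently, $\det\Delta(\lambda)$ coincides identically with the $\epsilon=0$, $a_5=1$ specialization of $\det\Delta(\epsilon,\lambda)$ in \eqref{char_eq_eps1}.

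For part~(a), substituting $\lambda=0$ into $\det\Delta$ yields $a_1r_2(a_1-a_2)r_4+r_4\bigl(a_1a_2r_2-a_1^2r_2\bigr)=a_1r_2r_4\bigl[(a_1-a_2)+(a_2-a_1)\bigr]=0$, so $0\in\sigma$; and differentiating $\det\Delta$ at $\lambda=0$ gives $a_1\bigl(r_2+r_4+2\tau r_2r_4(a_1-a_2)\bigr)$, which is strictly positive since $a_1>0$, $r_2,r_4>0$, $\tau>0$ and $a_1-a_2=2\pi c_3>0$, so $\lambda=0$ is a simple zero. For part~(b), the hypotheses used in the proof of Lemma~\ref{l:spectrumPS}---namely $a_1>0$, $2-a_5=1>0$, $a_1-2a_2=4\pi^2c_3^2/(c_1+c_3)>0$ and $r_2,r_4>0$---all remain in force at $a_5=1$, so the argument of Step~1 of that proof applies verbatim (with $\epsilon=0$, $a_5=1$): writing $\lambda=x+iy$ and taking absolute-value squares of the rearranged characteristic equation produces functions $l(x)$, increasing on $(0,\infty)$, and $r(x)$, decreasing on $(0,\infty)$, with
\[
l(0)-r(0)=y^2\bigl[y^2+r_4^2a_1(a_1-2a_2)+a_1^2r_2^2\bigr]>0\qquad(y\neq0),
\]
hence $l(x)>r(x)$ for all $x>0$ and the characteristic equation has no root with $\Re\lambda\geq0$ other than $\lambda=0$.

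I do not expect any genuine analytic obstacle: all of the spectral work was already carried out in the proof of Lemma~\ref{l:spectrumPS}, and the passage from $\epsilon=0$ to $\epsilon>0$ (Step~2 there) is not needed here since $\epsilon=0$ throughout. The only point requiring care is the elementary bookkeeping that identifies the $2\times2$ determinant of the reduced system with the $\epsilon=0$, $a_5=1$ case of the three-dimensional characteristic equation; once that identification is checked, the statement follows. If preferred, the two displayed steps above can be written out as a fully self-contained proof without reference to Lemma~\ref{l:spectrumPS}.
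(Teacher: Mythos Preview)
Your reduction is correct: under $n_3=n_5$ one has $a_5=1$, and you verify that the $2\times2$ characteristic equation of \eqref{DDEsystem_2delays} coincides with the $\epsilon=0$, $a_5=1$ instance of \eqref{char_eq_all}. Since the proof of Lemma~\ref{l:spectrumPS} only uses $a_5>0$ and $2-a_5>0$, both of which hold at $a_5=1$, the lemma applies and Lemma~\ref{l:spectrum} follows. Your simplicity computation for part~(a) and your specialization of $l(0)-r(0)$ both check out.

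The paper, however, does \emph{not} argue by reduction to Lemma~\ref{l:spectrumPS}; it gives a direct, self-contained proof in Appendix~\ref{app:proof_lemma}, essentially redoing the absolute-value-square comparison from scratch. One point worth flagging: in that direct proof the paper does \emph{not} claim that $r(x)$ is strictly decreasing on $(0,\infty)$, as you do (echoing the assertion in Step~1 of Lemma~\ref{l:spectrumPS}). Instead it computes $r'(x)=-2r_4^2e^{-4\tau x}h(x)$ for a quadratic $h$ and, when $h$ has a positive root $x_2$, it must additionally verify $l(x_2)>r(x_2)$ via a case split on the sign of $a_2$. So if you want a fully self-contained argument rather than a citation of Lemma~\ref{l:spectrumPS}, the monotonicity claim for $r$ needs this extra care; the bare assertion that $r$ is decreasing on $(0,\infty)$ is not in general true at $a_5=1$. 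Your black-box reduction to Lemma~\ref{l:spectrumPS} sidesteps this and is the more economical route; the paper's direct proof has the advantage of being independent of the general lemma.
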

\begin{proof}
The proof of this lemma can be found in  \ref{app:proof_lemma}.
\end{proof}
%%%%%%%%%%%%%%%%%%%%%%%
Using Lemma \ref{l:spectrum}, the fundamental matrix 
solution of the system is
\begin{equation}\label{solution}
\tilde x(t)=\int_{(-\alpha)}e^{st}\Delta^{-1}(s)ds+\underset{s=0}{\text{Res}}\,e^{st}\Delta^{-1}(s)=
\int_{(-\alpha)}e^{st}\Delta^{-1}(s)ds+M, 
\end{equation}
where $M$ is a constant matrix. Moreover, 
\begin{equation}
(A+B+C)M=0,
\end{equation} 
which determines 
\[
M=\begin{pmatrix}
m_{1} & -\frac{r_4}{r_2}m_{2}\\[5pt]
-\frac{r_2}{r_4}m_{1} & m_{2}
\end{pmatrix},
\]
with $m_{1}$ and $m_{2}$ non-zero real parameters. Furthermore, denoting the integral in \eqref{solution} by $N(t)$,  the estimate
\begin{equation}\label{estimateN}
\|N(t)\|=\|\int_{(-\alpha)}e^{st}\Delta^{-1}(s)ds\|\leq Ke^{-\alpha t}
\end{equation}
holds for some $K>0$,  where $\alpha>0$.  

The following corollary summarizes the main results and the proof is completely identical to the proof of 
Lemma \ref{l:asymptotic_behavior_eps}.
%%%%%%%%%%%%%%%%%%%%%%%%%%%%%%%%%%%%%%%%%%%%%%%%%%%%%%%%%%
\begin{corollary}\label{l:asymptotic_behavior}
Suppose $h:[0,\infty)\to \mathbb{R}^2$ is a given exponentially bounded function, i.e., there are $K_1>0$,  $\beta>0$ constants 
such that
\[
\|h(t)\|\leq K_1 e^{-\beta t},\ t\geq 0.
\]
Then the asymptotic behavior of the solution $x(t)=x(t;\phi,h)$ of  \eqref{DDEsystem_2delays}, with given initial function 
$\phi\in C([-2\tau,0])$,  as $t\to\infty$ is
\begin{equation}\label{solution_limit}
\lim_{t\to\infty}x(t)=M\left[\phi(0) + B\int_0^\tau \phi(\theta-\tau)d\theta  %\nonumber\\
+ C\int_0^{2\tau} \phi(\theta-2\tau)d\theta+\int_0^\infty h(\theta)d\theta\right].
\end{equation}
\end{corollary}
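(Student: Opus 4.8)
The plan is to carry over, essentially verbatim, the argument that established Lemma~\ref{l:asymptotic_behavior_eps}, now invoking the spectral description in Lemma~\ref{l:spectrum} in place of Lemma~\ref{l:spectrumPS}. The starting point is the variation-of-constants representation \eqref{solution_inverseLaplace1} of the solution $x(t)=x(t;\phi,h)$ together with the decomposition \eqref{solution} of the fundamental matrix solution, $\tilde x(t)=M+N(t)$, where $M$ is the residue of $e^{st}\Delta^{-1}(s)$ at the simple root $\lambda=0$ and $N(t)$ is the contour integral over the line $\Re s=-\alpha$, which obeys the exponential bound \eqref{estimateN}, $\|N(t)\|\le Ke^{-\alpha t}$ with $\alpha>0$. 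Substituting $\tilde x(t-\theta)=M+N(t-\theta)$ into \eqref{solution_inverseLaplace1} splits each of the four terms into an ``$M$-part'' and an ``$N$-part''.

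Next I would pass to the limit $t\to\infty$ term by term. Since $N(t)\to 0$, the free term tends to $M\phi(0)$. In the two finite-interval integrals $\int_0^\tau N(t-\theta)B\phi(\theta-\tau)\,d\theta$ and $\int_0^{2\tau} N(t-\theta)C\phi(\theta-2\tau)\,d\theta$, the integrand tends to zero uniformly in $\theta$ on the compact intervals $[0,\tau]$ and $[0,2\tau]$ (because $t-\theta\to\infty$ uniformly there and $\phi$ is bounded), so only the $M$-contributions $MB\int_0^\tau\phi(\theta-\tau)\,d\theta$ and $MC\int_0^{2\tau}\phi(\theta-2\tau)\,d\theta$ survive. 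For the forcing convolution I would write $\int_0^t(M+N(t-\theta))h(\theta)\,d\theta$; the hypothesis $\|h(t)\|\le K_1e^{-\beta t}$ makes $h\in L^1([0,\infty))$, so $M\int_0^t h(\theta)\,d\theta\to M\int_0^\infty h(\theta)\,d\theta$, while
\[
\Bigl\|\int_0^t N(t-\theta)h(\theta)\,d\theta\Bigr\|\le KK_1\int_0^t e^{-\alpha(t-\theta)}e^{-\beta\theta}\,d\theta\longrightarrow 0,
\]
exactly as computed in the proof of Lemma~\ref{l:asymptotic_behavior_eps}. Collecting the surviving terms yields the claimed identity \eqref{solution_limit}.

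There is essentially no hard step here: the content is entirely in Lemma~\ref{l:spectrum}, which supplies the spectral gap producing the decaying remainder $N(t)$. The only minor bookkeeping point is the resonant case $\alpha=\beta$ in the last displayed estimate, where $\int_0^t e^{-\alpha(t-\theta)}e^{-\beta\theta}\,d\theta=t\,e^{-\alpha t}$, which is still $o(1)$; in any case $\alpha$ may be chosen to be any positive number smaller than $\min\{-\Re\lambda : \lambda\in\sigma\setminus\{0\}\}$, so one may simply assume $\alpha\neq\beta$ from the outset. Thus the proof is ``completely identical'' to that of Lemma~\ref{l:asymptotic_behavior_eps}, the only differences being the reduced dimension ($\mathbb{R}^2$ rather than $\mathbb{R}^3$) and the absence of the factor $E(\epsilon)$ multiplying $\phi(0)$.
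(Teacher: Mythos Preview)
Your proposal is correct and matches the paper's approach exactly: the paper simply states that the proof is ``completely identical to the proof of Lemma~\ref{l:asymptotic_behavior_eps}'', and you have spelled out precisely that identification, substituting Lemma~\ref{l:spectrum} for Lemma~\ref{l:spectrumPS} and dropping the $E(\epsilon)$ factor. Your remark on the borderline case $\alpha=\beta$ is a small clarification beyond what the paper records.
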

The projection $\mathcal{P}_0$ of $\phi$ onto the one dimensional eigenspace $\mathcal{M}_0$ is
\begin{equation}
\mathcal{P}_0 \phi =G(0)\left[\phi(0)
+B \int_{-\tau}^0 \phi(\theta)d\theta +C \int_{-2\tau}^0 \phi(\theta)d\theta\right],
\end{equation}
where
\begin{equation}\label{g0}
G(0)=\frac{a_1 r_4}{g(0)}
\begin{pmatrix}
1  & -1\\
 -\frac{r_2}{r_4} & \frac{r_2}{r_4}
\end{pmatrix}, \quad
g(0)= a_1(r_2+r_4)+2\tau a_1r_2r_4(a_1-a_2).
\end{equation}
Consequently, the constant matrix $M$ in Corollary \ref{l:asymptotic_behavior} is $M=G(0)$,  i.e.,
\[
m_1=\frac{a_1r_4}{g(0)},\quad m_2=\frac{a_1r_2}{g(0)}.
\]
In the next section,  the long time behavior of the solution for several initial functions is demonstrated using some examples. An analogue to Remark~\ref{r:Pphi} is obtained when the initial function is an eigenvector.
\begin{remark}\label{r:P_id}
In case of constant initial functions $\phi(\theta)=\phi(0)=(\phi_1,\phi_2)^T$,  $-2\tau\leq\theta\leq 0$,  the limit in 
\eqref{solution_limit} becomes
\begin{align}\label{Peigenvector}
\lim_{t\to\infty}x(t)&=\mathcal{P}_0\phi+M\int_0^\infty h(\theta)d\theta\nonumber\\
&= M\left[I+\tau B+2\tau C\right]\phi(0)+M\int_0^\infty h(\theta)d\theta\nonumber\\
&=\frac{r_4(\phi_1d_1+\phi_2 d_2)}{r_2+r_4+r_2r_4(a_1-a_2)2\tau}
\begin{pmatrix}
1\\
-\frac{r_2}{r_4}
\end{pmatrix}+M\int_0^\infty h(\theta)d\theta,
\end{align}
where $d_1=1+a_1r_2\tau$,  $d_2=-1-a_1r_4\tau+ 2\tau a_2r_4$.  Denote the multiplication factor in \eqref{Peigenvector} by
\begin{equation}\label{mf}
c_\phi=\frac{r_4(\phi_1d_1+\phi_2 d_2)}{r_2+r_4+r_2r_4(a_1-a_2)2\tau}.
\end{equation}
In the special case when $\phi$ is an eigenvector corresponding to the zero eigenvalue, i.e., 
$\phi=(1,-r_2/r_4)^T\in\mathcal{N}(\Delta(0))$ then $c_\phi=1$, hence $\mathcal{P}_0 \phi=\phi$.
\end{remark}

Note that, when $n_1=n_3=n_5$, then $a_2=0$ and thus there is only one delay in the system
\begin{equation}\label{DDEoneDelay}
\dot x(t)
= A x(t)
+ B x(t-\tau)+h(t), \quad t\geq 0,
\end{equation}
where $A, B$ and $h$ are as in \eqref{DDEsystem_2delays_coeff} (with  $a_2=0$). An analogue of Lemma~\ref{l:spectrum} 
can also be shown for this case.   
%------------------------------------------------
\section{Time simulations}\label{s:simulations}
In order to demonstrate the size of the parameters that enter into our 
analysis, we consider some illustrative examples, see \cite{Varro2004,Varro2007,Varro2007carrier}. 
The numerical solution of the DDE system is obtained using the Matlab \texttt{dde23} solver. 
For the numerical solution of the HS system, our own solver was implemented in combination with \texttt{dde23}. 
The system parameters are split into three groups: the incoming laser parameters, fixed for 
all simulations; the parameters that determine the size of the delay and the parameters related to the 
embedded metal layers. 

An impinging laser pulse with a Gaussian envelope is assumed
\begin{equation}
F(t)=F_0 e^{-t^2/2\tau^2_0}\cos(\omega_0 t+\phi_0),
\end{equation}
of amplitude $F_0/(V/cm)=27.46\times \sqrt{I_0/(W/cm^2)}$, with intensity $I_0=5.49\times 10^{13}\, W/cm^2$.  
The carrier frequency is $\omega_0=2.36\times 10^{15}s^{-1}$,  which corresponds to a central wavelength 
$\lambda_0=798$ nm for a Ti:Sa laser.
 The given constant pulse duration $\tau_0$ in the envelope function was taken $\tau_0=2T$, which approximately corresponds to a two-cycle pulse. More precisely, for the temporal full-with-at-half-maximum (FWHM) of the pulses we have taken $2\sqrt{\log 2}\cdot 2T$. In most simulations the carrier-envelope (CE) phase difference $\phi_0$ is considered to be zero, but its effects on the dynamics of the system are also briefly discussed. 
These parameters give a dimensionless vector potential 
$a_0=5\times 10^{-3}$,  see \ref{s:dimensionless}.

The size of the time delay $\tau=\Delta t_3$ in \eqref{delays} depends on the distance $h$ between the 
two metal layers (set to $h=10\lambda_0$ in all examples), the indices of refraction $n_1,n_3$ and the angle of incidence $\theta_1.$ Three sets of examples are considered. First, when 
$n_1=1$, $n_3=n_5=1.5$ (glass) and the angle of incidence is 
$\theta_1=\pi/3$. Then the delay time $\tau= 12.25$ optical periods. Second, when a dielectric with 
refractive index $n_3=1.1$ is inserted and $n_1$, $n_5$ and $\theta_1$ are unchanged. These result in 
a delay time $\tau= 6.78$ optical periods. Finally, when the parameters are set such that we can observe the dependence of the reflected flux on the CE phase difference of the incoming few-cycle laser 
pulse. In the first case, the DDE system~\eqref{DDEsystem_2delays} is solved. The second scenario requires 
the solution of the HS system \eqref{eq:36_5}--\eqref{eq:f3_dimless}. Moreover, here $\Delta t_5= 12.25$ optical periods, which is used in the computation of the reflected and transmitted waves.

The thicknesses $l_2$ and $l_4$ and the material of the two layers (characterized by the electron densities ${n_{e}}_2$ 
and ${n_{e}}_4$, respectively) are allowed to be different, with the assumption that the thicknesses are much smaller 
than the skin depth of the incoming radiation field, defined as $\delta_{skin}=c/\sqrt{\omega_{p}^2-\omega_0^2}$. The thickness of the two metal layers were taken first to be the same, $l_2=l_4=\lambda_0/400=2$ nm 
which, corresponding to our assumption, is smaller than $\delta_{skin2}=\delta_{skin4}=1.6\times \lambda_0/100$, when 
the plasma frequency is $\omega_{p_2}=\omega_{p_4}=10\omega_0$ (corresponding to an electron density ${n_{e}}_2={n_{e}}_4=1.8\times 10^{23}/cm^3$). The damping parameters $\Gamma_2, \Gamma_4$ in \eqref{damping2} 
were calculated from the parameters above, $\Gamma_2=\Gamma_4=0.78 \omega_0$.  
The effects of setting thinner layers were also observed.

The  displacements $\delta_{y_2}$ and $\delta_{y_4}$ are then calculated by numerically integrating the velocities $x_1$ and $x_2$,  respectively. %% this brutal chop sounded better to me
\\

\noindent {\em Case 1.} Let $n_1=1$ and $n_3=n_5=1.5$. As in Remark \ref{r:P_id}, when the initial function $\phi$ is constant, then
\begin{equation}\label{lim1}
\lim_{t\to\infty}x_1(t)=c_\phi+\frac{a_1^2a_0}{g(0)}r_4\int_{-\tau}^0F(\theta)d\theta,\quad \lim_{t\to\infty}x_2(t)=-\frac{r_2}{r_4}c_\phi-\frac{a_1^2a_0}{g(0)}r_2\int_{-\tau}^0F(\theta)d\theta, 
\end{equation}
with $c_\phi$ given in \eqref{mf} and $g(0)$ in \eqref{g0}. The terms containing the integrals are small for this parameter set. 
Moreover, the limits of 
the reflected and transmitted waves $f_1$ and $g_5,$ respectively, as $t\to\infty$ can also be calculated by using \eqref{eq:36_refl} and \eqref{eq:36_tran}.
The time evolution of the solution $x_1(t),x_2(t)$ (i.e., the electron velocities $v_2,v_4$) of the resulting DDE system \eqref{DDEsystem_2delays} and the reflected and transmitted waves are plotted in Figure~\ref{fig:v2v4phi0}, where the initial function $\phi$ is constant zero for both components. 
In the larger plots the long time behavior of the solution is plotted, whilst in the small boxes some snapshots 
are taken at the beginning and at the end of the time interval of the simulation. The oscillations 
appear in wave packages of length $\tau$ and for the velocities $v_2$ and $v_4$ they are shifted with respect to each other with a time interval $\tau$. They indicate that the fragmentation is a result of the accumulation of shifted interferences. 
Their 
amplitude is decreasing in time, following the asymptotic behavior described in \eqref{lim1}, where in this case $c_\phi=0.$      
\begin{figure}
 % \centering
%{\includegraphics[width=.5\columnwidth]{spec_cons_zero_f1g5F}}%\label{fig:f1g5Fphi0}}
 {\includegraphics[width=.5\columnwidth]{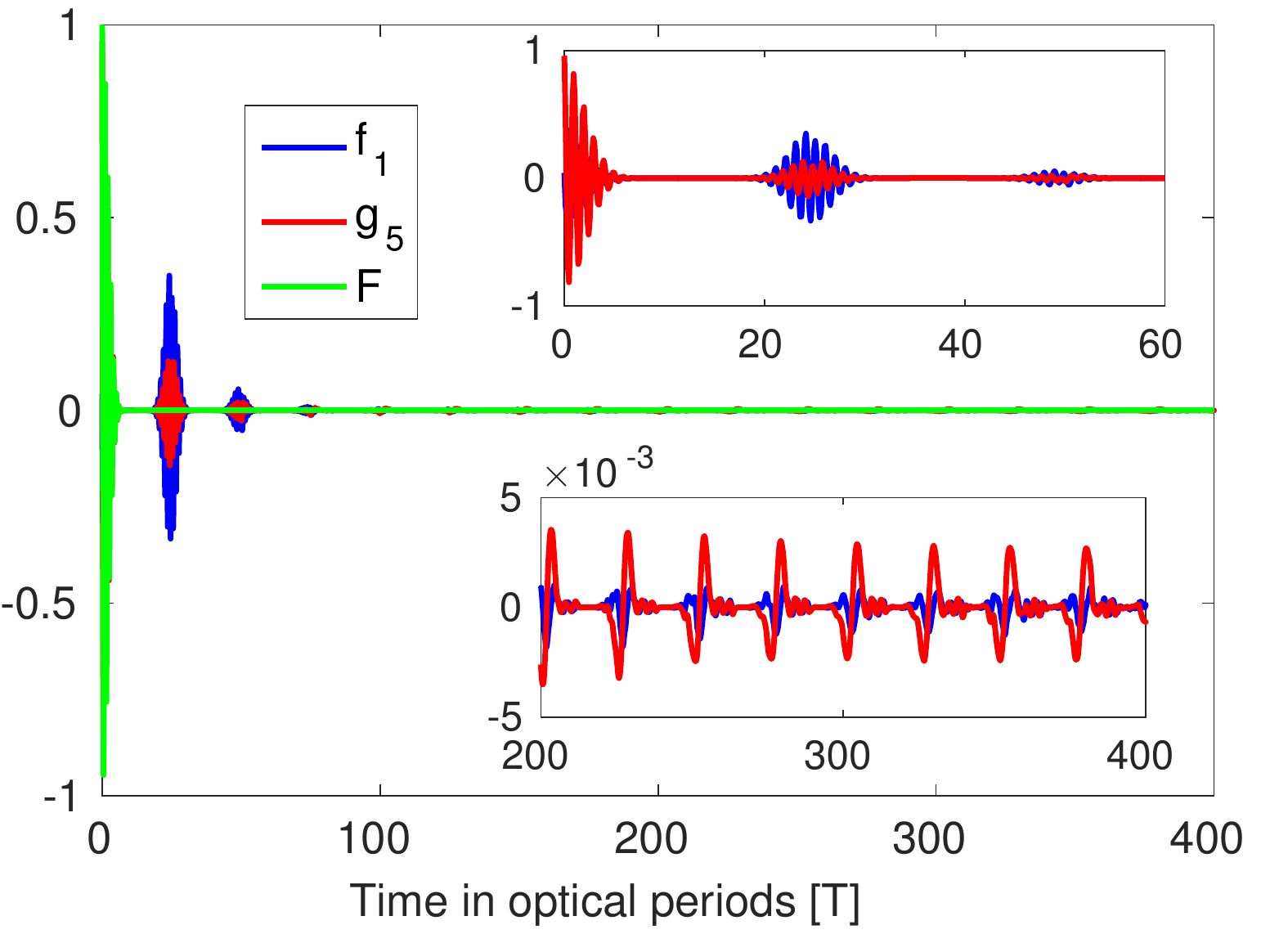}}%\label{fig:f1g5Fphi0}} 
%{\includegraphics[width=.5\columnwidth]{spec_cons_zero_v2v4}} 
{\includegraphics[width=.5\columnwidth]{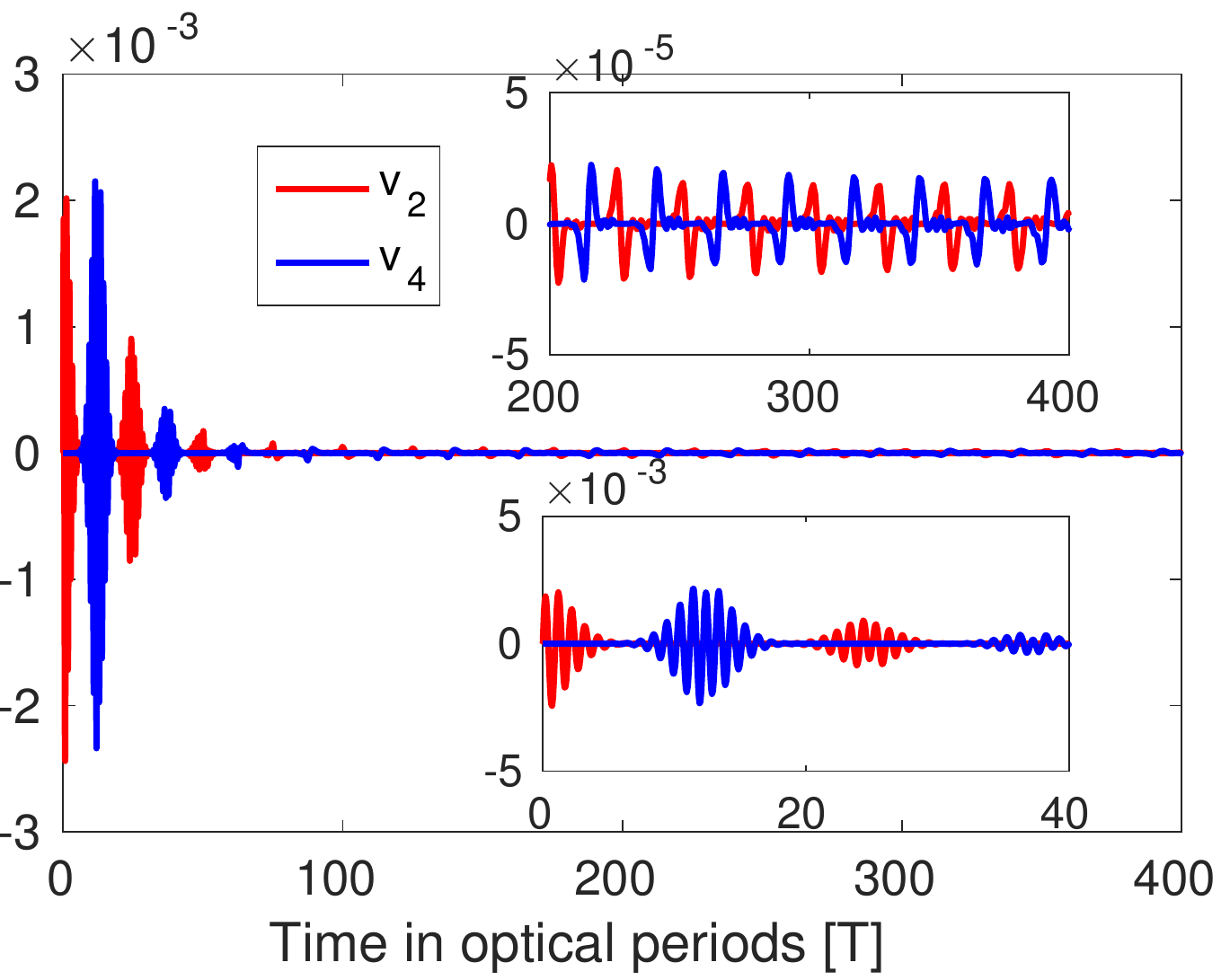}} 
 \caption[]{The solution of the system \eqref{DDEsystem_2delays} (right), the reflected, transmitted and incoming waves (left), with initial function $\phi=(0,0)^T$.   When $n_1=1$,  $n_3=n_5=1.5$,  $\theta_1=\pi/3$ and 
the distance between the layers is $h=10\lambda_0$,  then $\tau=12.25$.  
The layers have thickness $l_i=2$ nm, $i=2,4$. }\label{fig:v2v4phi0}
\end{figure}
Figure~\ref{fig:v2v4phi_id} illustrates the solution of the system when the initial function is the eigenvector 
$\phi=(1,-r_2/r_4)^T$. In this case $c_\phi=1$, and by choosing layers with the same thickness, i.e., $r_2=r_4,$ the limiting behavior \eqref{lim1} can be observed.  
\begin{figure}
  \centering
{\includegraphics[width=.5\columnwidth]{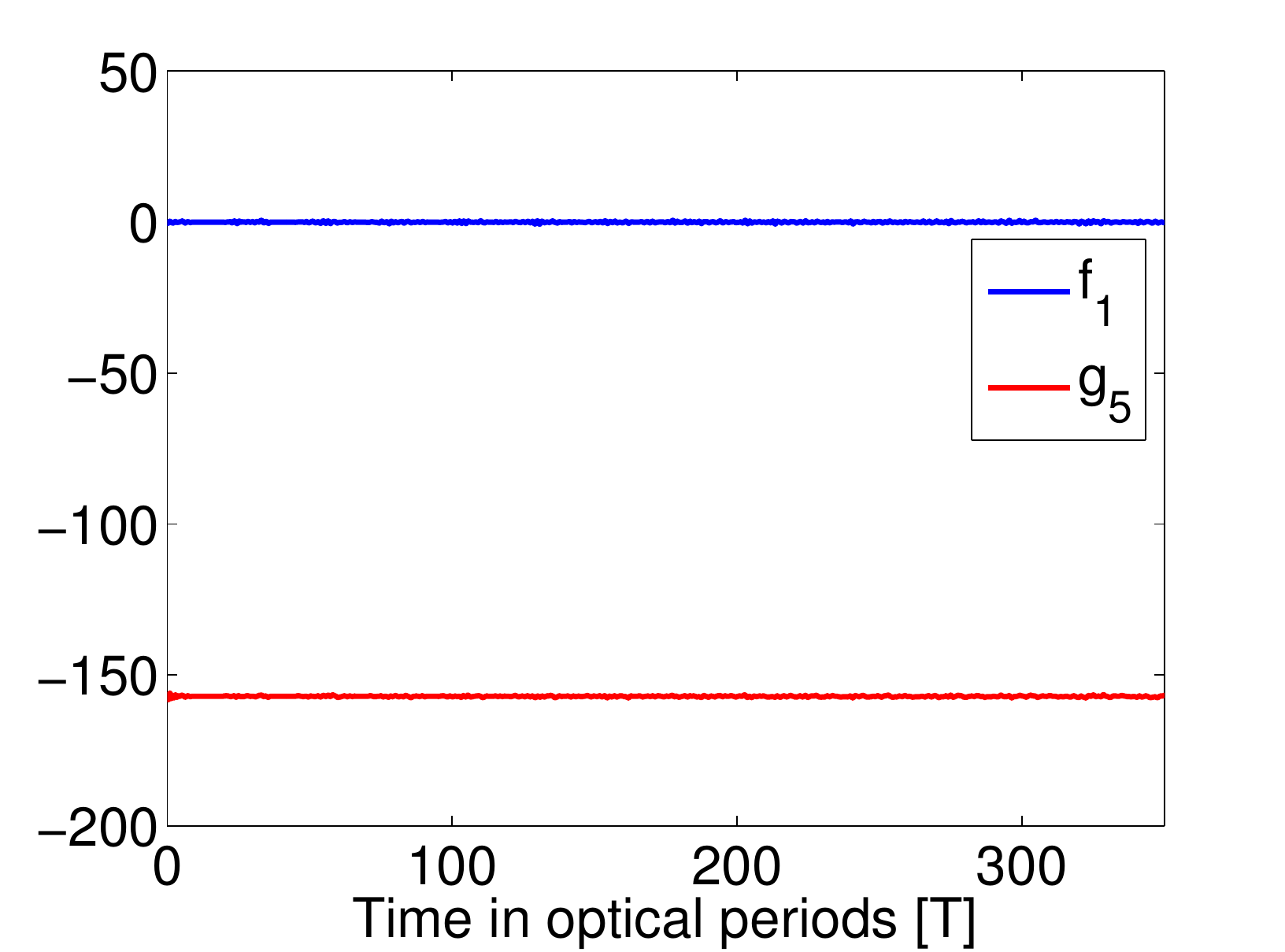}}%\label{fig:f1g5Fphi0}} 
{\includegraphics[width=.5\columnwidth]{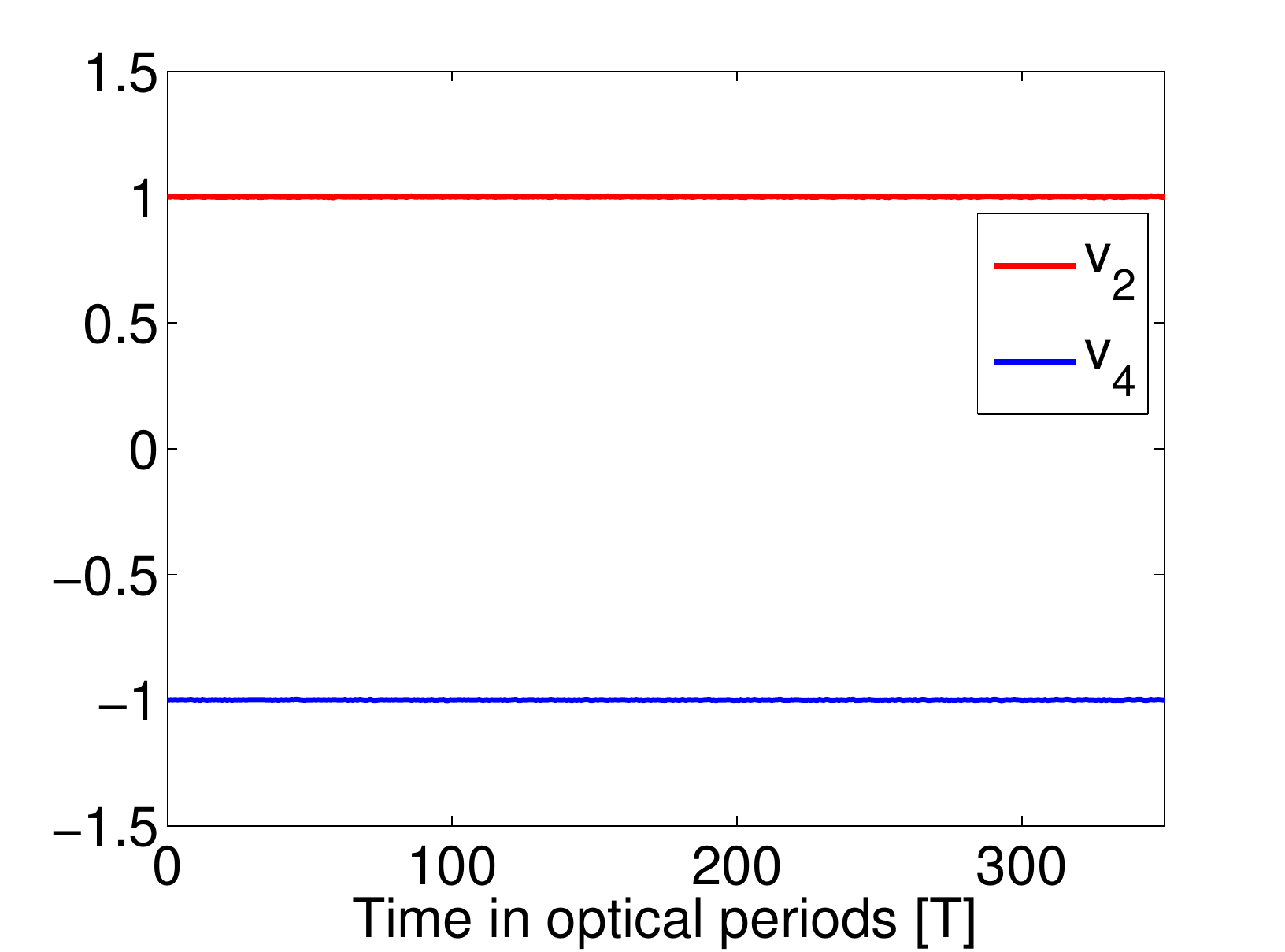}} 
  \caption[]{The solution of the system \eqref{DDEsystem_2delays} (right), the reflected and transmitted waves (left), with initial function $\phi=(1,-1)^T$.  When $n_1=1$,  $n_3=n_5=1.5$,  $\theta_1=\pi/3$ and 
the distance between the layers is $h=10\lambda_0$,  then $\tau=12.25$.  The layers have thickness $l_i=2$ nm,  
$i=2,4$. }\label{fig:v2v4phi_id}
\end{figure} 
In Figure~\ref{fig:v2v4phi_const} the solution of the system is plotted when  starting with the constant initial function $\phi=(0.02,0.01)^T$. The snapshots demonstrate the oscillatory behavior on short time intervals.
\begin{figure}
  \centering
{\includegraphics[width=.49\columnwidth]{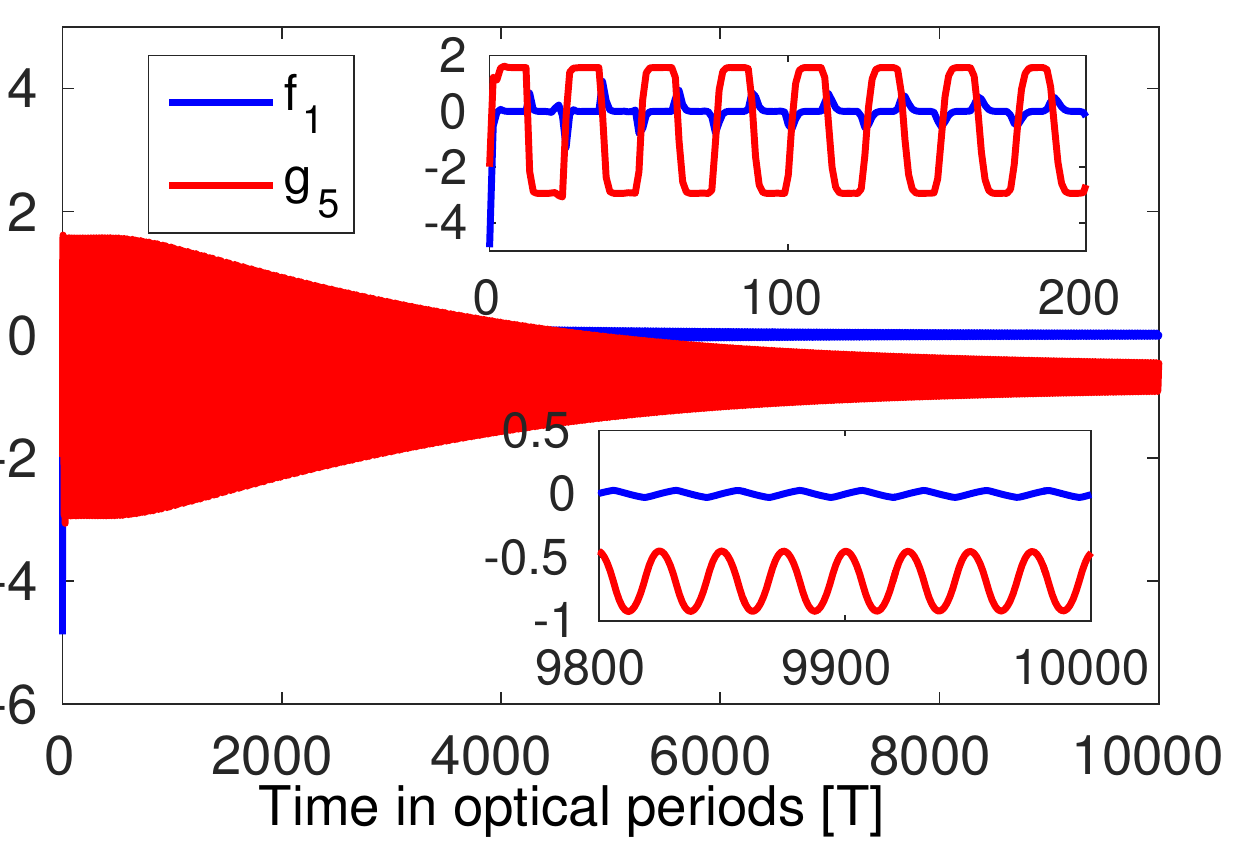}}%{spec_cons_f1g5}}
{\includegraphics[width=.49\columnwidth]{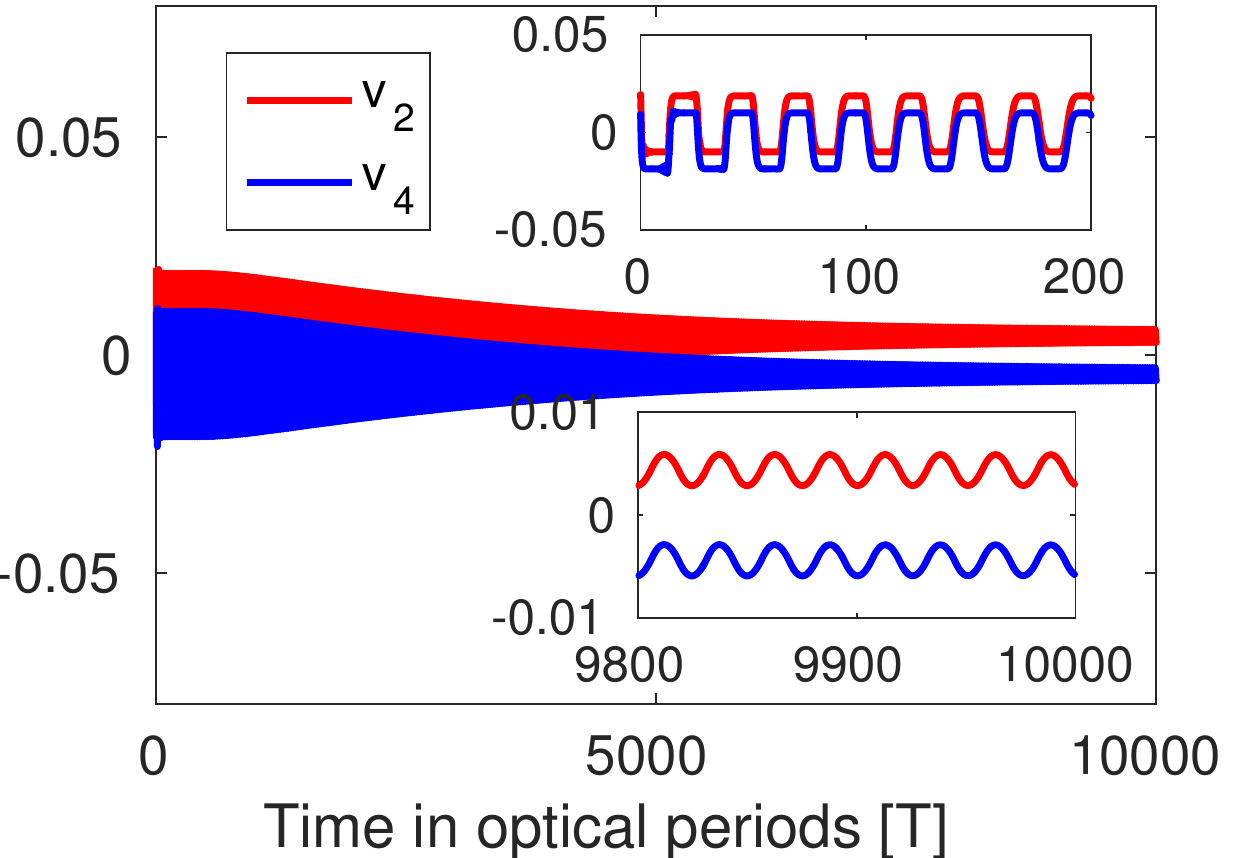}}%{spec_cons_v2v4}} 
  \caption[]{The solution of the system \eqref{DDEsystem_2delays} (right), the reflected and transmitted waves (left), with initial function $\phi=(0.02,0.01)^T$. When $n_1=1$, $n_3=n_5=1.5$,  $\theta_1=\pi/3$ and 
the distance between the layers is $h=10\lambda_0$, then $\tau=12.25$. The layers have thickness $l_i=2$ nm,  
$i=2,4$. }\label{fig:v2v4phi_const}
\end{figure} 

The thickness of the second metal layer is then reduced $l_4=1$ nm whilst the thickness of $l_2$ unchanged. The initial function is constant $\phi=(0.02,0.01)^T$ and the solution of the system is plotted in Figure~\ref{fig:v2v4phi_const_r4}. Compared to the 
previous case, we conclude that the layer thickness greatly influences the limiting behavior.  
\begin{figure}
  \centering
%{\includegraphics[width=.49\columnwidth]{f1g5Fphi_const_r4}}
%{\includegraphics[width=.5\columnwidth]{v2v4phi_const_r4}} 
{\includegraphics[width=.495\columnwidth]{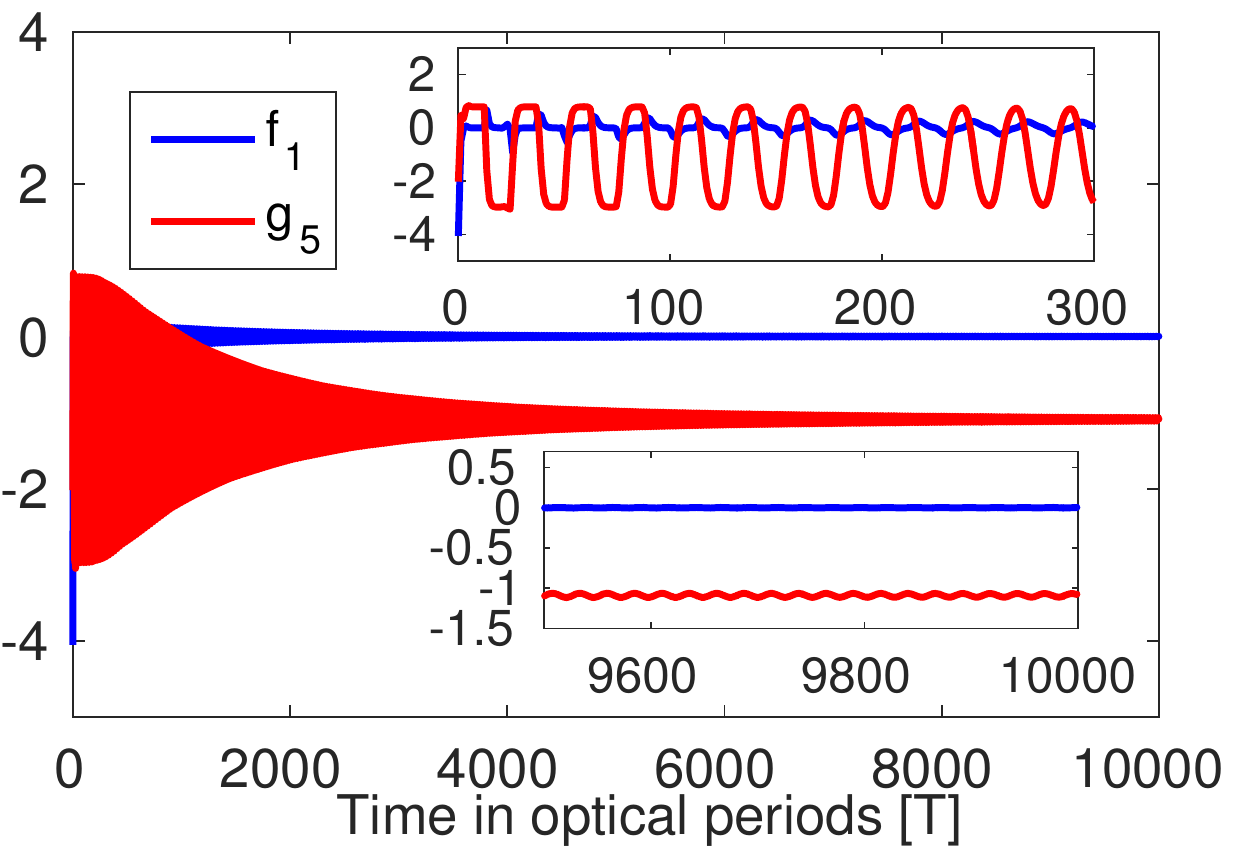}}%{spec_l4decreased_f1g5}} 
{\includegraphics[width=.495\columnwidth]{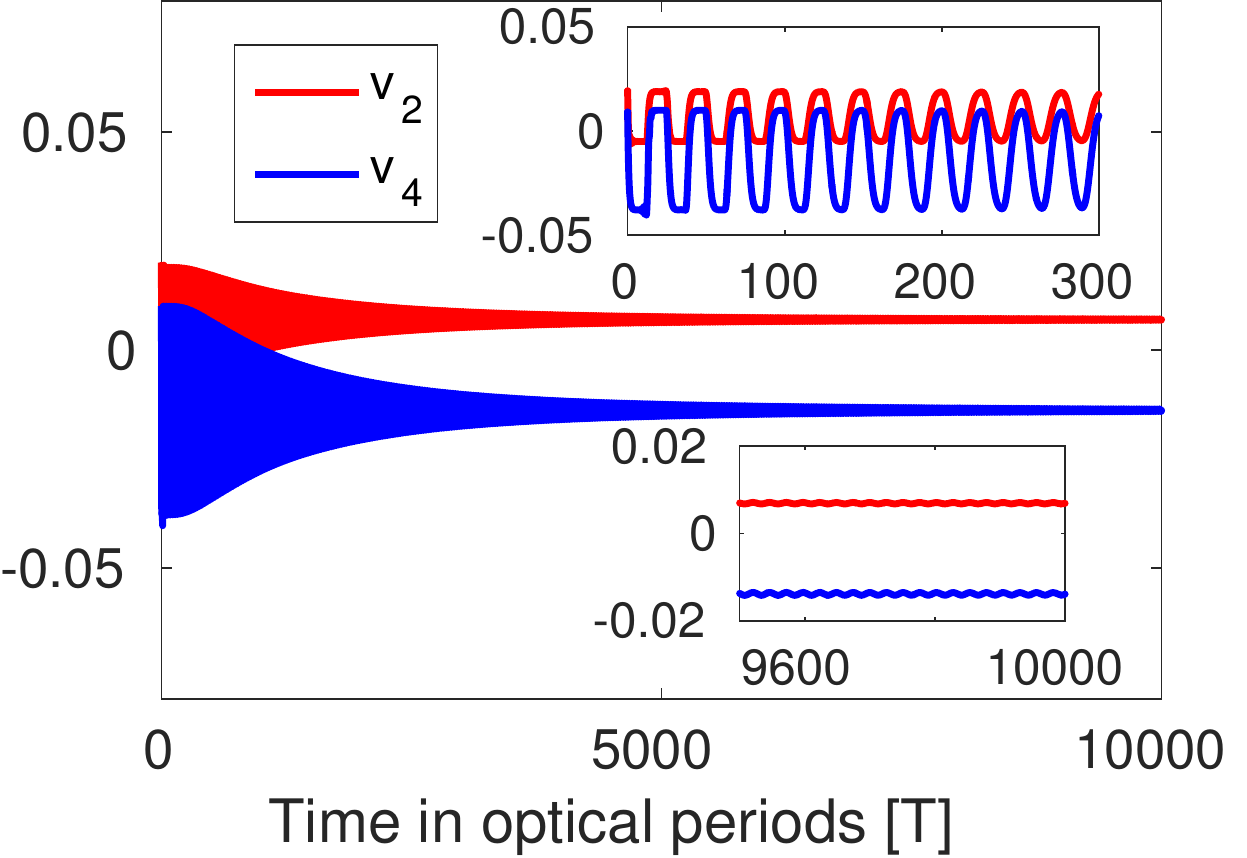}}%{spec_l4decreased_v2v4}} 
\caption[]{The solution of the system \eqref{DDEsystem_2delays}
  (right), the reflected and transmitted waves (left), with initial
  function $\phi=(0.02,0.01)^T$.  When $n_1=1$,  $n_3=n_5=1.5$,  $\theta_1=\pi/3$ and 
the distance between the layers is $h=10\lambda_0$,  then $\tau=12.25$.  The layers have thickness $l_2=2$ nm  and $l_4=1$ nm,  respectively.}\label{fig:v2v4phi_const_r4}
\end{figure}

Finally, Figure~\ref{fig:v2v4phi_time} shows the solution when starting with non-constant initial electron velocities,
$\phi=(1+0.1t, 0.01t)^T$. This is an illustrative example where \eqref{solution_limit} needs to be used to compute the limit.
\begin{figure}
  \centering
{\includegraphics[width=.49\columnwidth]{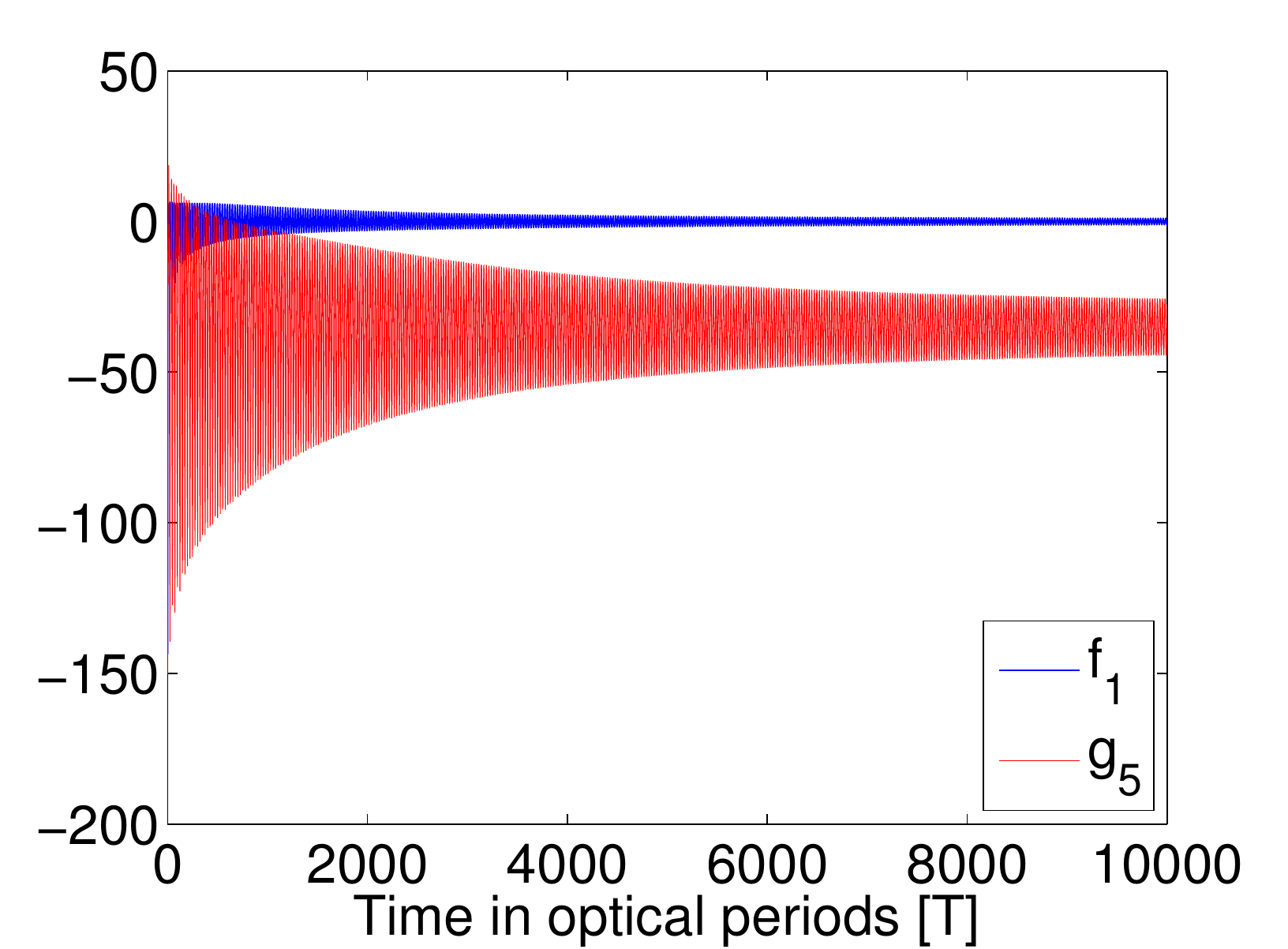}}
{\includegraphics[width=.49\columnwidth]{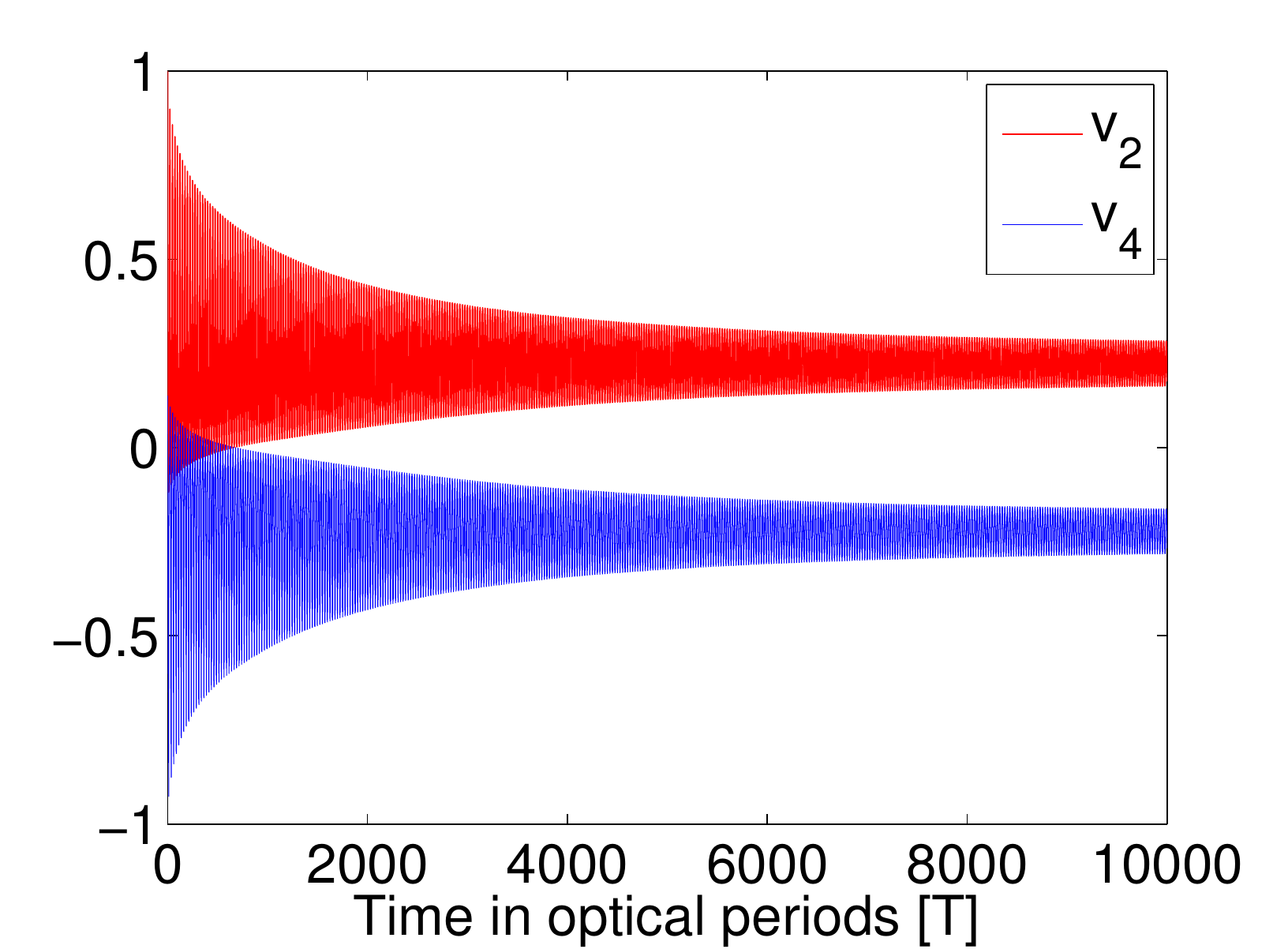}} 
  \caption[]{The solution of the system \eqref{DDEsystem_2delays} (right), the reflected and transmitted waves (left), with initial function $\phi=(1+0.1t, 0.01t)^T$.  When $n_1=1$, $n_3=n_5=1.5$,  $\theta_1=\pi/3$ and 
the distance between the layers is $h=10\lambda_0$,  then $\tau=12.25$.  
The layers have thickness $l_i=2$ nm,  $i=2,4$. }\label{fig:v2v4phi_time}
\end{figure}

\noindent {\em Case 2.} This case considers the scenario when $n_1=1$, $n_3=1.1$ and $n_5=1.5$. The thicknesses of the layers are equal, and the HS system \eqref{DDDE} is solved numerically for three cases: the zero initial function in Figure~\ref{fig:gen_v2v4phi_const0}, 
for the non-zero initial function $\phi=(0.02,0.01,0.03)^T$, Figure~\ref{fig:gen_cons_v2v4}, and 
when it is %% this sounds ok to me!
the eigenvector $\phi=(1,-r_2/r_4, r_2/a_0)^T$ corresponding to the zero eigenvalue, Figure~\ref{fig:gen_v2v4phi_eig}.    
\begin{figure}
  \centering
{\includegraphics[width=.5\columnwidth]{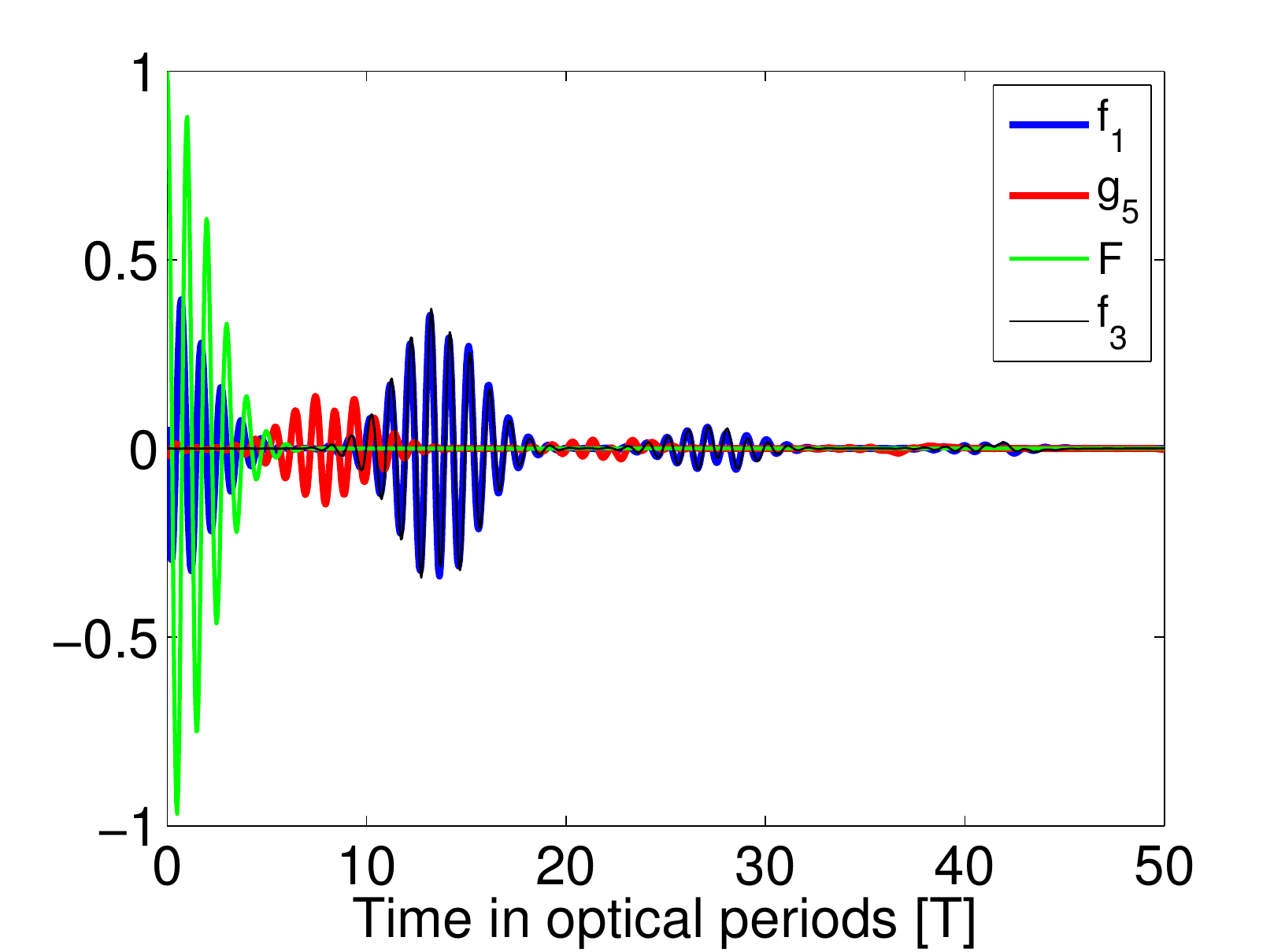}} 
{\includegraphics[width=.49\columnwidth]{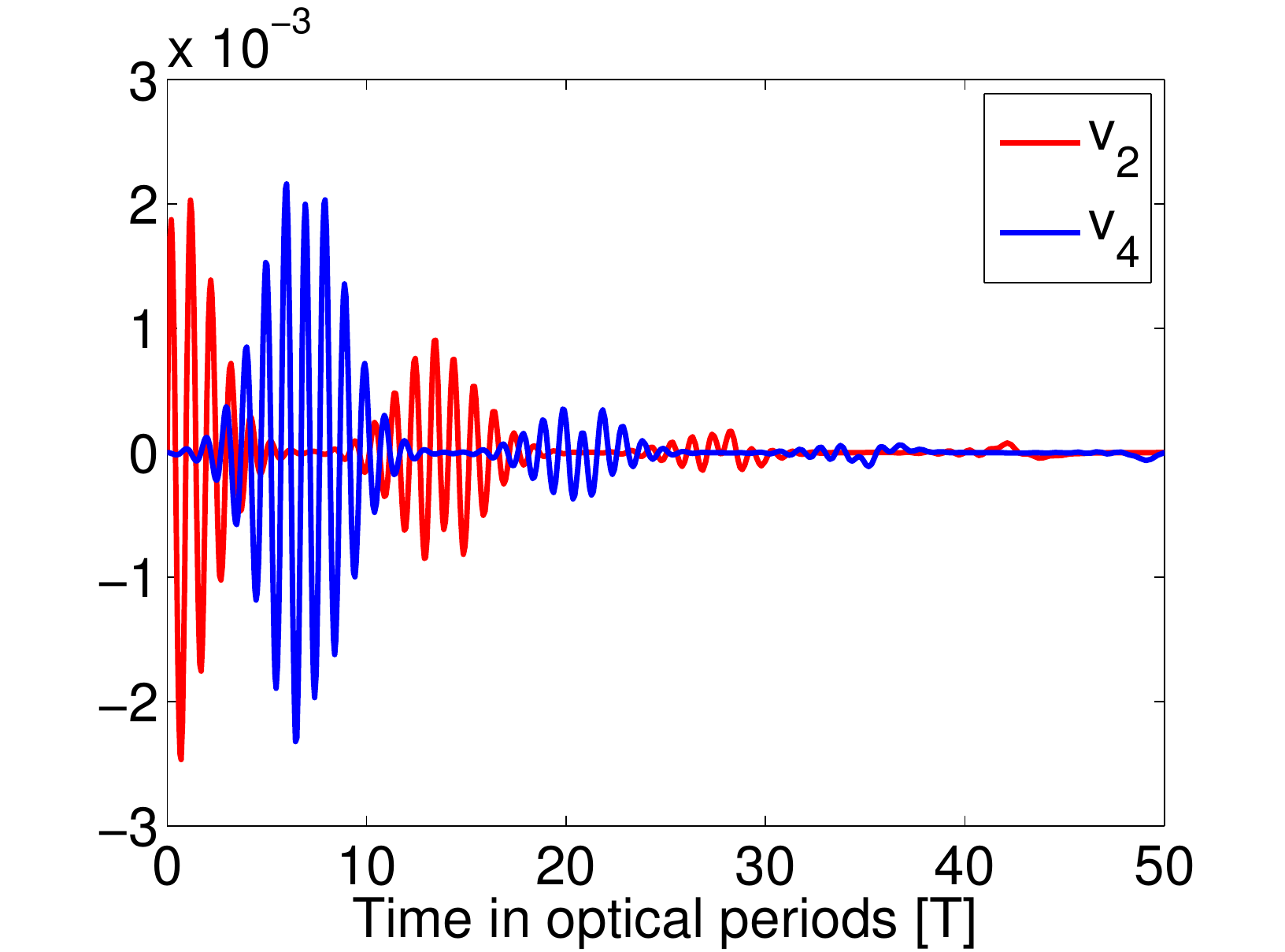}} 
\caption{The solution of the system \eqref{DDDE}
  (right), the reflected and transmitted waves (left), with initial
  function $\phi=(0,0,0)^T$.  When $n_1=1$ $n_3=1.1$,  $n_5=1.5$,  $\theta_1=\pi/3$ and 
the distance between the layers is $h=10\lambda_0$,  then $\tau=6.78$.  
The layers have thickness $l_i=2$ nm,  $i=2,4$. }\label{fig:gen_v2v4phi_const0}
\end{figure}
%%%%%%%%%%%%%%%%%%%%%%%%%
\begin{figure}
  \centering
{\includegraphics[width=.495\columnwidth]{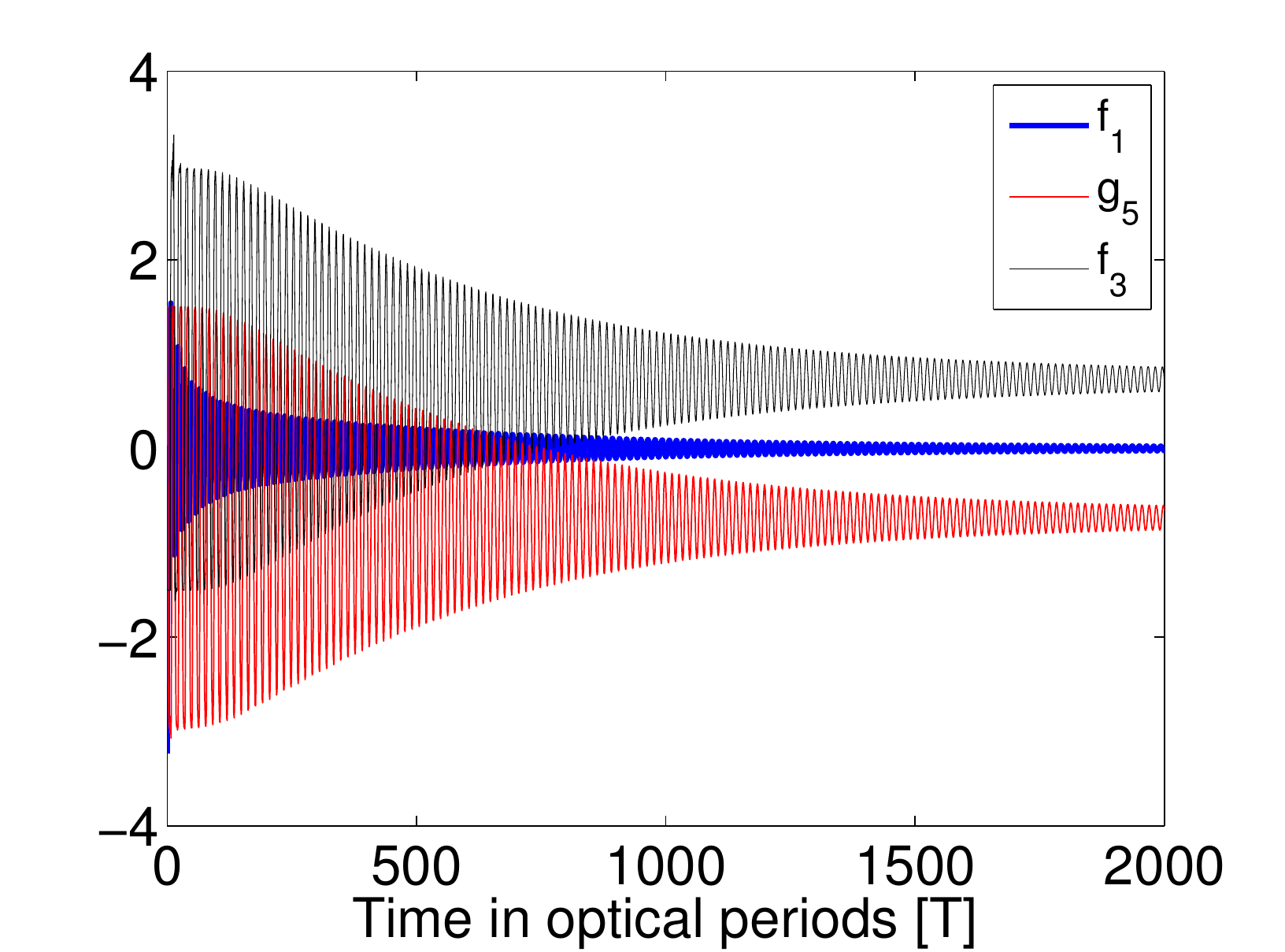}} 
{\includegraphics[width=.495\columnwidth]{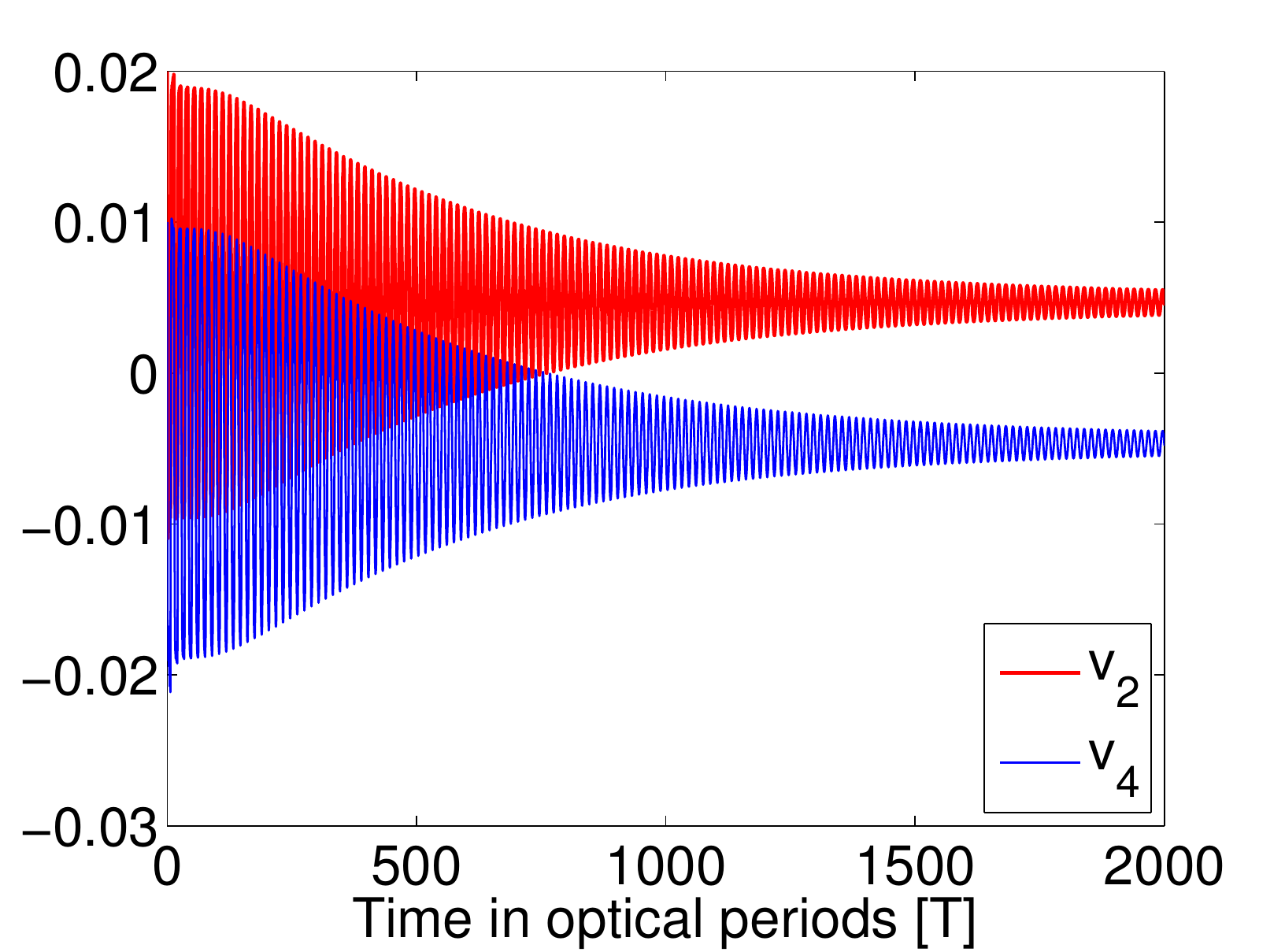}} 
\caption{The solution of the system \eqref{DDDE}
  (right), the reflected and transmitted waves (left), with initial
  function $\phi=(0.02,0.02,0.03)^T$.  When $n_1=1$ $n_3=1.1$,  $n_5=1.5$,  $\theta_1=\pi/3$ and 
the distance between the layers is $h=10\lambda_0$,  then $\tau=6.78$.  
The layers have thickness $l_i=2$ nm,  $i=2,4$. }\label{fig:gen_cons_v2v4}
\end{figure}
%%%%%%%%%%%%%%%%%%%%%%%%%
\begin{figure}
  \centering
{\includegraphics[width=.49\columnwidth]{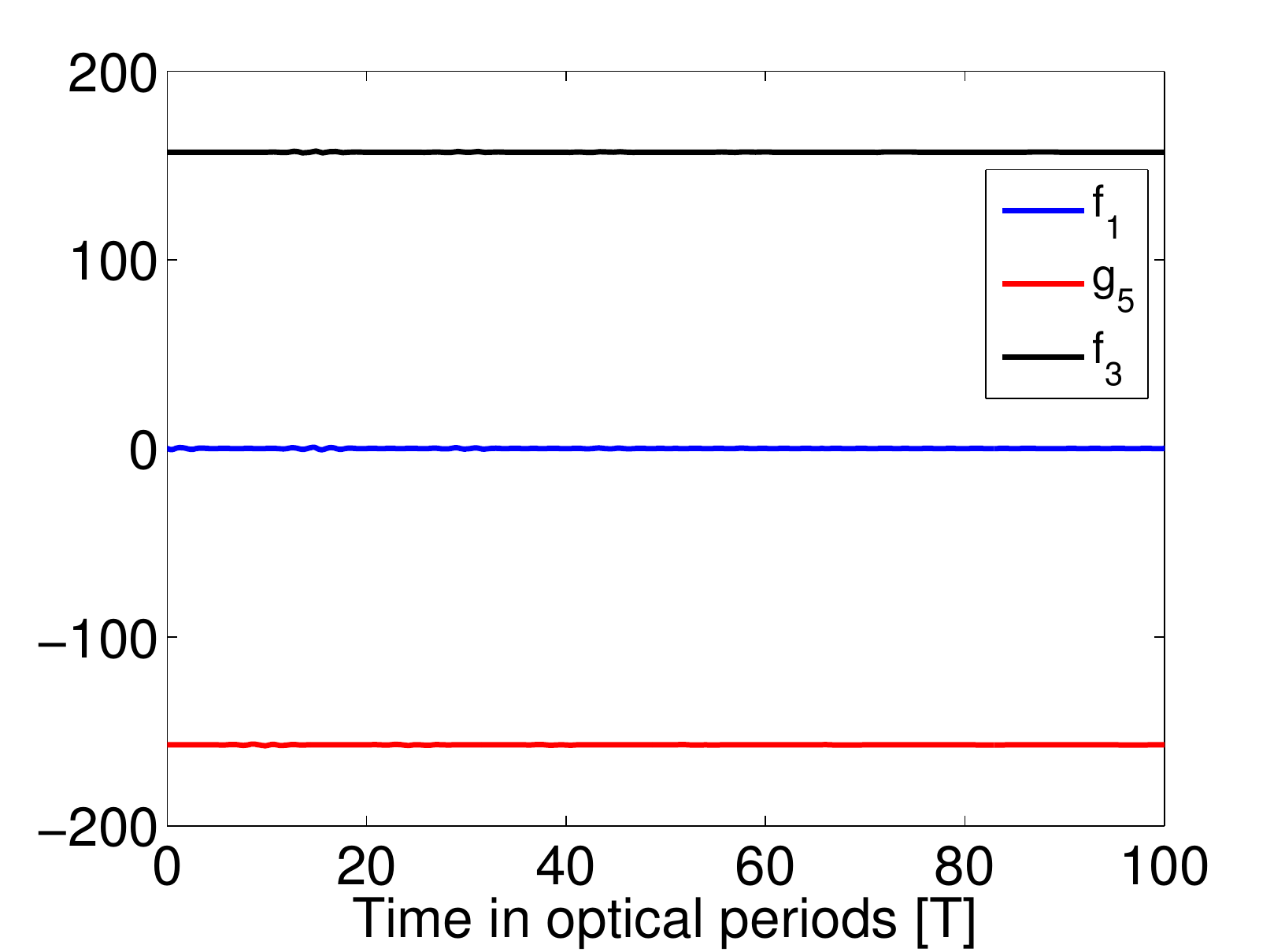}} 
{\includegraphics[width=.49\columnwidth]{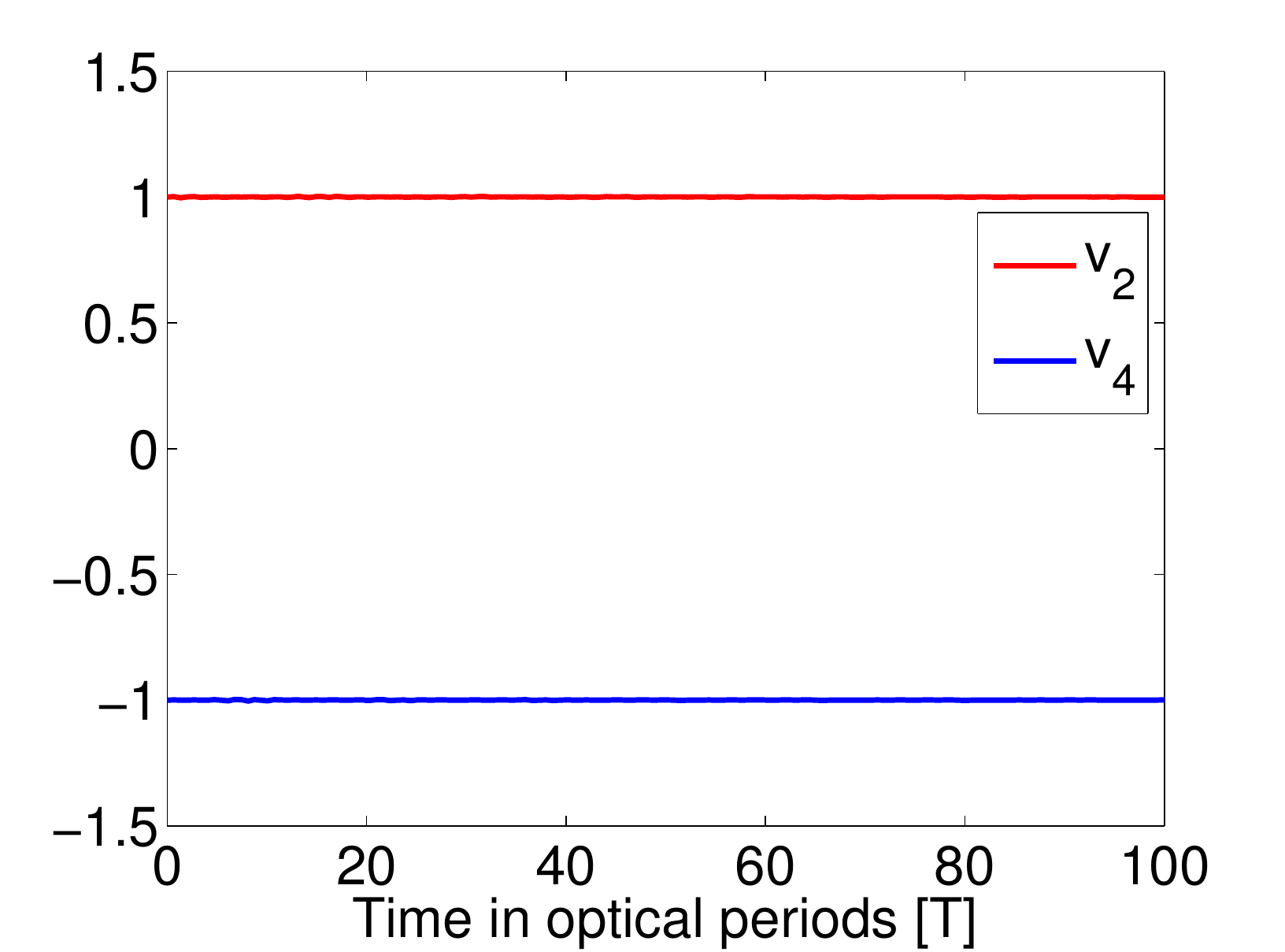}} 
\caption{The solution of the system \eqref{DDDE}
  (right), the reflected and transmitted waves (left), with initial
  function $\phi=(1,-r_2/r_4, r_2/a_0)^T$,  for $n_1=1$ $n_3=1.1$,  $n_5=1.5$. When $n_1=1$ $n_3=1.1$, $n_5=1.5$,  $\theta_1=\pi/3$ and 
the distance between the layers is $h=10\lambda_0$,  then $\tau=6.78$.  
The layers have thickness $l_i=2 nm$, $i=2,4$.}\label{fig:gen_v2v4phi_eig}
\end{figure}
The limiting behaviors were verified in all examples by setting $\epsilon=0$ in Remark~\ref{r:Pphi}.

\noindent {\em Case 3.} The aim of this last set of examples is to briefly analyze the solution in the 
frequency domain. Let $n_1=n_5,$ the incidence angle $\theta_1=\pi/6,$ and the initial function $\phi=(0,0)^T$. 
Here the dependence on the CE phase difference of the reflected flux is observed at relative minimum points of the spectrum. The representation of the solution $x$ of the DDE system \eqref{DDEsystem_2delays} in the frequency domain is obtained by setting the real part of the complex variable $s$ in the Laplace transform to zero, i.e., $X(i\omega)= \Delta^{-1}(i\omega)H(i\omega)$, $\omega\not=0$. By using the information 
on the roots of the characteristic equation, this transform exists. Taking the Laplace transform of the reflected and transmitted waves, we 
obtain that in the frequency domain
\begin{align}
{F_1}(i\omega)&:= \mathcal{L}(f_1)(i\omega)=\frac{1}{c_1+c_3}
\left[(c_3-c_1)\mathcal{L} F(i\omega)
-2 c_3
\frac{r_2}{a_0} X_1 (i\omega)
-2 c_3 \frac{r_4}{a_0}e^{-i\omega \tau} X_2(i\omega)
\right],\nonumber\\
{G_5}(i\omega)&:=\mathcal{L}(g_5)(i\omega)=\frac{2 c_1}{c_1+c_3}
\left[\mathcal{L} F(i\omega)
-
\frac{r_2}{a_0} X_1(i\omega)
\right]
-\frac{c_1-c_3}{c_1+c_3} \frac{r_4}{a_0}e^{-i\omega \tau} X_2(i\omega),
\end{align}
where $X_1$ and $X_2$ are the components of $X.$

% The Fourier transform of the incoming pulse is 
% \begin{align*}
% \hat F(\omega) &= \int_{-\infty}^{\infty} F(t) e^{-i\omega t} dt
% %= \int_{-\infty}^{\infty} e^{-\frac{t^2}{2\tau_0^2}} \cos(\omega_0 t + \phi_0) e^{-i\omega t} dt \\& 
% =F_0e^{-\frac{\tau_0^2}{2}(\omega^2 + \omega_0^2)} \sqrt{2 \pi} \tau_0 \cos(\phi_0 - i \omega \omega_0 \tau_0^2) \\
% &=F_0 e^{-\frac{\tau_0^2}{2}(\omega^2 + \omega_0^2)} \sqrt{2 \pi} \tau_0
% \frac{e^{i\phi_0 + \omega \omega_0 \tau_0^2} + e^{-i\phi_0 - \omega \omega_0 \tau_0^2}}{2}. % \\
% % &= e^{-\frac{\tau_0^2}{2}(\omega^2 + \omega_0^2)} \sqrt{\frac{\pi}{2}} \tau_0
% % \left( e^{i\phi_0 + \omega \omega_0 \tau_0^2} + e^{-i\phi_0 - \omega \omega_0 \tau_0^2} \right)
% \end{align*}
In the first example, all indices of refraction are equal and set to 1.5, and the thickness of the layers are $l_2=2$ nm and $l_4=1$ nm, respectively. The dependence of the reflected, transmitted and incoming flux on the normalized frequency $v=\omega/\omega_0$ is plotted in Figure~\ref{fig:transforms} (left). Considerable modulation always appears at the local minima of the spectrum. In Figure~\ref{fig:CEP_equal}, the modulation of the reflected flux can be observed at the minimum $v=0.98$ and $v=1.06,$ respectively, as a function of the CE phase. The modulation function is (see \cite{Varro2007carrier})
\[
M(\omega)=\frac{I_{max}(\omega)-I_{min}(\omega)}{I_{max}(\omega)+I_{min}(\omega)},
\]
where $I_{max},I_{min}$ are the maximum and minimum of the reflected flux as functions of the 
CE phase, at a given frequency $\omega.$ 
%%%%%%%%%%%%%%%%%%%%%%%%%%%
\begin{figure}
 \centering
{\includegraphics[width=.495\columnwidth]{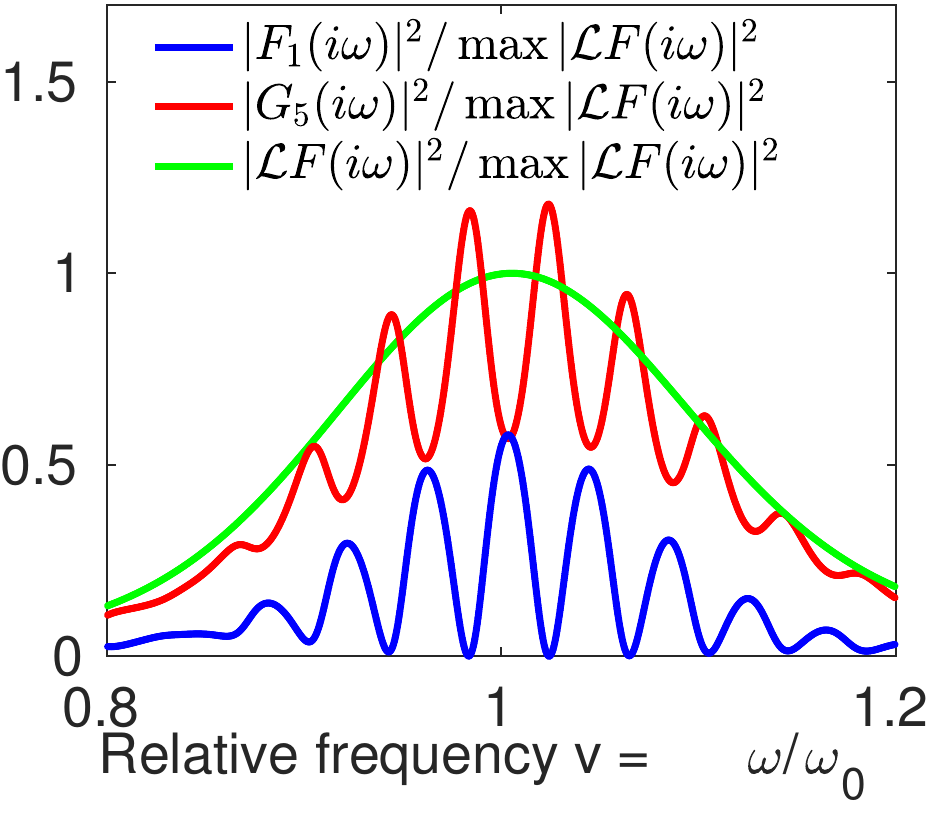}}
{\includegraphics[width=.49\columnwidth]{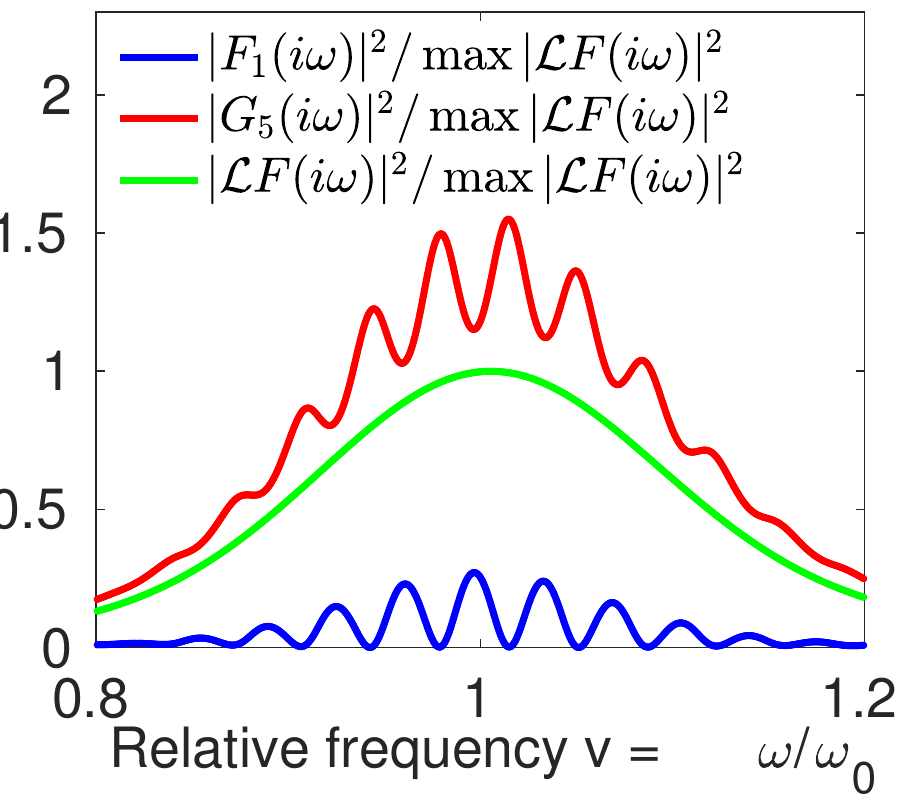}}
%{\includegraphics[width=.49\columnwidth]{Fourier}} 
\caption{The spectrum of the reflected, transmitted and incoming pulses, with $n_1=n_3=n_5=1.5$, 
$\theta_1=\pi/6,$ $l_2=2$ nm and $l_4=1$ nm (left) and $n_1=1,$ $n_3=n_5=1.5$, $\theta_1=\pi/8,$ $l_2=l_4=0.6$ nm (right).}\label{fig:transforms}
\end{figure}
%%%%%%%%%%%%%%%%%%%%
\begin{figure}
 \centering
{\includegraphics[width=.49\columnwidth]{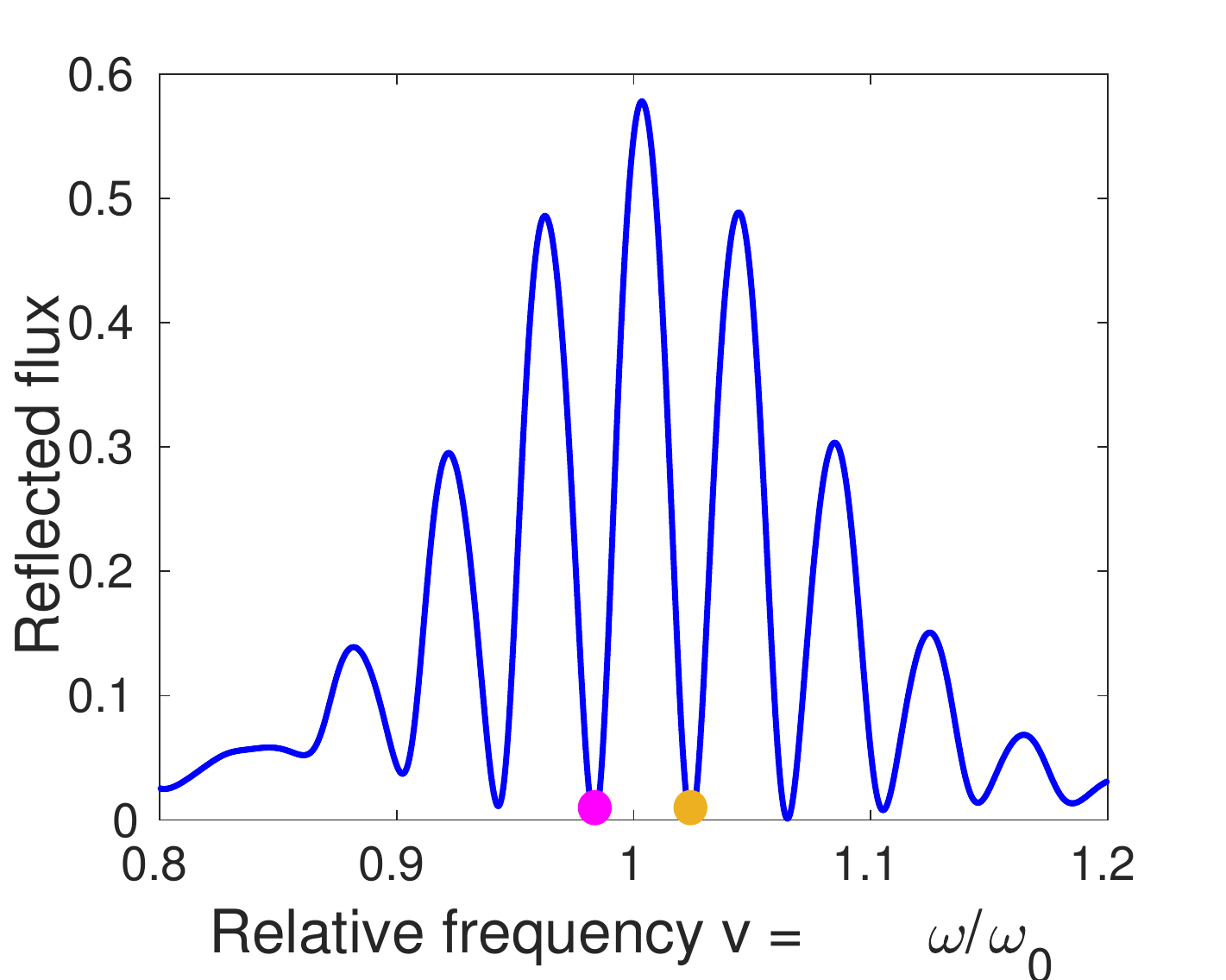}} 
{\includegraphics[width=.49\columnwidth]{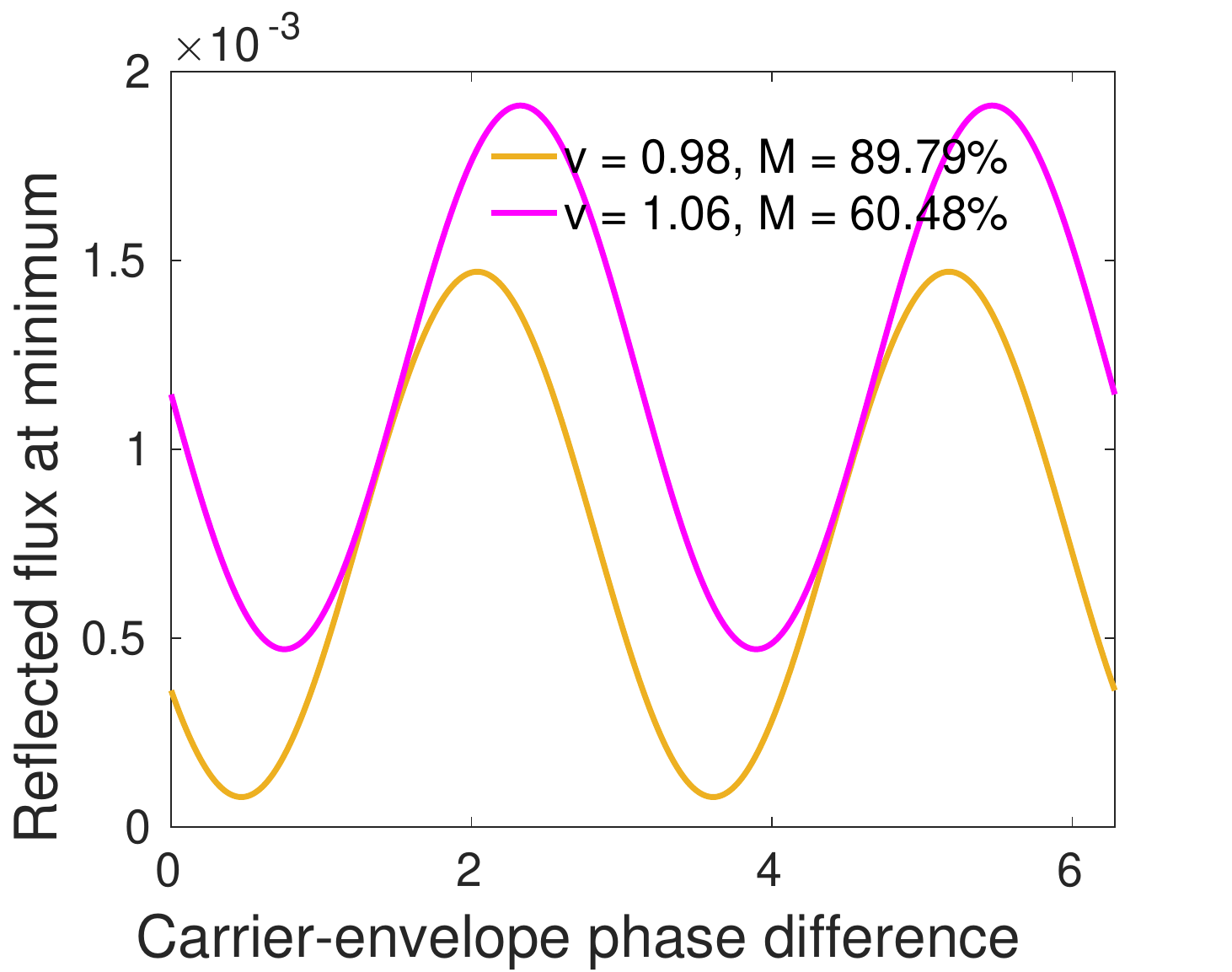}} 
\caption{Dependence of the reflected flux on the CE phase difference at the minimum $v=0.98$ and $v=1.06,$ respectively. The refraction indices are $n_1=n_3=n_5=1.5,$ $\theta_1=\pi/6$ and 
the thickness of the layers is $l_2=2$ nm and $l_4=1$ nm, respectively.}\label{fig:CEP_equal}
\end{figure}
%%%%%%%%%%%%%%%%%%%%%%%%%

In the second example, we set the indices of refraction to $n_1=1,$
$n_3=n_5=1.5$ and the thickness of the layers to $l_2=l_4=0.6$ nm. The
spectrum is plotted in Figure~\ref{fig:transforms} (right) and the CE
phase dependence of the modulation at the indicated minimum frequencies in Figure~\ref{fig:CEP}. We can observe that the modulation of the side bands is the highest at $v=0.98$ with a value $M=89.79\%$ in the first case and at $v=1.05$ with a value $M=87.81\%$ in the second example. This means that such modulations should be measurable effects for a few-cycle laser pulse.
 \begin{figure}
 \centering
%{\includegraphics[width=.49\columnwidth]{reflected.pdf}} 
%{\includegraphics[width=.49\columnwidth]{CEP_new.pdf}} 
{\includegraphics[width=.49\columnwidth]{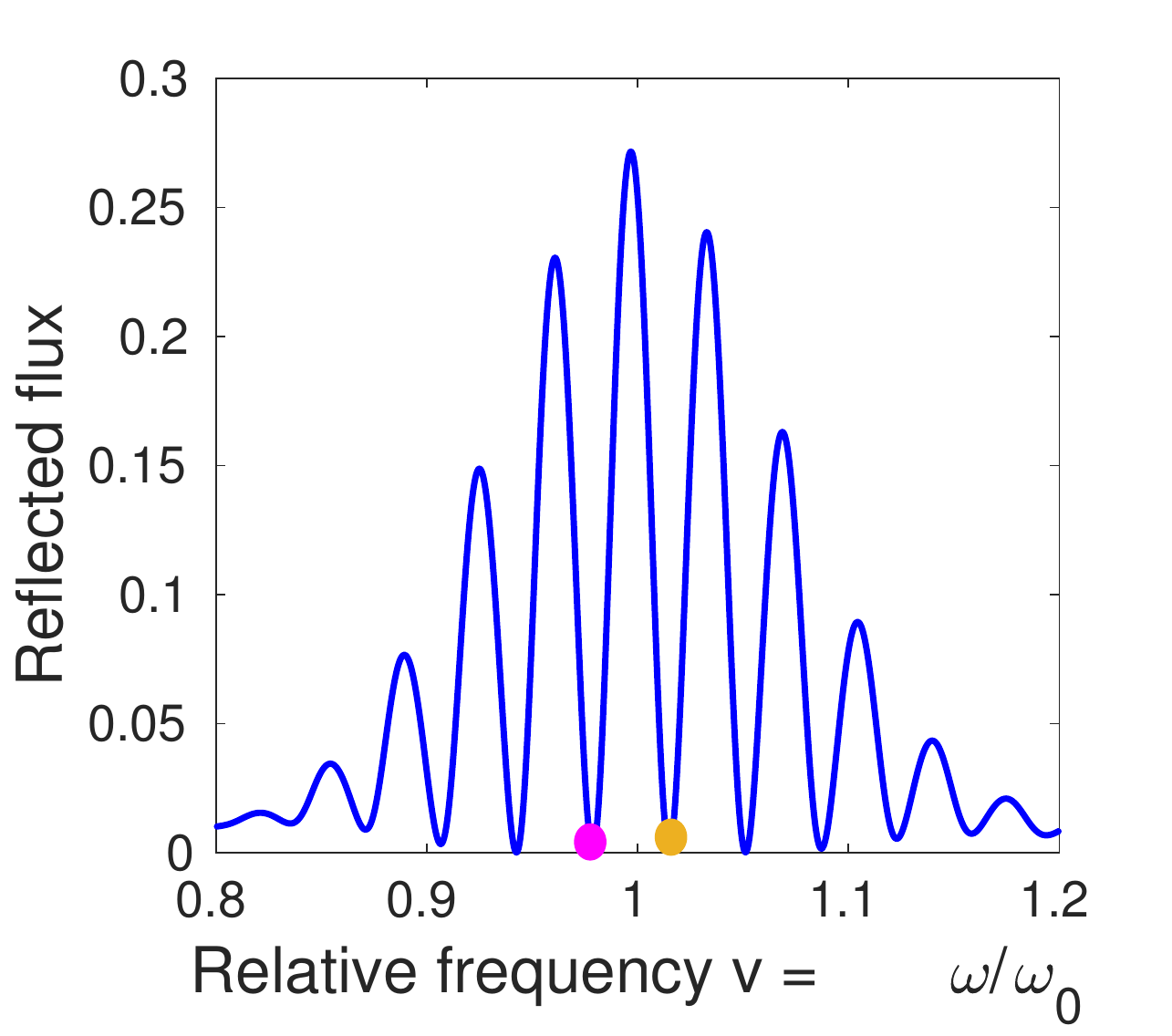}}
{\includegraphics[width=.49\columnwidth]{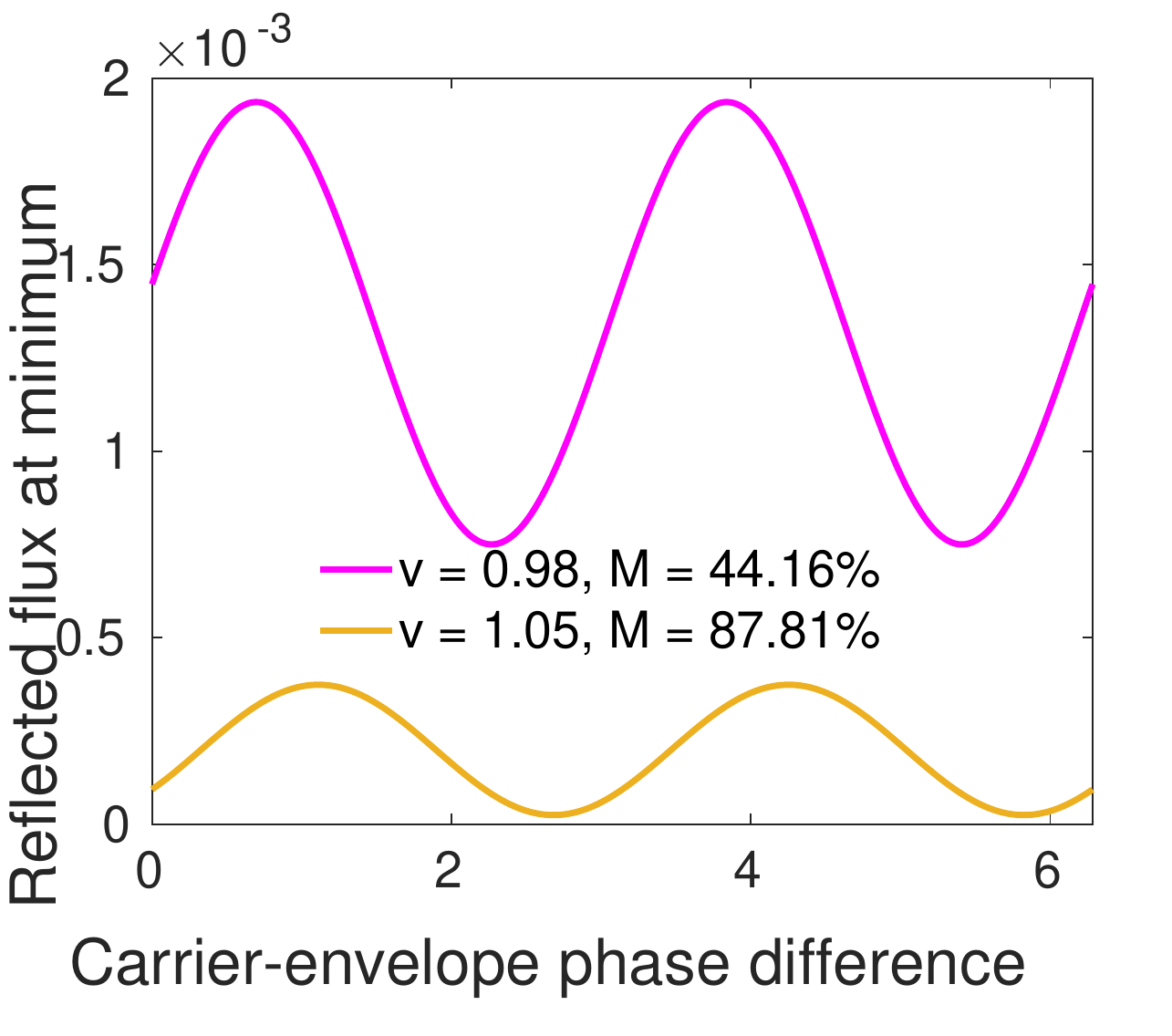}}
\caption{Dependence of the reflected flux on the CE phase difference at the minimum $v=0.98$ and $v=1.05,$ respectively. The refraction indices are $n_1=1$ $n_3=n_5=1.5,$ $\theta_1=\pi/8$ and 
the thickness of the two layers is $l_2=l_4=0.6$ nm.}\label{fig:CEP}
\end{figure}
%%%%%%%%%%%%%%%%%%%%%%
% In the
% last example, plotted in Figure~\ref{fig:CEP_different}, we only set thinner layers $l_2=l_4=1.5$ nm, whilst all other parameters are left unchanged.
% \begin{figure}
%  \centering
% {\includegraphics[width=.49\columnwidth]{reflected_different.pdf}} 
% {\includegraphics[width=.49\columnwidth]{CEP_diffrent.pdf}}
% \caption{Dependence of the reflected flux on the CE phase difference at the minimum $v=0.93$ and  $v=0.97,$ respectively. The refraction indices are $n_1=1$ $n_3=n_5=1.5.,$ $\theta_1=\pi/6$ and 
% the thickness of the layers is $l_2=l_4=1.5$ nm. }\label{fig:CEP_different}
% \end{figure}
\section{Concluding remarks}
In this paper we have derived, from first principles, the coupled system of equations describing the scattering of plane electromagnetic radiation fields on two parallel current sheets, which are embedded in three dielectric media. In this description the radiation field may represent ultrashort light pulses of arbitrary temporal shape and intensity, within the limit of the non-relativistic description of the local electron motions. This formalism yields a closed coupled set of two delay differential equations and a recurrence relation for the electronic velocities in the layers and for the reflected and transmitted field components, respectively. 
An exact analytic solution of this model is presented based on the Laplace transformation of the unknown time-dependent functions, without  any restriction on the physical parameters.  The eigenfrequencies of this dynamical system have been analyzed in detail. The main tool used in this analysis is the theory of singularly 
perturbed systems. Several numerical illustrations for the transmission and reflection properties of the two-layer system have been given and these have shown the temporal behavior of the outgoing fields in the case of few-cycle ultrashort incident pulses. The sensitivity of the resonant structure of the reflection (and transmission) coefficients against the carrier-envelope phase difference of the incoming pulse  provides a new way of measuring this  CE phase and  thus may be of immediate practical importance.

% The model system has many potentials for further investigations, such as how the reflected 
% flux of a few-cycle laser pulse depends on the carrier-envelope phase difference. 

The analysis can be extended to the problem of several layers resulting in a larger system with more delays. 

This paper describes the propagation of p-polarized transverse magnetic (TM) waves but the analysis is analogous in the case of an s-polarized incoming transverse electric (TE) waves.

\section*{Acknowledgments}
The work of M. Polner was supported by the J\'anos Bolyai Research Scholarship of the Hungarian Academy of Sciences and by the UNKP-18-4 New National Excellence Program of the Ministry of Human Capacities. The ELI-ALPS project (GINOP-2.3.6-15-2015-00001) is supported by the European
Union and co-financed by the European Regional Development Fund. A. V\"or\"os-Kiss was supported by the UNKP-18-3 
New National Excellence Program of the Ministry of Human Capacities.
%----------------------------------------------------------------------------------------
%	PARAMETERS AND FUNCTIONS
%----------------------------------------------------------------------------------------
%\begin{appendices}
\appendix
%\section{The physical model setup}\label{s:physical model setup}

%%%%%%%%%%%%%%%%%%%%%%%%%%%%%%%
\section{Dimensionless form of the equations}\label{s:dimensionless}
In this section, the equations \eqref{eq:282930} are made dimensionless. 
Denoting the velocities of the electrons by $v_j(t')=\dot\delta_{y_j}(t'),\ j=2,4$,  the second order system 
\eqref{eq:29}--\eqref{eq:30} is equivalent with a first order system. Introduce the dimensionless 
variables, denoted by a star
\begin{align}
t^*&=\frac{t'}{T},\quad v_i^*=\frac{v_i}{c}\quad
%\quad  \delta_{y_j}^*=\frac{\delta_{y_j}}{\lambda}, \quad m^*=\frac{m}{m_r},\quad \Gamma_j^*=\Gamma_j T,\\
F^*=\frac{F}{F_0},\quad f_3^*=\frac{f_3}{F_0},\quad\Delta t^*_j=\frac{\Delta t_j}{T}=\frac{n_j h\cos \theta_j}{\lambda_0}, 
\end{align}
for $j=3,5$, where $T$ denotes the optical period and $F_0$ is the reference field strength. 
Inserting them into \eqref{eq:29}--\eqref{eq:30}, results in
\begin{subequations}\label{eq:vi_dimless}
\begin{align}
\frac{c}{T}\dot v^*_{2}(t^*) 
&= \frac{2c_1 c_3}{c_1+c_3}\left[
\frac{e}{m}F_0 F^*(t^*) - r_2 \omega_0 c v^*_{2}(t^*)
+\frac{e}{m}F_0 f_3^*(t^*)
\right],
\\[5pt]
\frac{c}{T}v^*_{4}(t^*) 
&=  \frac{2c_1 c_3}{c_1+c_3}
\left[
\frac{e}{m}F_0 F^*(t^*-\Delta t^*_3) - r_2 \omega_0c v^*_{2}(t^*-\Delta t^*_3)
\right]\nonumber
\\[5pt]
&+ \frac{c_1-c_3}{c_1+c_3} c_3 \frac{e}{m}
F_0f_3^*(t^*-\Delta t_3^*)
+ c_3 F_0\frac{e}{m} f_3^*(t^*+\Delta t_3^*).
\end{align}
\end{subequations}
Simplifying and substituting the dimensionless vector potential (intensity parameter) $a_0=(e F_0)/(mc \omega_0)$,  
$\Gamma_i=r_i\omega_0$,  with
\begin{equation}
r_i=\left(\frac{\omega_{p_i}}{\omega_0}\right)^2\frac{\pi l_i}{\lambda_0},\quad i=2,4
\end{equation}
and $\omega_0=2\pi/T$ into \eqref{eq:vi_dimless}, we obtain the dimensionless form of the system \eqref{eq:29}--\eqref{eq:30}
\begin{subequations}\label{eq:36_5}
\begin{align}
\dot v^*_{2}(t^*)
&= \frac{2c_1 c_3}{c_1+c_3}\left[
2\pi  a_0 F^*(t^*) - 2\pi r_2 v^*_{2}(t^*)
+ 2\pi a_0 f^*_{3}(t^*)
\right],
\\[5pt]
\dot v^*_{4}(t^*)
&= \frac{2c_1 c_3}{c_1+c_3}
\left[
2\pi a_0 F^*(t^*-\Delta t^*_3) - 2\pi r_2 v^*_{2}(t^*-\Delta t^*_3)
\right]\nonumber
\\[5pt]
&+ \frac{c_1-c_3}{c_1+c_3} c_3 2\pi a_0 f_3^*(t^*-\Delta t_3^*)
+ c_3 2\pi a_0 f_3^*(t^*+\Delta t_3^*).
\end{align}
\end{subequations}
Similarly, the recurrent equation can be nondimensionalized to obtain
\begin{align}\label{eq:f3_dimless}
f_3^*(t^*)
&= \frac{c_5-c_3}{c_5+c_3}
\cdot \frac{c_1-c_3}{c_1+c_3} f_3^*(t^*-2\Delta t_3^*)
\nonumber
\\[5pt]
&+ \frac{c_5-c_3}{c_5+c_3}
\cdot \frac{2c_1}{c_1+c_3}
\left[ F^*(t^*-2\Delta t_3^*)
- \frac{r_2}{a_0} v_2^*(t^*-2\Delta t_3^*)
\right]
\nonumber
\\[5pt]
&- \frac{2c_5}{c_5+c_3} \cdot \frac{r_4}{a_0} v_4^*(t^*-\Delta t_3^*).
\end{align}

We can conclude that the dimensional \eqref{eq:282930} and dimensionless \eqref{eq:36_5}--\eqref{eq:f3_dimless} systems have the same functional form. 
Therefore, the stars can be dropped from the non-dimensional equations. 
\allowdisplaybreaks
\section{Functions involved in $\Delta^{-1}(\epsilon,s)$}\label{app:parameters}
The functions $g(\epsilon,s)$ and the elements of the matrix function $\text{adj }\Delta(\epsilon,s)$ in \eqref{H_eps} are, respectively
\begin{align*}
g(\epsilon,s)=&\frac{\left(sa_2-a_1r_2(a_1-a_2)\right)(1-a_5)}{a_1-a_2}+(1+\epsilon s)(s+a_1r_2)e^{2\tau s}\\
&+(1+\epsilon a_1 r_2+\epsilon s)a_5r_4(a_1-a_2)e^{2\tau s}+a_2a_5r_4+a_1r_2r_4a_5(a_1-a_2)\frac{e^{2\tau s}-1}{s},
\end{align*}
and 
\begin{align*}
\text{adj }\Delta(\epsilon,s)_{11}=&\epsilon s^2+\left( 1+\frac{a_2(1-a_5)}{a_1-a_2}e^{-2\tau s}+(a_1-a_2)\epsilon a_5r_4\right)s\\
&+a_5r_4(a_1-a_2+a_2e^{-2\tau s} ),\\
\text{adj }\Delta(\epsilon,s)_{12}=&-a_1 a_5r_4 e^{-\tau s},\\
\text{adj }\Delta(\epsilon,s)_{21}=&-(1+\epsilon s)a_1 a_5r_2 e^{-\tau s},\\
\text{adj }\Delta(\epsilon,s)_{22}=&\epsilon s^2+\left( 1+\frac{a_2(1-a_5)}{a_1-a_2}e^{-2\tau s}+a_1r_2\epsilon \right)s\\
&+a_1r_2(1+(a_5-1)e^{-2\tau s} ),\\
\text{adj }\Delta(\epsilon,s)_{13}=& a_1a_0 s+a_1a_0a_5r_4(a_1-a_2),\\
\text{adj }\Delta(\epsilon,s)_{23}=&s a_0a_2 a_5 e^{-\tau s}-a_1a_0a_5r_2(a_1-a_2)e^{-\tau s},\\
\text{adj }\Delta(\epsilon,s)_{33}=& s^2+\left( a_1r_2+a_5r_4(a_1-a_2)\right) s+a_1r_2a_5r_4(a_1-a_2),\\
\text{adj }\Delta(\epsilon,s)_{31}=& \frac{a_1r_2e^{-2\tau s}}{a_0(a_1-a_2)}\left((1-a_5)s+a_5 r_4(a_1-a_2)\right),\\
\text{adj }\Delta(\epsilon,s)_{32}=& -\frac{a_5 r_4}{a_0} e^{-\tau s}\left(s+r_2 a_1\right).
\end{align*}
\section{Proof of Lemma \ref{l:spectrum}}\label{app:proof_lemma}
\begin{proof}
It is straightforward to calculate that $\det\Delta(\lambda)=0$ is equivalent to 
\begin{align}\label{detz0}
(\lambda +a_1 r_2)\left(\lambda +(a_1-a_2) r_4\right)=r_4 e^{-2\tau \lambda }\left(a_1 r_2(a_1-a_2)-a_2 \lambda \right).
\end{align}
Observe that $\lambda =0$ is a simple root of the characteristic equation \eqref{detz0}. 
Let $\lambda =x+iy$ in \eqref{detz0} and take the absolute value square of both sides to obtain
\begin{align}\label{roots} 
\left[\left(x+\frac{1}{2}\left(2\pi c_3r_4+a_1 r_2\right)\right)^2-\frac{1}{4}\left(2\pi c_3r_4+a_1 r_2\right)^2-y^2+2\pi c_3a_1r_2r_4\right]^2\nonumber\\
+4y^2\left(x+\frac{1}{2}\left(2\pi c_3r_4+a_1 r_2\right)\right)^2
=r_4^2e^{-4\tau x}\left(\left(2\pi c_3a_1r_2-a_2 x\right)^2+a_2^2y^2\right),
\end{align}
where $a_1-a_2=2\pi c_3$ is used. For any $y\in\mathbb{R}$, denote the left and the right hand sides of 
\eqref{roots} by $l(x)$ and $r(x)$, respectively. 
The aim is to show that for any $y\in\mathbb{R}$, all solutions of $l(x)=r(x)$ are non-positive. 

By straightforward calculations,
\begin{align*}
l(0)-r(0)&=\left(-y^2+2\pi c_3a_1r_2r_4\right)^2+y^2\left(2\pi c_3r_4+a_1 r_2\right)^2-
r_4^2\left(\left(2\pi c_3a_1r_2\right)^2+a_2^2y^2\right)\\[5pt]
&=y^4+y^2 r_4^2\left((2\pi c_3)^2-a_2^2\right)+y^2a_1^2r_2^2\\[5pt]
&=y^2\left(y^2+r_4^2\frac{4c_1c_3}{(c_1+c_3)^2}+a_1^2r_2^2\right)>0\quad \forall y\not=0.
\end{align*}
This shows also that $x=0$ is a solution of $l(x)=r(x)$ only if $y=0$. Hence, there are no purely imaginary roots of 
$\det\Delta(\lambda)=0$. 

The function $l(x)$ is a fourth order polynomial in $x$ with derivative
\[
l'(x)=4\left(x+\frac{1}{2}(2\pi c_3r_4+a_1r_2)\right)\left[ \left(x+\frac{1}{2}(2\pi c_3r_4+a_1r_2)\right)^2+y^2
-\frac{1}{4}\left(2\pi c_3 r_4-a_1r_2\right)^2\right],
\]
which has roots at
\[
x_0=-\frac{1}{2}\left(a_1r_2+2\pi c_3r_4\right),\ x_\pm=-\frac{1}{2}(a_1r_2+2\pi c_3r_4)\pm 
\frac{1}{2}\sqrt{(a_1r_2-2\pi c_3r_4)^2-4y^2}.
\]
If $y$ is such that $x_\pm$ are real, then $x_-\leq x_0\leq x_+<0$ and at $x_+$ the function $l$ has a 
local minimum. When $y$ is large enough, such that $x_\pm$ are complex, then $l'(x)=0$ has only one solution $x_0<0$, which is 
a global minimum point for $l$.  In both cases $l(x)$ is strictly increasing for $x>0$ and has a positive 
$l(0)$ intercept with the $y$-axis. 

To analyze the monotonicity of the right hand side function $r(x)$, take 
\[
r'(x)=-2r_4^2e^{-4\tau x}h(x),
\]
where
\[
h(x)=2\tau{{ a_2}}^{2}{x}^{2}
-\left(8\tau{ a_1}{ r_2}\pi { c_3}+{{ a_2}}\right)a_2x+2\tau{{ a_2}}^{2}{y}^{2}+2{ a_2}{ a_1}{ 
r_2}\pi { c_3}+8\tau a_1^2{{ r_2}}^{2}{\pi }^{2}c_3^2.
\]
The roots of $h$ are
\[
x_{1,2}={\frac {{ a_2}+8\tau{ a_1}{ r_2}\pi { c_3}\pm
\sqrt {{{ a_2}}^{2}-16{\tau}^{2}{{ a_2}}^{2}{y}^{2}}}{4\tau{
 a_2}}},
\]
which are real for those values of $y$ for which the discriminant is non-negative. If there are no real roots, then 
$h(x)>0$, hence $r'(x)<0$,  and therefore $r(x)$ is strictly decreasing. In this case, $r$ has a positive $r(0)$ 
intercept with the $y$-axis, and as $l(0)>r(0)$, it follows that $l(x)=r(x)$ can only be for $x<0$.  

Consider the case when $h$ has two distinct real roots $x_1<x_2$. If $x_1<x_2<0$, then $h(x)>0$ for $x>x_2$, hence  $r'(x)<0$ for $x>0$ and the same argument as before implies that $l(x)\not=r(x)$ for $x>0$. 

Assume that the largest root of $h$ is positive, i.e., $x_2>0$. Since $x_2$ is a local maximum of the right hand side function $r$, it is sufficient to show that $l(x_2)>r(x_2)$. To prove this assertion, we distinguish two 
cases. When $a_2>0$, then $0<x_1<x_2$, and when $a_2<0$, then $x_1<0<x_2$. In both cases, $l(x_2)-r(x_2)$ is considered as a function of $r_4$ and it is observed that this difference will have positive values for all $r_4>0$. This is a lengthy but straightforward calculation and thus is  omitted. 
\end{proof}

In Figure~\ref{char_roots_special_cases} and~\ref{f:char_roots} a few characteristic roots, located close to the imaginary axis, are plotted. When $n_1=n_3=n_5=1$,  $h=10\lambda_0$ and the angle of incidence is $\theta_1=\pi/3$, then the delay is $\tau=5$, see Figure~\ref{char_roots_special_cases} (left). 
The size of the delay can be increased in several ways. The indices of refraction can be changed to $1=n_1\not=n_3=n_5=1.5$, which lead to $\tau=12.25$, Figure~\ref{char_roots_special_cases} (right). An other way to increase the delay is by  increasing the distance $h$ between the layers, so that $\tau=12.25$, as before, see Figure~\ref{f:char_roots} (left).
We can observe that, as the delay increases, more eigenvalues get closer to the imaginary axis.

In Figure~\ref{f:char_roots} (right), some characteristic roots are plotted for the perturbed DDE system, with 
$\epsilon=10^{-6}$. The refractive indices are set as $n_1=1$, $n_3=1.1$ and $n_5=1.5$, $h=10\lambda_0$, which results in $\tau=6.78$.
\begin{figure}  
\centering 
    {\includegraphics[width=.49\columnwidth]{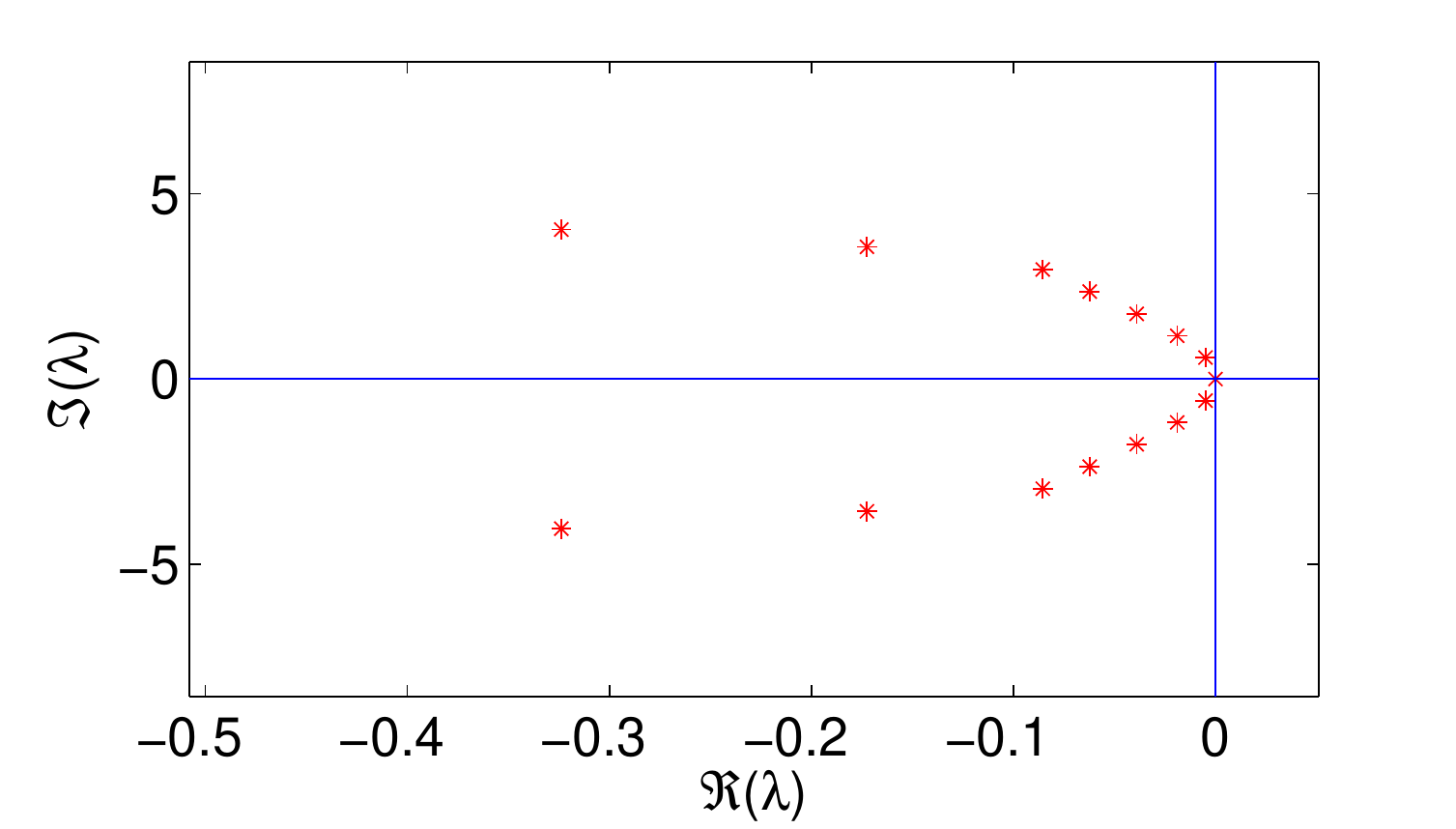}}  
   {\includegraphics[width=.49\columnwidth]{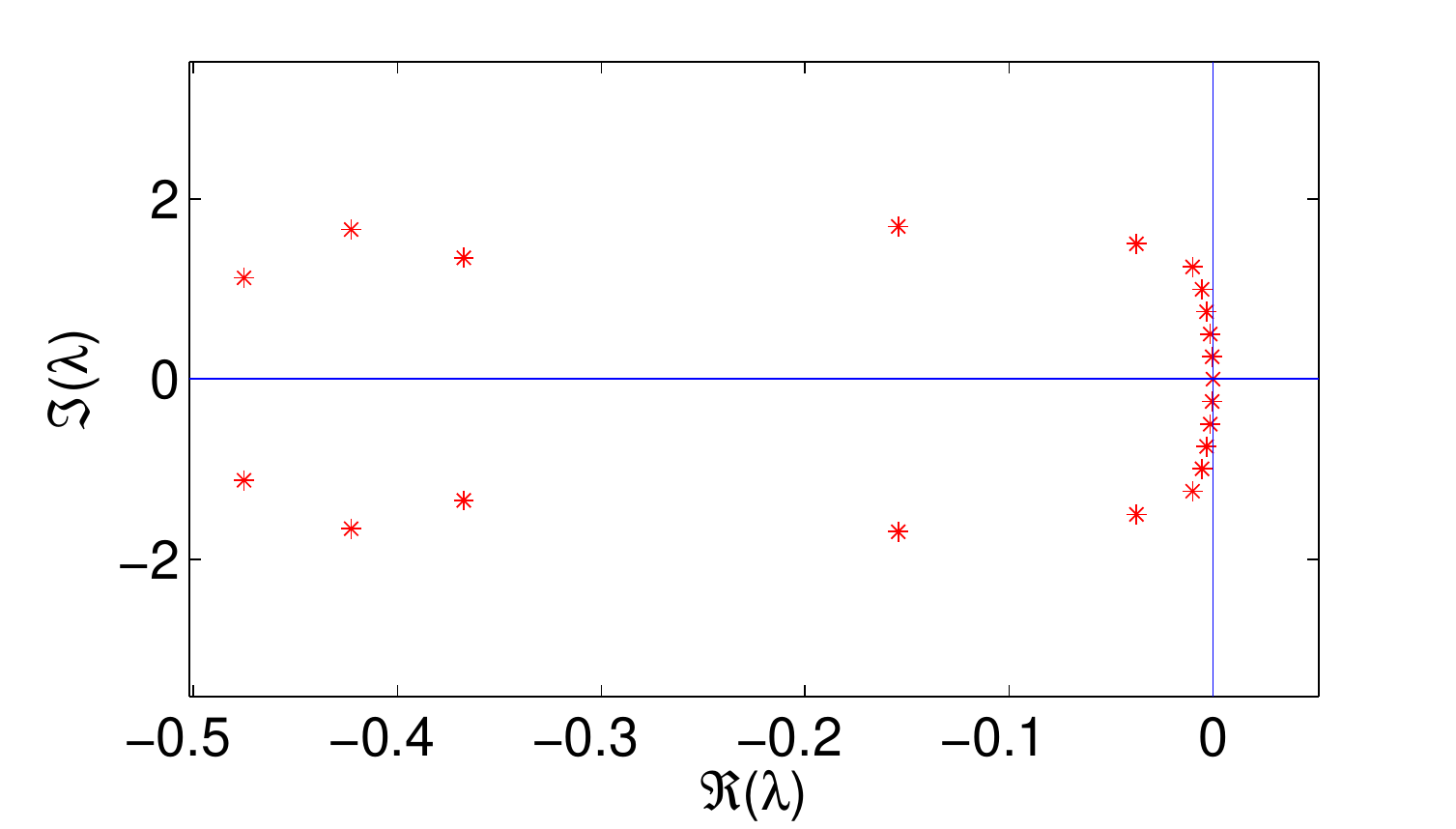} }
   {\caption{A few eigenvalues in the complex plane. When $n_1=n_3=n_5=1$,  then $\tau=5$ (left) and when $n_1=1$ and $n_3=n_5=1.5$,  
then $\tau=12.25$ (right).}\label{char_roots_special_cases}}
  \end{figure}
%%%%%%%%%%%%%%%%%%%
\begin{figure}  
\centering 
    {\includegraphics[width=.49\columnwidth]{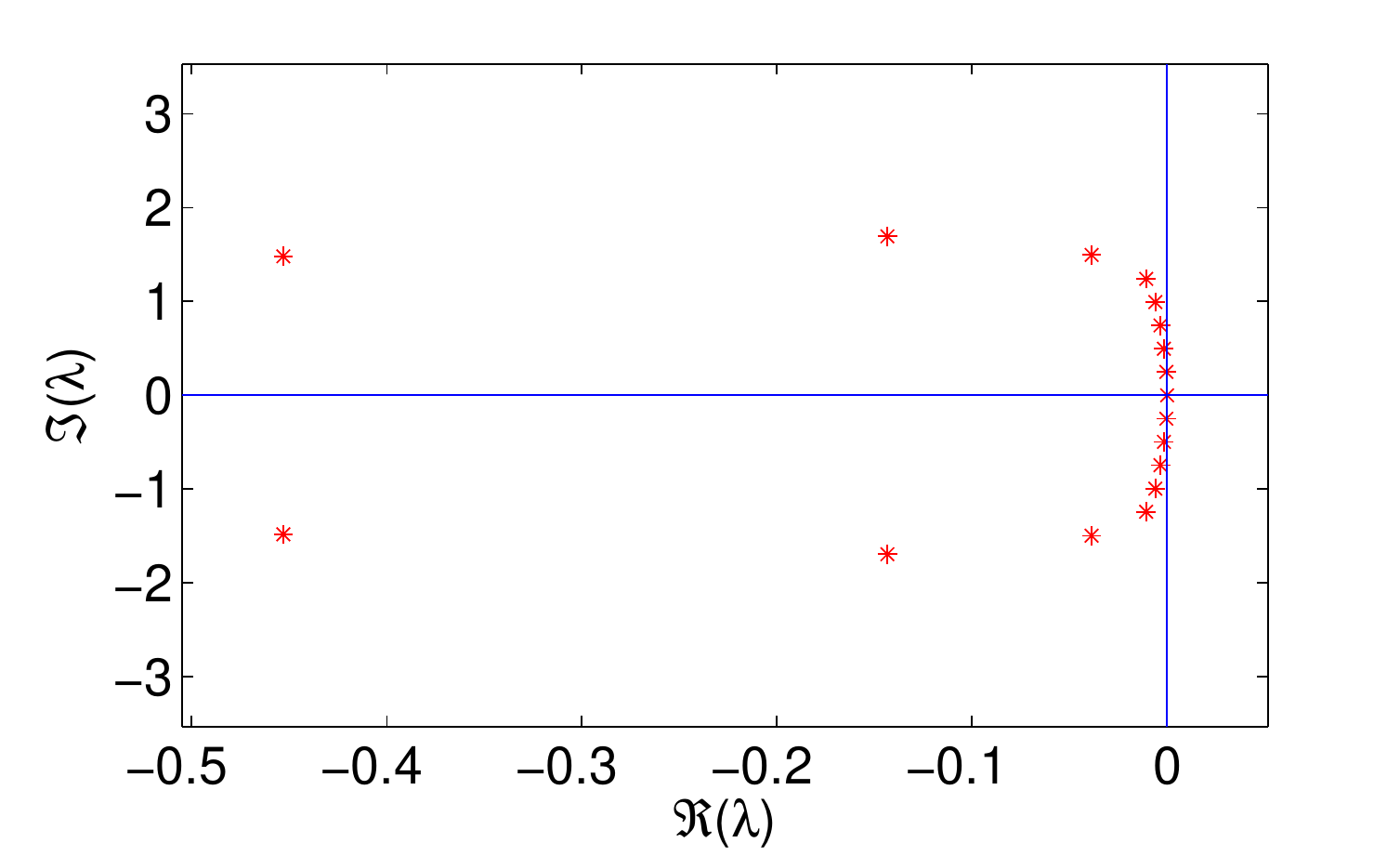}}  
   {\includegraphics[width=.49\columnwidth]{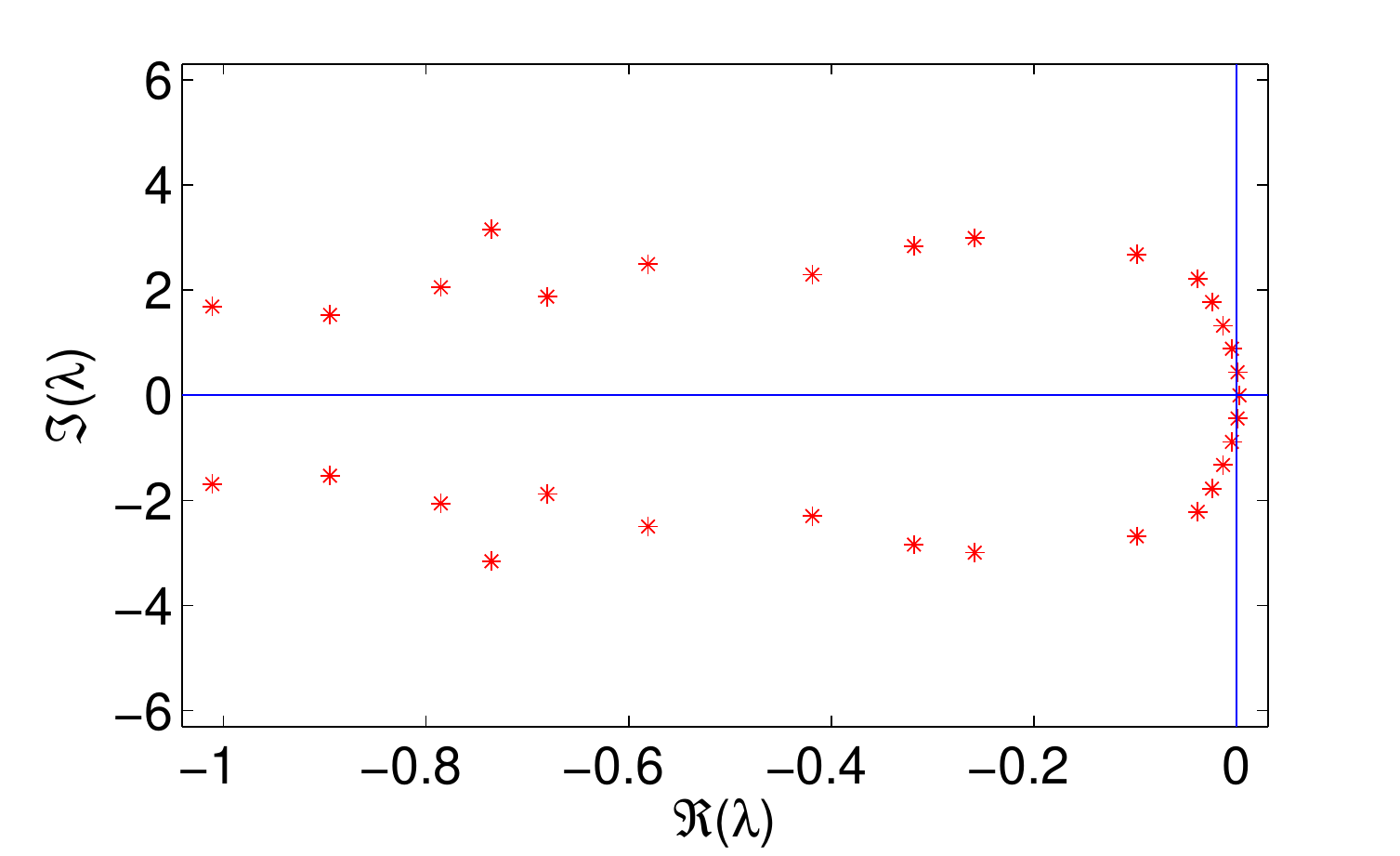} }
   {\caption{A few eigenvalues in the complex plane. When $n_1=n_3=n_5=1$ and $h=24.5 \lambda_0$,  then $\tau=12.25$ (left) and 
when $n_1=1$,  $n_3=1.1$,  $n_5=1.5$,  $h=10\lambda_0$,  then $\tau=6.78$. Here $\epsilon=10^{-6}$ (right).}\label{f:char_roots}}
  \end{figure}
%----------------------------------------------------------------------------------------
%	BIBLIOGRAPHY
%----------------------------------------------------------------------------------------

%\renewcommand{\refname}{\spacedlowsmallcaps{References}} % For modifying the bibliography heading
%\section*{References}
\bibliographystyle{unsrt}%{iopart-num}%{plain}%{elsarticle-num}

\bibliography{sample} % The file containing the bibliography

\begin{thebibliography}{1}

\bibitem{Hale}
Jack~K. Hale.
\newblock {\em Theory of functional differential equations}.
\newblock Springer-Verlag, New York, 1977.

\bibitem{DiekmannRFDE}
Odo Diekmann, Stephan~A. van Gils, S.M.~Verduyn Lunel, and Hanns-Otto Walther.
\newblock {\em Delay equations: functional-, complex-, and nonlinear analysis}.
\newblock Springer, Berlin, 1995.

\bibitem{Sommerfeld}
A.~Sommerfeld.
\newblock {\em Ann. d. Physik}, 46:721--747, 1915.

\bibitem{Varro2004}
S.~Varr\'o.
\newblock Scattering of a few-cycle laser pulse on a thin metal layer: the
  effect of the carrier-envelope phase difference.
\newblock {\em Laser Phys. Lett.}, 1(1), 2004.

\bibitem{Varro2007}
S.~Varr\'o.
\newblock Linear and nonlinear absolute phase effects in interactions of
  ultrashort laser pulses with a metal nano-layer or with a thin plasma layer.
\newblock {\em Laser and Particle Beams}, 25, 2007.

\bibitem{Varro2007carrier}
S.~Varr\'o.
\newblock Scattering of a few-cycle laser pulse by a plasma layer: the role of
  the carrier-envelope phase difference at relativistic intensities.
\newblock {\em Laser Phys. Lett.}, 4(3), 2007.

\bibitem{cooke65}
Kenneth~L. Cooke.
\newblock The condition of regular degeneration for singularly perturbed linear
  differential-difference equations.
\newblock {\em Journal of Differential Equations}, 1:39--94, 1965.

\bibitem{cooke66}
Kenneth~L. Cooke and Kenneth~R. Mayer.
\newblock The condition of regular degeneration for singularly perturbed
  systems of linear differential-difference equations.
\newblock {\em Journal of Mathematical Analysis and Applications}, 4:83--106,
  1966.

\bibitem{bellman}
R.~Bellman and K.~L. Cooke.
\newblock {\em Differential-Difference Equations}.
\newblock Academic Press, London, 1963.

\end{thebibliography}

%----------------------------------------------------------------------------------------

\end{document}